\newcommand{\argmax}{\operatornamewithlimits{argmax}}
\tikzstyle{mybox} = [draw=black, fill=white,  thick,
\tikzstyle{mybox} = [draw=black, fill=white,  thick,
\newtheorem{thm}{Theorem}
\newtheorem{lemma}{Lemma}
\newtheorem{cor}{Corollary}
\newtheorem{prop}{Proposition}
\theoremstyle{definition}
\newtheorem{definition}{Definition}
\newtheorem{remark}{Remark}
\newtheorem{example}{Example}
\begin{document}


\title{Robust Vertex Enumeration for Convex Hulls in High Dimensions\footnote{A conference version of the article will appear in the Proceedings of AISTATS 2018}}
\date{}
\author{Pranjal Awasthi \\ {\tt pranjal.awasthi@cs.rutgers.edu} \and  Bahman Kalantari \\ {\tt kalantar@cs.rutgers.edu} \and Yikai Zhang \\ {\tt yz422@cs.rutgers.edu}
}

\maketitle

\begin{abstract}
The problem of computing the vertices of the convex hull of a given finite set of  points in the Euclidean space is a classic and fundamental problem, studied in the context of computational geometry, linear and convex programming,  machine learning and more.  In this article we present {\it All Vertex Triangle Algorithm} (AVTA),  a robust and efficient algorithm that for a given input set $S= \{v_i \in \mathbb{R} ^m: i=1, \dots, n\}$  computes the subset $\overline S$ of all $K$ vertices of the convex hull of $S$. If desired AVTA computes an approximation to $\overline S$ and it can also work if the input data is a perturbation of $S$.  Let $R$ be the diameter of $S$. We say $conv(S)$, the convex hull of $S$, is $\Gamma_*$-robust if the minimum of the distances from each vertex to the convex hull of the remaining vertices is $\Gamma_*$.  Given $\gamma \leq \gamma_* = \Gamma_*/R$, the number of operations of $AVTA$ to compute $\overline S$ is $O(nK(m+  \gamma^{-2}))$. Even without the knowledge of $\gamma_*$, but when $K$ is known, using binary search, the complexity of AVTA is $O(nK(m+  \gamma_*^{-2})) \log(\gamma_*^{-1})$.   More generally, without the knowledge of $\gamma_*$ or $K$, given any $t \in (0,1)$, AVTA computes a subset $\overline S^t$ of $\overline S$ of cardinality $K^{(t)}$ in $O(n K^{(t)}(m+  t^{-2}))$ operations so that the Euclidean distance between any point $p \in conv(S)$ to $conv(\overline S^t)$ is  at most  $t R$.

Next we consider AVTA under perturbation since in practice the input maybe a perturbation of $S$, $S_\varepsilon =\{v^{\varepsilon}_i: i=1, \dots, n\}$, where $\Vert v_i -  v^{\varepsilon}_i \Vert \leq \varepsilon R$.  The set of perturbed vertices, $\overline S_\varepsilon$ may differ drastically from the set of vertices of $conv(S_\varepsilon)$. Let $\Sigma_*$ be the minimum of distances of vertices of $conv(S)$ to the convex hull of the remaining point of $S$.
Under the assumption that $\sigma_* = \Sigma_*/R \geq 4 \varepsilon$,  given $\sigma$ satisfying $4 \varepsilon \leq \sigma \leq \sigma_*$, AVTA computes $\overline S_\varepsilon$  in $O(nK_\varepsilon(m+  \sigma^{-2}))$,  where $K \leq K_\varepsilon \leq n$.
 When only $K$ is known, but assuming $ 4\varepsilon \leq \sigma_*$, using binary search the complexity of AVTA to compute $\overline S_\varepsilon$ is $O(nK(m+  \sigma_*^{-2})) \log(\sigma_*^{-1})$. More generally, given any $t \in (0,1)$, AVTA computes  a subset $\overline S_\varepsilon^t$ of $\overline S_\varepsilon$ of cardinality $K^{(t)}_\varepsilon$ in $O(n K^{(t)}_\varepsilon (m+ t^{-2})$  so that
the distance between any point $p \in conv(S)$ to $conv(\overline S_\varepsilon^t)$ is at most  $(t+\varepsilon) R$.

We also consider  the application of AVTA in the recovery of vertices through the  projection of $S$ or $S_\varepsilon$ under  a Johnson-Lindenstrauss randomized linear projection $L : \mathbb{R}^{m} \rightarrow  \mathbb{R}^{m'}$. Denoting $U=L(S)$  and $U_\varepsilon=L(S_\varepsilon)$,
by relating the robustness parameters of $conv(U)$ and $conv(U_\varepsilon)$ to those of $conv(S)$ and $conv(S_\varepsilon)$, we derive analogous complexity bounds for probabilistic computation of the vertex set of $conv(U)$ or those of $conv(U_\varepsilon)$, or an approximation to them. Finally, we apply AVTA to design new practical algorithms for two popular machine learning problems: topic modeling and non-negative matrix factorization. For topic models, our new algorithm leads to significantly better reconstruction of the topic-word matrix than state of the art approaches ~\citet{arora2013practical, bansal2014provable}. Additionally, we provide a robust analysis of AVTA and empirically demonstrate that it can handle larger amounts of noise than existing methods. For non-negative matrix we show that AVTA is competitive with existing methods that are specialized for this task~\citet{arora2012computing}.

\end{abstract}

{\bf Keywords:} Convex Hull Membership,  Approximation Algorithms,  Machine Learning,  Linear Programming, Random Projections


\section{Introduction}
In this article we present {\it All Vertex Triangle Algorithm} (AVTA),  a robust and efficient algorithm that for given input set $S= \{v_1, \dots, v_n\} \subset \mathbb{R} ^m$, computes the subset $\overline S=\{\overline v_1, \dots, \overline v_K\}$ of all vertices of $conv(S)$, the convex hull of $S$. More generally, given any $t \in (0,1)$, AVTA computes a subset $\overline S^t$ of $\overline S$ so that
the distance between any point $p \in conv(S)$ to $conv(\overline S^t)$ is to within a distance of $t R$. AVTA is also applicable if the input date is a perturbation of $S$.

AVTA, a {\it fully polynomial-time approximation scheme}, builds upon the  {\it Triangle Algorithm}   \citet{kalantari2015characterization}, designed to solve the {\it the convex hull membership problem}.  Specifically, given $S$, the Triangle Algorithm tests if a distinguished point $p$ lies in the $conv(S)$,  either by computing a point  $p_\varepsilon \in conv(S)$ to within a prescribed distance to $p$, or a hyperplane that separates  $p$ from $conv(S)$.  Before describing AVTA and its applications we wish to give an overview of the related problems and research, as well as their history, significance and connections to our work.

The convex hull membership problem is a basic problem in computational geometry and a very special case of the {\it convex hull problem}, see  Goodman and  ~\citet{toth2004handbook}. Besides being a fundamental problem in computational geometry, it is a basic problem in linear programming (LP). In fact LP over the integers can be reduced to a convex hull membership problem.  Furthermore, the two most famous polynomial-time LP algorithms, the ellipsoid algorithm of  ~\citet{khachiyan1980polynomial} and the projective algorithm of ~\citet{karmarkar1984new}, are in fact explicitly or implicitly designed to  solve the convex hull membership problem when $p=0$, see ~\citet{jin2006procedure}. Furthermore, using an approach suggested by  Chv\'atal,  in ~\citet{jin2006procedure}it can be shown that there is a direct connection between a general LP feasibility and this homogeneous case of the convex hull membership problem.

An important problem  in computational geometry and machine learning is the {\it irredundancy problem}, the problem of computing all the vertices of  $conv(S)$, see ~\citet{toth2004handbook}.  Clearly, any algorithm for LP feasibility can be used to solve the irredundancy problem by solving a sequence of $O(n)$ convex hull membership problems.  For results that reduce the number of linear programming problems,  see e.g.  ~\citet{clarkson1994more} and  ~\citet{chan1996output}.  Some applications require the description of  $conv(S)$ in terms of its vertices, facets and adjacencies,  see  ~\citet{chazelle1993optimal}.  The complexity of  many exact algorithms for irredundancy is exponential in terms of the dimension of the points, thus only  practical in very low dimensions.   On the other hand, the convex hull membership problem by itself has been studied in the context of large scale applications where simplex method or polynomial time algorithms are too expensive to run.  Thus approximation schemes have been studied for the problem.

~\citet{blum2016sparse}  propose a bi-criterion algorithm based on \textit{Nearest Neighbot Oracle},  computing a subset of vertices $T$ satisfying two properties: i) the Hausdorff distance between $conv( T )$ and $conv(S)$ is bounded above by $(8 \varepsilon^{1/3} + \varepsilon) R$
(ii) $|T| =O(K_{opt}/\varepsilon^{2/3})$.
Since $ T \subset S$, this implies that
$\varepsilon = \Omega((K_{opt}/n)^{3/2})$.  The running time of the algorithm is
$$O\bigg (\frac{nK_{opt}}{\varepsilon^{2/3}} \bigg (m
+ \frac{K_{opt}}{\varepsilon^{8/3}} + \frac{K^2_{opt}}{\varepsilon^{4/3}} \bigg ) \bigg).$$
While there is a theoretical bound on the size of $T$ as a polynomial in $1/\varepsilon$,  it is in-efficient  since it uses the  \textit{Nearest Neighbot Oracle}. Indeed, in AVTA, the Triangle algorithm works as an approximate oracle which achives great improvement in efficiency.
Given that $\bar{S}$ is $\gamma$ robust and additionally  $\gamma$ is $\Omega(\varepsilon^{1/3})$, then we cannot use fewer than $|\bar{S}|$ vertices to give an $\varepsilon$ approximation. This argument shows that in ~\citet{blum2016sparse} $K_{opt}=|\bar{S}|$. In a general case where $\gamma $ is arbitrarily close to $0$, AVTA will find all vertices in  $O(nK_{\varepsilon}(m+\frac{1}{\varepsilon^2}))$ time. While we so far have no nontrivial bound on $K_{\epsilon}$, it is known that $K_{\varepsilon}\leq n$. In this case the complexity of AVTA is $O(n^2m+ n^2/\varepsilon^{2})$ and Greedy clustering requires at least $O(nm/\varepsilon^{2}+ n/\varepsilon^{10})$ to achieve the same accuracy. It could be concluded that there exists regimes that AVTA outperforms \textit{Greedy Clustering}. It is interesting to observe that AVTA could be used as a pre-processing algorithm for
Greedy Clustering. By our analysis, AVTA only detects vertices and will not omit any of them. In case $n >> K_{\epsilon}$, we can use AVTA to delete points inside the convex hull thus reduce the size of the problem for Greedy Clustering. In summary, the two algorithms coexist.

Not only is convex hull detection a fundamental problem in computational geometry, state of the art algorithms for many machine learning problems rely on being able to solve this problem efficiently. Consider for instance the
problem of non-negative matrix factorization~(NMF)~\citet{lee2001algorithms}. Here, given access to a data matrix $A$, we want to compute non-negative, low rank matrices $U$ and $V$ such that $A=UV$. Although in general this problem is intractable, recent results show that under
a natural {\em separability} assumption~\citet{donoho2003does} such a factorization can be computed efficiently~\citet{arora2012computing}. The key insight in these works is that under the separability
assumption, the rows of the matrix $V$ will appear among the rows of $A$. Furthermore, the rows of $V$ will be the {\em vertices} of the convex hull of rows of $A$. Hence, a fast algorithm for detecting the vertices will lead to a fast factorization algorithm as well.

A problem related to NMF is known as {\em topic modeling}~\citet{blei2012probabilistic}.  Here one is given access to a large corpus of documents, with each document represented as a long vector consisting of frequency in the document of every word in the vocabulary. This is known as the bag-of-words representation. Each document is assumed to represent a mixture of up to $K$ hidden topics. A popular generative model for such documents is the following: For every document $d$, a $K$ dimensional vector $\theta_d$ is drawn from a distribution over the simplex. Typically this distribution is the Dirichlet distribution. Then, for each word in the document, a topic is chosen according to $\theta_d$. Finally, given a chosen topic $i$, a word is output according to the topic distribution vector $\beta_i$. This is known as the Latent Dirichlet Allocation~(LDA) model~\citet{blei2003latent}. The parameters of this model consist of the topic-word matrix $\beta$ so that $\beta_i$ defines the distribution over words for topic $i$. Additionally, there are hyper parameters associated with the Dirichlet distributions generating the
topic distribution vector $\theta_d$. The topic modeling problem concerns learning the topic-word matrix $\beta$ and the parameters of the topic generating distribution. Similar to NMF, the problem is intractable in the worst case but can be efficiently solved
under {\em separability}~\citet{arora2012learning}. In this context, the separability assumption requires that for each topic $i$, there exists an {\em anchor word} that has a non-zero probability of occurring only under topic $i$. Separability is an assumption that is known to hold for real world documents~\citet{arora2012learning}. The key component towards learning
the model parameters is a fast algorithm for finding the anchor words.
The algorithm of ~\citet{arora2012learning, arora2013practical} uses the word-word covariance matrix and shows that under separability, the vertices of the convex hull of the rows of the matrix will correspond to the anchor words. Similarly, the work of~\citet{ding2013topic} shows that finding the vertices of the convex hull of the document-word matrix will also lead to detection of anchor words. Both approaches rely on the vertex detection subroutine. Furthermore, in the case of topic models, the documents are limited in size and this translates to the fact that one is given a perturbation of the set $S$. The goal is to use this perturbed set to approximate the original vertices $\overline S$. Hence in this application it is crucial that the approach to finding the vertices be robust to noise.

The convex hull membership problem can be formulated as the minimization of a convex quadratic function over the unit simplex. This particular convex program  finds  applications in statistics, approximation theory, and machine learning,  see e.g ~\citet{clarkson2010coresets} and ~\citet{zhang2003sequential} who consider the analysis of a greedy algorithm for  minimizing  smooth convex functions over the unit simplex.  The Frank-Wolfe algorithm ~\citet{frank1956algorithm} is a classic greedy algorithm for convex programming.   When the the convex hull of a set of points does not contain the origin, the problem of computing the point in the convex hull with least norm, known as {\it polytope distance} is also a problem of interest.  In some applications the  polytope distance refers to the distance  between two convex hulls, a fundamental problem in machine learning, known as SVM,  see e.g. ~\citet{burges1998tutorial}.  Gilbert's algorithm ~\citet{gilbert1966iterative} for the polytope distance problem is one of the earliest known algorithms.  ~\citet{gartner2009coresets} show Gilbert's algorithm coincides with Frank-Wolfe algorithm when applied to the minimization of a convex quadratic function over a  unit simplex. In this case the algorithm is known as {\it sparse greedy approximation}.  For many results regarding the applications of the minimization of a quadratic function over a simplex, see ~\citet{zhang2003sequential}, ~\citet{clarkson2010coresets} and ~\citet{gartner2009coresets}. ~\citet{clarkson2010coresets} analyzes the  Frank-Wolfe and its variations while studying the notion of  {\it coresets}. While the Triangle Algorithm has features that are very similar to those of Frank-Wolfe algorithm, there are other features and properties that make it an algorithm distinct from Frank-Wolfe or Gilbert's algorithm.   To describe these differences, consider the distance between $p$ and  $conv(S)$:
\begin{equation} \label{euclid}
\Delta= \min \bigg \{d(p',p) \equiv \Vert p'-p \Vert: \quad p' \in conv(S)\}=d(p_*,p) \bigg \}.
\end{equation}

Clearly, $p \not \in conv(S)$, if and only if $\Delta >0$.  The goal of the convex hull membership problems (equivalently an LP feasibility) is to test feasibility, i.e. if $p$ lies in $conv(S)$. Solving this does not require the computation of $\Delta$ when it is positive.  Thus the goal of solving the convex hull membership is different from that of computing this distance $\Delta$when positive.  When $p \in conv(S)$, the analysis of complexity of the Triangle Algorithm is essentially identical with ~\citet{clarkson2010coresets} analysis of the basic Frank-Wolfe algorithm.  G{\"a}rtner and Jaggi ~\citet{gartner2009coresets} on the other hand analyze the complexity of Gilbert's algorithm for the polytope distance problem, i.e. the approximation of $\Delta$, however under the assumption that $\Delta >0$.  ~\citet{gartner2009coresets} do not address the case when $\Delta =0$.

What  distinguishes the Triangle Algorithm from the Frank-Wolfe and Gilbert's algorithms is the {\it distance dualities} which gives more flexibility to the algorithm.  The algorithm we will analyze in this article, namely AVTA, is designed to generate all vertices of $conv(S)$. It makes repeated use of the distance dualities of the Triangle Algorithm, resulting in an over all efficient algorithm for computing the vertices of $conv(S)$, or very good approximation to these vertices, even under perturbation of the input set. Indeed AVTA is testimonial to the uniqueness of the Triangle Algorithm while itself is a nontrivial extension of the Triangle Algorithm. AVTA  finds many applications in computational geometry and machine learning. Some of these are demonstrated here theoretically and computationally.  We next describe AVTA in more detail.

To describes the complexities of AVTA we need to define some parameters.  We say $conv(S)$ is $\Gamma_*$-{\it robust}, if $\Gamma_*$ is the minimum of  the distances from each $\overline v_i \in \overline S$ to $conv(\overline S \setminus \{\overline v_i\})$. Set $R= \max \{d(v_i,v_j), v_i, v_i \in S\}$,
the diameter of $S$. AVTA works as follows.

(1) If a number  $0 < \gamma \leq \Gamma_*/R$ is known, the number of operations of $AVTA$ to  computes $\overline S$ is.
\begin{equation}O(nK(m+  \gamma^{-2})).\end{equation}

(2) If only $K$ is known, the number of operations of $AVTA$ to  compute $\overline S$ is
\begin{equation}O(nK(m+  \gamma_*^{-2}))\log \gamma_*^{-1}).\end{equation}

(3) More generally, given any $t \in (0,1)$, AVTA can compute a subset  $\overline S^t$ of $\overline S$ so that the distance of each point in $conv(S)$  to $conv(\overline S^t)$ is at most $tR$. The corresponding number of operations is
\begin{equation}O(nK^{(t)}(m+ t^{-2})), \quad K^{(t)}= |\overline S^t|.\end{equation}

In practice the input set may be not $S$ but a perturbation of it, $S_\varepsilon =\{v^{\varepsilon}_1, \dots, v^{\varepsilon}_n\}$,
where $\Vert v_i -  v^{\varepsilon}_i \Vert \leq \varepsilon R_S$.  The set of perturbed vertices, $\overline S_\varepsilon = \{ \overline v_1^{\varepsilon}, \dots, \overline v_K^{\varepsilon}\}$ may differ considerably from the set of actual vertices of $conv(S_\varepsilon)$.  Under mild assumption on $\varepsilon$, AVTA computes $\overline S_\varepsilon = \{ \overline v_1^{\varepsilon}, \dots, \overline v_K^{\varepsilon}\}$. More generally, given any $t \in (0,1)$, AVTA computes  a subset $\overline S_\varepsilon^t$ of $\overline S_\varepsilon$ so that
the distance from any  $p \in conv(S)$ to $conv(\overline S_\varepsilon^t)$ is at most $(t+\varepsilon) R$.  The complexity of AVTA for this variation of the problem is analogous to the unperturbed case, however it makes use a weaker parameter.  We say $conv(S)$ is $\Sigma_*$-{weakly robust},  if  $\Sigma_*$
is the minimum of the distances of each vertex in $S$ to the convex hull of  all the remaining points in $S$.
 In Figure \ref{Figsigma} we show a simple example where $\Gamma_*$ and $\Sigma_*$ are shown for set of eight points.

\begin{figure}[htpb]
	\centering
	\begin{tikzpicture}[scale=0.9]

		\draw (-2.0,0.0) -- (7.0,0.0) -- (1.0,5.0) -- cycle;
		\filldraw (1,5) circle (2pt) node[above] {$v_1$};
		\filldraw (-2,0) circle (2pt) node[below] {$v_2$};
		\filldraw (7,0) circle (2pt) node[below] {$v_3$};
\filldraw (3,.5) circle (2pt) node[below] {$v_4$};
\draw (7,0) --(3,.5);
 \draw (4.2,.6) node[right] {$\Sigma_3$};
 \draw (1,5) --(.7,3);
 \draw (.93,3.5) node[left] {$\Sigma_1$};
\filldraw (2.5,2) circle (2pt) node[below] {$v_5$};
\filldraw (.7,3) circle (2pt) node[below] {$v_6$};
\filldraw (-.4,.5) circle (2pt) node[right] {$v_7$};
\filldraw (-1,1.3) circle (2pt) node[right] {$v_8$};
\draw (-.4,.5) -- (-1,1.3);
\draw (-2,0) -- (-.74,.9);
 \draw (-1.3, .4) node[right] {$\Sigma_2$};

		\draw (1,5) -- (1,0.0);
        \draw (7,0) -- (.4,4.05);
        \draw (1, 2) node[right] {$\Gamma_1$};
         \draw (-.8, 3) node[right] {$\Gamma_2$};
          \draw (2, 3.2) node[right] {$\Gamma_3$};
	\end{tikzpicture}
	
\begin{center}
\caption{$\Gamma_*=\Gamma_1$ and $\Sigma_* =\Sigma_2$.} \label{Figsigma}
\end{center}
\end{figure}

We first prove when $\sigma_*=\Sigma_*/R \geq 4 \varepsilon$,  $\overline S_\varepsilon$ is a subset of vertices of $conv(S_\varepsilon)$ and $conv(S_\varepsilon)$ is at least  $\Sigma_*/2$-weakly robust. Using this, we prove

(i) If $\sigma \leq \sigma_*=\Sigma_*/R$ is known to satisfying $4\varepsilon \leq \sigma$, the number of operations of $AVTA$ to  computes $\overline S_\varepsilon$, is.
\begin{equation}O(nK_\varepsilon(m+  \sigma^{-2})),\end{equation}
where $K_\varepsilon$ is at most the cardinality of the set of vertices of $S_\varepsilon$.

Clearly $\Gamma_ * \geq \Sigma_*$, however we prove
\begin{equation}\Sigma_* \geq .5 \Gamma_* \rho_*,\end{equation}
where $\rho_*$ is the  minimum distance between distinct pair of points in $S$. This allows deriving lower bound to $\Sigma_*$ from a known lower bound on $\Gamma_*$. Thus we can alternatively write

(ii) If $\gamma \leq \gamma_*=\Gamma_*/R$ is known satisfying $4 \varepsilon \leq \gamma \rho_*/R$, the number of operations of $AVTA$ to  computes $\overline S_\varepsilon$ is.
\begin{equation}O(nK_\varepsilon(m+  (\gamma \rho_*)^{-2})).\end{equation}

(iii) If only $K$ is known, where $4 \varepsilon \leq \sigma_*=\Sigma_*/R$, the number of operations of $AVTA$ to  computes $\overline S_\varepsilon$ is.
\begin{equation}O(nK_\varepsilon(m+  \sigma_*^{-2})) \log (\sigma_*^{-1}).\end{equation}

(iv) More generally, given any $t \in (0,1)$, AVTA can compute a  subset  $\overline S_\varepsilon^t$ of $\overline S_\varepsilon$ so that the distance from each $p$ in $conv(S)$  to $conv(\overline S_\varepsilon^t)$ is at most $(t+ \varepsilon) R$. The corresponding number of operations is

\begin{equation}O(nK_\varepsilon^t(m+ t^{-2})), \quad K_\varepsilon^t= |  \overline S_\varepsilon^t|.\end{equation}

\begin{table}[!t]
	\renewcommand{\arraystretch}{1.0}
	
	\centering
	\scalebox{0.7}
	{
		\begin{tabular}{|l|l|l|c|}
			
			\hline
			~~~Input and Description &  ~~~~~~~Computed via AVTA & ~~~~~~~~~~~Conditions  & Complexity
			\\
			\hline
			& & &
			\\
			$S=\{v_1, \dots, v_n \} \subset \mathbb{R}^m$  & $\overline S$, vertices of $conv(S)$, $|\overline S|=K$ & $\gamma  \leq \gamma_* \equiv \Gamma_*/R$ is known & $O \big (nK(m+\gamma^{-2}) \big )$
			\\
			$R=\max \{\Vert v_i - v_j\Vert: v_i, v_j \in S\}$& $\overline S= \{\overline v_1, \dots, \overline v_K\}$ & Only $K$ is known  & $O \big (nK(m+ \gamma_*^{-2}) \big )
			\times \log (\gamma_*^{-1})$
			\\
			& Given $t  \in (0,1)$, $\overline S^t \subset \overline S$, $|\overline S^t|=K^{(t)}$ & General Case & $O \big (nK^{(t)}(m+t^{-2}) \big )$
			\\
			&$d(p, conv(\overline S^t) \leq t R, \forall p \in conv(S)$ & &
			\\
			\hline
			& & &
			\\
			$S_\varepsilon=\{v^{\varepsilon}_1, \dots, v^{\varepsilon}_n \}$, & $\widehat S_\varepsilon$,  vertices in $conv(S_\varepsilon)$,  $| \widehat S_\varepsilon|=K_\varepsilon$, &  $\sigma \leq \sigma_* \equiv \Sigma_*/R$ is known, $\varepsilon \leq \sigma/4$ & $O \big (nK_\varepsilon(m+\sigma^{-2}) \big )$
			\\
			a perturbation of $S$
			
			& $\overline S_\varepsilon=\{\overline v^{\varepsilon}_1, \dots, \overline v^{\varepsilon}_K \} \subset \widehat S_\varepsilon$ & $\gamma \leq \gamma_*$ is known, $\varepsilon \leq \gamma \rho_*/4R$ & $O \big (nK_\varepsilon(m+ R^2/(\gamma \rho_*)^{2}) \big )$
			
			\\
			
			& & Only $K$ is known, $\varepsilon \leq \sigma_*/4$ & $O \big (nK_\varepsilon(m+ \sigma_*^{-2}) \big ) \times \log (\sigma_*^{-1})$

			\\
			$\Vert v^{\varepsilon}_i - v_i \Vert \leq \varepsilon R$ &
			Given $t  \in (0,1)$, $\overline S^t_\varepsilon \subset \overline S_\varepsilon$, $|\overline S^t_\varepsilon|=K^{(t)}_\varepsilon$
			&General Case&  $O \big (nK^{(t)}_\varepsilon(m+t^{-2}) \big )$
			
			\\
			& $d(p, conv(\overline S_\varepsilon^t) \leq (t+\varepsilon) R, \forall  p \in conv(S)$  & &
			\\
			\hline
			& & &
			\\
			
			J-L Projection of $S$& $\overline U$, vertices of $conv(U)$, $|\overline U|=K_{\varepsilon'}$  &  $\gamma \leq \gamma_*$ is known& $O \big (nK_{\varepsilon'}(m'+ (\gamma (1- \varepsilon'))^{-2}) \big )$
			
			\\
			$U=L(S)=\{u_1, \dots, u_n \}$ & $\overline U=\{\overline u_1, \dots, \overline u_{K_{\varepsilon'}} \}$, $\overline U \subset L(\overline S)$ &  $m' =\varepsilon'^2/ c \log n < m$, $c$ a constant &
			\\
			$u_i=L(v_i)$, $L : \mathbb{R}^{m} \rightarrow  \mathbb{R}^{m'}$ & Given $t \in (0,1)$, $\overline U^t \subset L(\overline S)$, $|\overline U^t|=K_{\varepsilon'}^t$
			&  General Case & $O \big (nK_{\varepsilon'}^t(m'+ t^{-2}) \big )$
			
			\\
			$R'$ diameter of $U$ & $d(q, conv(\overline U^t) \leq t R', \forall q \in conv(U)$& &
			\\
			\hline
			& & &
			\\
			
			J-L Projection of $S_\varepsilon$& $\widehat U_\varepsilon$, vertices in $conv(U_\varepsilon)$, $|\widehat U_\varepsilon|=K_{\varepsilon \varepsilon'}$, & $\sigma \leq \sigma_*$ is known, $\varepsilon \leq \sigma (1- \varepsilon')/4$ & $O \big (nK_{\varepsilon \varepsilon'}(m'+ (\sigma (1- \varepsilon'))^{-2}) \big )$
			
			\\
			
			$U_\varepsilon= L(S_\varepsilon)= \{u_1^{\varepsilon}, \dots, u_n^{\varepsilon} \}$ &
			$\overline U_\varepsilon=\{\overline u^{\varepsilon}_1, \dots, \overline u^{\varepsilon}_{K_*} \}$,
			$\overline U_\varepsilon \subset \widehat U_\varepsilon$
			& $\gamma \leq \gamma_*$ is known, $\varepsilon \leq \gamma \rho_* (1- \varepsilon')/4R$ & $O \big (nK_{\varepsilon \varepsilon'}(m'+ R^2/(\gamma \rho_*(1- \varepsilon'))^{2}) \big )$
			
			\\
			
			$u^{\varepsilon}_i= L(v^{\varepsilon}_i)$
			&
			Given $t \in (0,1)$, $\overline U^t_\varepsilon \subset \overline U_\varepsilon$, $|\overline U^t_\varepsilon|=K^{(t)}_{\varepsilon \varepsilon'}$ &
			General Case & $O \big (nK^{(t)}_{\varepsilon \varepsilon} (m'+ t^{-2}) \big )$
			
			\\
			& $d(q, conv(\overline U_\varepsilon^t) \leq (t+ \varepsilon) R, \forall q \in conv(U)$& &
			\\
			\hline
		\end{tabular}
	}
	\caption{{\small $\Gamma_*= \min \{d(\overline v_i, conv(\overline S \setminus \{\overline v_i\}))\}$, $\Sigma_*= \min \{d(\overline v_i, conv(S \setminus \{\overline v_i\}))\}$, $\rho_* = \min \{d(v_i, v_j), i \not = j\}$}.}
	\label{Table1}
\end{table}

We also consider  the application of AVTA in the recovery of vertices through the  projection of $S$ or $S_\varepsilon$ under  a Johnson-Lindenstrauss randomized linear projection $L : \mathbb{R}^{m} \rightarrow  \mathbb{R}^{m'}$.
By relating the robustness parameters of $conv(U)$ and $conv(U_\varepsilon)$, where $U=L(S)$  and $U_\varepsilon=L(S_\varepsilon)$, to those of $conv(S)$ and $conv(S_\varepsilon)$, we derive analogous complexity bounds for probabilistic computation of the vertex set of $conv(U)$ or those of $conv(U_\varepsilon)$, or an approximation to these subsets for a given $t \in (0,1)$.  Table ~\ref{Table1} summarizes the  complexities of computing desired sets under various cases.

The organization of the the article is as follows. In Section \ref{TA}, we review the Triangle Algorithm  for solving the convex hull membership problem.  In Section \ref{EffTA}, we describe an efficient implementation of the Triangle Algorithm.  This will be used throughout the the article.  In Section \ref{AVTASEC}, we describe {\it All Vertex Triangle Algorithm} (AVTA), a modification of the Triangle Algorithm, for computing all vertices of the convex hull of a given finite set of points, $S$.  We discuss several applications of this, in particular in solving the convex hull membership problem itself.  Other applications will be described in subsequent sections.  In Section \ref{AVTAPER}, we consider the performance of AVTA under perturbation of data.  In Section \ref{sec6}, we consider AVTA with Johnson-Lindenstrauss projections. Furthermore, we consider the performance of AVTA under perturbation of data with Johnson-Lindenstrauss projections. 

\section{Review of The Triangle Algorithm} \label{TA}

The {\it Triangle Algorithm} described in \citet{kalantari2015characterization} is a simple iterative algorithm for solving the {\it convex hull membership problem}, a fundamental problem in linear programming and computational geometry. Formally, the convex hull membership problem is as follows:  Given a set of point $S=\{v_1, \dots, v_n\} \subset \mathbb{R}^m$ and a distinguished point $p \in \mathbb{R} ^m$, test if $p \in conv(S)$. If $p \not \in conv(S)$,  find a hyperplane that separates $p$ from $conv(S)$.   If $p \in conv(S)$, the Triangle Algorithm solves the problem to within prescribed precision by generating a sequence of points inside of $conv(S)$ that get sufficiently close to $p$.

Given two point $u,v \in \mathbb{R} ^m$ we interchangeably use
$d(u,v)= \Vert u- v \Vert$.  Given a point in  $conv(S)$, the Triangle Algorithm  searches for a {\it pivot} to get closer to $p$:

\begin{definition} \label{pivot} Given $p' \in conv(S)$, called {\it iterate},  we call $v \in S$  a $p$-{\it pivot} (or simply {\it pivot}) if
\begin{equation} \label{eq1}
d(p', v) \geq d(p, v).
\end{equation}
Equivalently, $v$ is a pivot if and only if
\begin{equation} \label{eq2}
v^Tp-v^Tp' \geq  \frac{1}{2} (\Vert p \Vert^2 -  \Vert p' \Vert^2).
\end{equation}
\end{definition}

\begin{definition} \label{witness}
A point $p' \in conv(S)$ is a $p$-{\it witness} (or simply {\it witness}) if the orthogonal bisecting hyperplane to the line segment $pp'$ separates $p$ from $conv(S)$.
\end{definition}
Equivalently,  $p' \in conv(S)$ is a  $p$-{\it witness} if and only if
\begin{equation} \label{eq3}
d(p',v_i) < d(p,v_i), \quad \forall i=1, \dots, n.
\end{equation}

\begin{definition}
Given $\varepsilon \in (0,1)$, $p' \in conv(S)$  is an $\varepsilon$-{\it approximate solution} if for some $v \in S$,
\begin{equation} \label{eq4}
d(p',p) \leq \varepsilon R,
\end{equation}
where $R$ is the diameter of $S$:
\begin{equation}
R= \max \{d(v_i, v_j): v_i,  v_j \in S\}.
\end{equation}

\end{definition}

Given a point $p' \in conv(S)$ that is neither an $\varepsilon$-approximate solution nor a witness, the Triangle Algorithm finds a $p$-pivot $v \in S$. Then on the line segment $p'v$ it compute the closest point to $p$, denoted by $Nearest(p; p'v)$. It then replaces $p'$ with $Nearest(p; p'v)$ and repeats the process.  It is easy to show,

\begin{prop} \label{prop1}
If an iterate $p' \in conv(S)$ satisfies $d(p', p) \leq \min \{d(p, v_i): i=1, \dots, n\}$, and $v_j$ is a $p$-pivot, then the new iterate
\begin{equation} \label{eq5}
p'' =Nearest(p;p'v_j)=
(1-\alpha)p' + \alpha v_j,
\end{equation}
where the {\it step-size} $\alpha$ is
\begin{equation} \label{eq6}
\alpha = \frac{(p-p')^T(v_j-p')}{d^2(v_j,p')} = \frac{p^Tv_j -p'^Tv_j - p^T p' + \Vert p' \Vert^2}{\Vert v_j \Vert^2 - 2p'^T v_j + \Vert p' \Vert^2}.
\end{equation}
In particular if
\begin{equation} \label{eq7}
p'=\sum_{i=1}^n \alpha_i v_i, \quad \sum_{i=1}^n \alpha_i=1, \quad \alpha_i \geq 0, \quad \forall i,
\end{equation}
then
\begin{equation} \label{eq8}
p''=\sum_{i=1}^n \alpha'_i v_i
\end{equation}
where
\begin{equation} \label{eq9}
\alpha'_j=(1-\alpha)\alpha_j+\alpha,  \quad \alpha'_i= (1-\alpha)\alpha_i,  \quad \forall i \not =j.  \qed
\end{equation}
\end{prop}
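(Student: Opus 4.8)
The plan is to recognize $p''=Nearest(p;p'v_j)$ as the orthogonal projection of $p$ onto the line segment joining $p'$ and $v_j$, and then to verify that this projection falls inside the natural parameter interval $[0,1]$; it is exactly at this last point that the two hypotheses (the pivot condition and $d(p',p)\le\min_i d(p,v_i)$) get used. Everything else is a routine expansion of squared norms.

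First I would parametrize the segment as $x(\alpha)=(1-\alpha)p'+\alpha v_j=p'+\alpha(v_j-p')$ and minimize $\|x(\alpha)-p\|^2$ over $\alpha\in\mathbb{R}$. Setting the derivative to zero gives the unconstrained minimizer
\begin{equation*}
\alpha=\frac{(p-p')^T(v_j-p')}{\|v_j-p'\|^2},
\end{equation*}
and expanding the inner products coordinatewise (using $\|v_j-p'\|^2=\|v_j\|^2-2p'^Tv_j+\|p'\|^2$) yields the second displayed expression for $\alpha$ in \eqref{eq6}. It then remains to check $\alpha\in[0,1]$, so that the constrained minimizer over the segment agrees with the unconstrained one and equals $x(\alpha)=(1-\alpha)p'+\alpha v_j$; I may note in passing that if $v_j=p'$ the pivot condition forces $p=v_j=p'$ and there is nothing to prove, so I assume $v_j\neq p'$ and the denominator is positive.

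For $\alpha\ge0$ I would invoke the pivot inequality \eqref{eq2}, namely $(p-p')^Tv_j\ge\frac12(\|p\|^2-\|p'\|^2)$, and observe that $\frac12(\|p\|^2-\|p'\|^2)-(p^Tp'-\|p'\|^2)=\frac12\|p-p'\|^2\ge0$; hence $(p-p')^Tv_j\ge(p-p')^Tp'$, which is precisely the statement that the numerator $(p-p')^T(v_j-p')$ is nonnegative. For $\alpha\le1$ I would rewrite $\|v_j-p'\|^2-(p-p')^T(v_j-p')=(v_j-p)^T(v_j-p')$ and then split $v_j-p'=(v_j-p)+(p-p')$ to obtain $(v_j-p)^T(v_j-p')=\|v_j-p\|^2+(v_j-p)^T(p-p')\ge\|v_j-p\|^2-\|v_j-p\|\,\|p-p'\|\ge0$, where the first inequality is Cauchy–Schwarz and the last uses the hypothesis $\|p-p'\|\le d(p,v_j)=\|v_j-p\|$. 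Thus the numerator of $\alpha$ does not exceed its denominator, so $\alpha\le1$.

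Finally, for the convex-combination bookkeeping \eqref{eq8}–\eqref{eq9}, I would substitute $p'=\sum_{i}\alpha_iv_i$ into $p''=(1-\alpha)p'+\alpha v_j$, collect the coefficient of $v_j$ to get $\alpha'_j=(1-\alpha)\alpha_j+\alpha$ and $\alpha'_i=(1-\alpha)\alpha_i$ for $i\neq j$, and verify $\alpha'_i\ge0$ (since $\alpha\in[0,1]$ and $\alpha_i\ge0$) and $\sum_i\alpha'_i=(1-\alpha)\sum_i\alpha_i+\alpha=1$. I do not anticipate a genuine obstacle; the only mildly delicate step is making sure each of the two hypotheses is applied in the correct place to pin $\alpha$ into $[0,1]$, the pivot condition yielding the lower bound and the distance hypothesis the upper bound.
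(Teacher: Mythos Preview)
Your proof is correct and complete. The paper itself does not give a proof of this proposition at all: it prefaces the statement with ``It is easy to show'' and ends the statement with a $\qed$, leaving the verification to the reader. Your argument is precisely the natural one the paper has in mind --- project $p$ onto the line through $p'$ and $v_j$, then use the pivot inequality for $\alpha\ge 0$ and the hypothesis $d(p',p)\le d(p,v_j)$ for $\alpha\le 1$ --- so there is nothing to contrast.
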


The following duality ensures the correctness of the iterative step of the Triangle Algorithm.

\begin{thm}  \label{thm1} {\bf (Distance Duality)}  Precisely one of the two conditions is satisfied:\

(i) For each  $p' \in conv(S)$ there exists $v_j \in S$ that is $p$-{\it pivot}, i.e. $d(p',v_j) \geq d(p,v_j)$.

(ii) There exists $p' \in conv(S)$ that is $p$-{\it witness}, i.e. $d(p',v_i) < d(p, v_i)$ for all $v_i \in S$.  $\qed$
\end{thm}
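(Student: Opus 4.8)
The plan is to establish the dichotomy in two short steps: first that (i) and (ii) cannot hold simultaneously, and second that if (i) fails then (ii) must hold. The key observation is that, once the definitions are unravelled, ``(i) fails'' is essentially a verbatim restatement of (ii); so the theorem is really a consequence of the stated equivalence between the geometric definition of a witness (Definition~\ref{witness}) and its algebraic form~\eqref{eq3}.

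\emph{Mutual exclusivity.} Suppose (ii) holds, witnessed by some $p'\in conv(S)$. By~\eqref{eq3} we have $d(p',v_i)<d(p,v_i)$ for every $i$, so no $v_i\in S$ satisfies~\eqref{eq1} relative to this particular $p'$; hence this $p'$ admits no $p$-pivot, and (i) fails. Thus at most one of (i), (ii) can hold.

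\emph{$\neg$(i) $\Rightarrow$ (ii).} Suppose (i) fails. Then there is an iterate $p'\in conv(S)$ for which no $v_j\in S$ is a $p$-pivot, i.e.\ $d(p',v_j)<d(p,v_j)$ for all $j=1,\dots,n$ (the negation of~\eqref{eq1}). This is precisely condition~\eqref{eq3}, so it only remains to confirm that~\eqref{eq3} delivers the separation property in Definition~\ref{witness}, i.e.\ that the two formulations of ``witness'' agree. Expanding squares, $d(p',v)<d(p,v)$ is equivalent to $(p-p')^Tv<\tfrac{1}{2}(\Vert p\Vert^2-\Vert p'\Vert^2)$; hence~\eqref{eq3} says every $v_i$ lies strictly on one side of the perpendicular bisector hyperplane $H=\{x:(p-p')^Tx=\tfrac{1}{2}(\Vert p\Vert^2-\Vert p'\Vert^2)\}$ of the segment $pp'$. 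Since the corresponding open half-space is convex, all of $conv(S)$ lies strictly on that side. Moreover $p\ne p'$ (if $p=p'$, every $v_i$ would trivially satisfy~\eqref{eq1}, contradicting that $p'$ has no pivot), so $(p-p')^Tp-\tfrac{1}{2}(\Vert p\Vert^2-\Vert p'\Vert^2)=\tfrac{1}{2}\Vert p-p'\Vert^2>0$, meaning $p$ lies strictly on the opposite side of $H$. Therefore $H$ separates $p$ from $conv(S)$, $p'$ is a $p$-witness, and (ii) holds.

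\emph{Main obstacle.} There is no deep difficulty; the content is bookkeeping, and the one place requiring care is the equivalence of the two descriptions of a witness used above --- in particular invoking convexity of the half-space to pass from ``all vertices strictly on one side'' to ``all of $conv(S)$ strictly on one side,'' and verifying that the separation is strict rather than weak. It is also worth recording the complementary fact (already implicit above) that when $p\in conv(S)$ alternative (i) holds: writing $p=\sum_i\lambda_i v_i$ with $\lambda_i\ge0$ and $\sum_i\lambda_i=1$, if no $v_i$ were a $p$-pivot for some iterate $p'$ then $2v_i^T(p-p')<\Vert p\Vert^2-\Vert p'\Vert^2$ for all $i$, and averaging against the weights $\lambda_i$ would give $2p^T(p-p')<\Vert p\Vert^2-\Vert p'\Vert^2$, i.e.\ $\Vert p-p'\Vert^2<0$, a contradiction; this confirms the dichotomy is consistent with the geometric picture that case (ii) can occur only when $p\notin conv(S)$.
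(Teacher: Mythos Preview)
Your argument is correct. Note, however, that the paper does not supply its own proof of this theorem: it is stated in Section~\ref{TA} as a review result from \citet{kalantari2015characterization}, with the $\qed$ marker indicating it is taken as given. So there is no in-paper proof to compare against.

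On its own merits, your proposal is sound and in fact captures exactly why the statement is almost definitional: the negation of (i) is literally ``there exists $p'\in conv(S)$ admitting no $p$-pivot,'' which by~\eqref{eq1} and~\eqref{eq3} is the algebraic form of (ii). Your verification that the algebraic form~\eqref{eq3} matches the geometric Definition~\ref{witness} (via the perpendicular bisector and convexity of the open half-space) is the only substantive step, and you handle it cleanly, including the check that $p\ne p'$ so the separation is genuine. The closing paragraph showing directly that $p\in conv(S)$ forces alternative (i) is a nice consistency check, though not required for the dichotomy itself.
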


The following  relates the gap in two consecutive iterates of the Triangle Algorithm:

\begin{thm}  \label{thm2}  Let $p, p', v$  be distinct points in $\mathbb{R} ^m$. Suppose $d(p,p') \leq d(p,v) \leq d(p',v)$.  Let $p''=Nearest (p, p'v)$. Let $\delta=d(p',p)$, $\delta'=d(p'',p)$, and $r=d(p,v)$. Then,

\begin{equation} \label{gap}
\delta' \leq \delta \sqrt{1- \frac{\delta^2}{4 r^2}}. \qed
\end{equation}
\end{thm}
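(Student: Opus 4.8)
The plan is to recognize $p''=Nearest(p;p'v)$ as the orthogonal projection of $p$ onto the line through $p'$ and $v$, to check that this projection actually lies on the segment $[p',v]$ under the stated hypotheses, and then to reduce the inequality to a one-line algebraic bound on the angle of the triangle $p\,p'\,v$ at $p'$. Write $s=d(p',v)$ and let $\theta$ be the angle $\angle\,p\,p'\,v$. Once we know $p''$ is the interior foot of the perpendicular from $p$ to line $p'v$, the right triangle $p\,p'\,p''$ gives $\delta'=d(p,p'')=d(p,p')\sin\theta=\delta\sin\theta$, and the whole problem becomes bounding $\sin\theta$, equivalently bounding $\cos\theta$ from below.

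First I would verify that the step-size $\alpha=(p-p')^T(v-p')/d^2(v,p')$ of Proposition~\ref{prop1} lies in $[0,1]$: its numerator equals $\delta s\cos\theta$, which is nonnegative once $\cos\theta\ge 0$, and $\alpha\le 1$ since $\delta\cos\theta\le\delta\le r\le s$. (Positivity of $\cos\theta$ can be read off directly from the hypothesis chain $\delta=d(p,p')\le d(p,v)\le d(p',v)=s$, which forces both base angles of the triangle to be acute, or it follows a posteriori from the stronger bound proved below.) Consequently $p''$ is the foot of the perpendicular from $p$ to the line $p'v$, the angle at $p''$ in the triangle $p\,p'\,p''$ is right, and therefore $\delta'=\delta\sin\theta$.

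Next, the law of cosines at the vertex $p'$ gives $r^2=\delta^2+s^2-2\delta s\cos\theta$, hence $\cos\theta=(\delta^2+s^2-r^2)/(2\delta s)$. Substituting $\delta'^2=\delta^2\sin^2\theta=\delta^2(1-\cos^2\theta)$, the desired inequality $\delta'^2\le\delta^2\bigl(1-\tfrac{\delta^2}{4r^2}\bigr)$ is equivalent to $\cos^2\theta\ge\delta^2/(4r^2)$, i.e. to $\cos\theta\ge\delta/(2r)$. Plugging in the formula for $\cos\theta$ and clearing the (positive) denominators reduces this to $r(\delta^2+s^2-r^2)\ge\delta^2 s$, which rearranges to $(s-r)(rs+r^2-\delta^2)\ge 0$. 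Both factors are nonnegative: $s\ge r$ because $d(p,v)\le d(p',v)$, and $rs+r^2-\delta^2\ge rs+r^2-r^2=rs\ge 0$ because $\delta\le r$. This yields $\cos\theta\ge\delta/(2r)$ and hence the claimed bound.

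The only delicate point, and the step I would be most careful to spell out, is the claim that the nearest point of the segment $[p',v]$ to $p$ is the interior foot of the perpendicular and not an endpoint; this is precisely where the ordering $\delta\le r\le s$ is used (equivalently, where $\alpha\in[0,1]$ is needed). Everything else is the routine law-of-cosines computation and the factorization $(s-r)(rs+r^2-\delta^2)\ge 0$, so I do not expect any real obstacle there.
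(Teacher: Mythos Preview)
Your proof is correct. Note, however, that the paper does not supply its own proof of Theorem~\ref{thm2}: the statement ends with a \qed\ and is quoted from \citet{kalantari2015characterization}, where the Triangle Algorithm and its distance estimates are originally established. So there is no in-paper argument to compare against.

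On its own merits your argument is clean and complete: you reduce to the planar triangle $p\,p'\,v$, identify $\delta'=\delta\sin\theta$ once the foot of the perpendicular is known to lie on the segment, and then show $\cos\theta\ge\delta/(2r)$ via the law of cosines and the factorization $(s-r)(rs+r^2-\delta^2)\ge 0$. The verification that $\alpha\in[0,1]$ (so that $p''$ is the orthogonal foot rather than an endpoint) is precisely where the full chain $\delta\le r\le s$ is needed, and you handle it correctly. One small streamlining: the positivity of $\cos\theta$ need not be deferred---from $\cos\theta=(\delta^2+s^2-r^2)/(2\delta s)$ and $s\ge r$ you get $\cos\theta\ge\delta/(2s)>0$ immediately, which already gives $\alpha\ge 0$ before you prove the sharper bound $\cos\theta\ge\delta/(2r)$.
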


\begin{figure}[htpb]
	\centering
	\begin{tikzpicture}[scale=0.3]

	
		\draw (0.0,0.0) -- (7.0,0.0) -- (-2.0,-4.0) -- cycle;
      \draw (0,0) -- (7,0) node[pos=0.5, above] {$r$};
      \draw (-2,-4) -- (0,0) node[pos=0.5, above] {$\delta$};
       \draw (0,0) -- (1.15,-2.6) node[pos=0.5, right] {$\delta'$};
       \draw (1.15,-2.6) node[below] {$p''$};
       \filldraw (1.15,-2.6) circle (2pt);
		\draw (0,0) node[left] {$p$};
		\draw (7,0) node[right] {$v$};
		\draw (-2,-4) node[below] {$p'$};
           \filldraw (0,0) circle (2pt);
\filldraw (7,0) circle (2pt);
\filldraw (-2,-4) circle (2pt);
		
	\end{tikzpicture}
\begin{center}
\caption{Reduction of gaps $\delta=\Vert p'- p \Vert$ by using a $p$-pivot $v$, $\delta'=\Vert p''-p \Vert$.} \label{Fig5}
\end{center}
\end{figure}

The following gives the aggregate complexity bound.

\begin{thm}   \label{thm3}  The Triangle Algorithm correctly solves the convex hull membership problem as follows:

(i) Suppose $p \in conv(S)$. Given $\varepsilon >0$, the number of iterations $K_\varepsilon$ to compute $p_\varepsilon \in conv(S)$ so that  $d(p,p_\varepsilon) \leq \varepsilon d(p, v_i)$, for some $v_i \in S$ is
\begin{equation} \label{iter1}
K_\varepsilon  \leq \frac{48}{\varepsilon^2}=  O  \bigg (\frac{1}{\varepsilon^2} \bigg ).
\end{equation}
(ii) Suppose $p \not \in conv(S)$. Let $R=max\{d(v_i,v_j):  v_i, v_j \in S\}$,
$\Delta= \min \{d(x, p) : x \in conv(S) \}$. The number of iterations $K_\Delta$ to compute
$p_\Delta \in conv(S)$ so that $d(p_\Delta,v_i) < d(p, v_i)$ for all $v_i \in S$, satisfies
\begin{equation} \label{iter2}
K_\Delta \leq \frac{48 R^2}{\Delta^2}= O  \bigg (\frac{R^2}{\Delta^2} \bigg ). \qed
\end{equation}
\end{thm}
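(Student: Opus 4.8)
The correctness part is essentially free: by the Distance Duality (Theorem~\ref{thm1}), at any iterate $p'\in conv(S)$ that is not yet a $p$-witness alternative~(i) holds, so a $p$-pivot $v\in S$ exists and the iteration is well defined; since $p\in conv(S)$ in part~(i) there can be no witness, so the algorithm can only halt at an $\varepsilon$-approximate solution, while in part~(ii) it halts at a witness, which certifies $p\notin conv(S)$. So the whole task is to bound the number of iterations, and the plan is to track the scalar sequence $\delta_k=d(p_k,p)$, where I take $p_0$ to be the vertex of $S$ closest to $p$ and $p_{k+1}=Nearest(p;p_kv^{(k)})$ for the pivot $v^{(k)}$ chosen at step $k$.

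First I would establish, by induction on $k$, the invariant $\delta_k\le\min_i d(p,v_i)$: it holds at $k=0$ by the choice of $p_0$, and it is preserved because $p_{k+1}$ lies on the segment $p_kv^{(k)}$ and is the closest point of that segment to $p$, so in particular $\delta_{k+1}\le\delta_k$. Writing $r_k=d(p,v^{(k)})$, the invariant gives $\delta_k\le r_k$, and together with the pivot inequality $d(p_k,v^{(k)})\ge r_k$ this makes the hypotheses of Theorem~\ref{thm2} hold at every step, so $\delta_{k+1}\le\delta_k\sqrt{1-\delta_k^2/(4r_k^2)}$. Squaring, dividing by $\delta_k^2\delta_{k+1}^2$, and using $\delta_{k+1}\le\delta_k$ gives the telescoping bound
\begin{equation}
\frac{1}{\delta_{k+1}^2}-\frac{1}{\delta_k^2}\ \ge\ \frac{1}{4r_k^2},
\end{equation}
hence $\delta_K^2\le\big(\sum_{k=0}^{K-1}\tfrac{1}{4r_k^2}\big)^{-1}$.

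For part~(i), since $p\in conv(S)$ and $v^{(k)}\in S$, expanding $p$ as a convex combination of the $v_i$ and applying the triangle inequality gives $r_k\le R$, hence $\delta_K^2\le 4R^2/K$. Also, for a diameter pair $v_a,v_b$ we have $d(p,v_a)+d(p,v_b)\ge d(v_a,v_b)=R$, so $\max_i d(p,v_i)\ge R/2$; thus it suffices to drive $\delta_K\le\varepsilon R/2$ to obtain an $\varepsilon$-approximate solution, which $\delta_K^2\le 4R^2/K$ already forces once $K=O(1/\varepsilon^2)$. To produce the explicit constant $48$ I would instead split the run into dyadic phases on which $\delta_k\in(2^{-j-1}R,2^{-j}R]$: on phase $j$ the per-step contraction factor is at least $1-4^{-j-2}$, so the phase lasts $O(4^{j})$ iterations, and summing the resulting geometric series over the $O(\log(1/\varepsilon))$ phases yields $K_\varepsilon\le 48/\varepsilon^2$. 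For part~(ii) the iterates remain in $conv(S)$ so $\delta_k\ge\Delta$ throughout; feeding this into $\delta_K^2\le(\sum_{k<K}\tfrac1{4r_k^2})^{-1}$ gives $\sum_{k<K}\tfrac{1}{4r_k^2}\le 1/\Delta^2$. To turn this into $O(R^2/\Delta^2)$ I would bound $r_k$: the point $p_*$ attaining $\Delta$ lies in $conv(S)$, so $d(p_*,v_i)\le R$ for all $i$, whence $\min_i d(p,v_i)\le\Delta+R$ and, by the invariant together with $d(p_k,v^{(k)})\le R$, $r_k\le\delta_k+R\le\Delta+2R$; this yields $K_\Delta\le 4(1+2R/\Delta)^2=O(R^2/\Delta^2)$, sharpened to $48R^2/\Delta^2$ by the same dyadic-phase accounting.

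The main obstacle will be the constant-chasing: converting the clean telescoped $O(1/\varepsilon^2)$ and $O(R^2/\Delta^2)$ estimates into the explicit bound with constant $48$ requires the careful dyadic-phase bookkeeping sketched above, and in part~(ii) one must additionally keep the pivot distances $r_k$ under control even though $p$ itself may be arbitrarily far from $S$. Everything else — the invariant, the applicability of Theorem~\ref{thm2} at each step, and the telescoping — is routine.
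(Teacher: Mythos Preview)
The paper does not actually prove Theorem~\ref{thm3}: it appears in Section~\ref{TA}, which is an explicit review of results from \citet{kalantari2015characterization}, and the statement simply ends with a $\qed$ with no accompanying proof. So there is no in-paper argument to compare against; your proposal is essentially a reconstruction of the original proof, and the ingredients you invoke (the Distance Duality for correctness, the invariant $\delta_k\le\min_i d(p,v_i)$ to make Theorem~\ref{thm2} applicable at every step, and then tracking the decay of $\delta_k$) are exactly the right ones.

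One remark on the execution: the telescoping bound you derive first is already strong enough and in fact gives a \emph{better} constant than $48$, so the dyadic-phase detour you propose ``to produce the explicit constant $48$'' is unnecessary. In part~(i), from $\tfrac{1}{\delta_{k+1}^2}-\tfrac{1}{\delta_k^2}\ge\tfrac{1}{4r_k^2}$ together with $r_k\le R$ you get $\delta_K\le 2R/\sqrt{K}$, and combined with $\max_i d(p,v_i)\ge R/2$ this already yields $K_\varepsilon\le 16/\varepsilon^2$. In part~(ii), your bound $r_k\le\Delta+2R$ gives $K_\Delta\le 4(1+2R/\Delta)^2$; when $\Delta\le R$ this is at most $36R^2/\Delta^2$, and when $\Delta>R$ the initial iterate $p_0=\arg\min_i d(p,v_i)$ is already a witness, since $d(p_0,v_i)\le R<\Delta\le d(p,v_i)$ for every $i$, so $K_\Delta=0$. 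Either way the stated bound follows directly from the telescoped estimate, and no separate phase-based bookkeeping is required.
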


The straightforward implementation of each iteration of the Triangle Algorithm is easily seen to  take $O(mn)$ arithmetic operations. The algorithm can be described as follows:

\begin{center}
\begin{tikzpicture}
\node [mybox] (box){%
    \begin{minipage}{0.9\textwidth}
{\bf  Triangle Algorithm ($S$, $p$, $\varepsilon \in (0,1)$)}\
\vspace{.2cm}

\begin{itemize}
\item
{\bf Step 0.} Set $p'= {\rm argmin} \{d(v_i,p)\}$.

\item
{\bf Step 1.} If $d(p',p) \leq \varepsilon R$, then output $p'$ as an $\varepsilon$-approximate solution, stop.

\item
{\bf Step 2.}  If a $p$-pivot $v \in S$  exists, set $p' \leftarrow Nearest(p; p'v)$. Goto Step 1.

\item
{\bf Step 3.} Output $p'$ as $p$-witness. Stop.

\end{itemize}
\end{minipage}};
\end{tikzpicture}
\end{center}

\begin{remark}
In each iteration of the Triangle Algorithm it suffices to have a representation of the iterate $p'$ in terms of $v_i$'s, i.e. $p'=\sum_{i=1}^n \alpha_i v_i$,  where $\sum_{i=1}^n \alpha_i=1$, $\alpha_i \geq 0$ for all $i=1, \dots, n$.
It is not necessary to know the coordinates of $p'$.  Rather it is enough to have an array of size $n$ to store the vector of $\alpha_i$'s.  Then assuming that we have stored $p^Tv_i$, $i=1, \dots, n$, we can compute
the step size $\alpha$ (see (\ref{eq6})) and $p''$ (the new iterate) in $O(n)$ time.
\end{remark}

An alternate complexity bound can be stated for the Triangle Algorithm,  especially when $p \in conv(S)$ is well situated.

\begin{definition} \label{strictpivot} Given $p' \in conv(S)$,  $v \in S$ is a {\it strict} $p$-pivot (or simply {\it strict} pivot) if $\angle p'pv \geq \pi/2$.
\end{definition}

\begin{thm}  \label{thm4} {\bf (Strict Distance Duality) } Assume $p \not \in S$.  Then $p \in conv(S)$ if and only if for each  $p' \in conv(S)$ there exists strict $p$-pivot, $v \in S$.  $\qed$
\end{thm}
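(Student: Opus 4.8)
The plan is to recast the geometric strict-pivot condition as an inner-product inequality and then prove the two implications separately. Observe that $\angle p'pv \ge \pi/2$ is equivalent to $(p'-p)^T(v-p) \le 0$: this equivalence is clean precisely because the hypothesis $p \notin S$ guarantees $v \ne p$ for every $v \in S$, so the vector $v-p$ is nonzero and the angle at $p$ is well defined. I will work entirely with this inequality; the degenerate boundary case $p'=p$ (the iterate already coincides with $p$) makes the inner product vanish, so every $v$ is a strict pivot trivially and nothing needs proving there.

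For the forward direction, assume $p \in conv(S)$ with $p \notin S$, and fix an arbitrary iterate $p' \in conv(S)$. Writing $p=\sum_{i=1}^n \alpha_i v_i$ as a convex combination, I would form the weighted average of the quantities $(p'-p)^T(v_i-p)$ and note it telescopes: $\sum_{i=1}^n \alpha_i (p'-p)^T(v_i-p) = (p'-p)^T\bigl(\sum_i \alpha_i v_i - p\bigr) = (p'-p)^T(p-p)=0$. Since the $\alpha_i$ are nonnegative and sum to one, not every term in the support can be strictly positive, so there is an index $i$ with $\alpha_i>0$ and $(p'-p)^T(v_i-p)\le 0$; this $v_i$ is the desired strict pivot. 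The assumption $p \notin S$ enters here exactly to ensure $v_i \ne p$, so the pivot returned from the support is a genuine point distinct from $p$ and its angle with $p'$ is meaningful.

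For the converse I would argue by contraposition: assuming $p \notin conv(S)$, I exhibit a single iterate admitting no strict pivot. Let $p_*$ denote the projection of $p$ onto the closed convex set $conv(S)$ and set $\Delta = d(p_*,p)>0$. The variational characterization of the projection gives $(p_*-p)^T(x-p_*)\ge 0$ for every $x \in conv(S)$, in particular for every $v \in S$. Decomposing $v-p=(v-p_*)+(p_*-p)$ then yields $(p_*-p)^T(v-p) = (p_*-p)^T(v-p_*) + \Vert p_*-p\Vert^2 \ge \Delta^2 > 0$, so with $p'=p_*$ we get $(p'-p)^T(v-p)>0$ for all $v \in S$, i.e. no strict pivot exists. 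This contradicts the standing assumption, forcing $p \in conv(S)$.

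The two inner-product expansions are routine. The only genuinely delicate point is the role of $p \notin S$ in the forward direction: without it the averaging argument can collapse onto the degenerate choice $v=p$ (this is exactly the situation where $p$ is a vertex of $conv(S)$, so its unique convex representation is supported on itself), and then the strict pivot produced is not a valid point and the statement fails. I would flag this, together with the $p'=p$ case, explicitly. One could alternatively deduce the result from the Distance Duality of Theorem \ref{thm1}, but the direct projection argument above is self-contained and makes the necessity of $p \notin S$ transparent.
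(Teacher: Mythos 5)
Your proof is correct. Note that there is no in-paper argument to compare it against: Theorem \ref{thm4} appears in Section \ref{TA} as part of the review of the Triangle Algorithm and is stated with a terminal $\qed$ but no proof, being quoted from \citet{kalantari2015characterization}. Judged on its own merits, your argument is sound and self-contained. The reformulation of $\angle p'pv \geq \pi/2$ as $(p'-p)^T(v-p) \leq 0$ is the right move, and it is also what makes the degenerate case $p'=p$ tractable (under the literal angle definition that case is undefined, so adopting the inner-product reading is the correct convention, and you flag it). In the forward direction, the averaging identity $\sum_{i} \alpha_i (p'-p)^T(v_i-p) = (p'-p)^T(p-p) = 0$ does force some index with $\alpha_i > 0$ and $(p'-p)^T(v_i-p) \leq 0$, and the hypothesis $p \notin S$ is precisely what guarantees this $v_i$ is a genuine pivot rather than $p$ itself. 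In the converse, the variational characterization of the projection $p_*$ gives $(p_*-p)^T(v-p) = (p_*-p)^T(v-p_*) + \Vert p_*-p \Vert^2 \geq \Delta^2 > 0$ for every $v \in S$, so $p'=p_*$ is a witness iterate admitting no strict pivot; this is the standard projection/separating-hyperplane argument and is exactly what the non-strict Distance Duality (Theorem \ref{thm1}) also rests on. Your closing remark that the statement genuinely fails when $p$ is a vertex of $conv(S)$ belonging to $S$ (e.g.\ $S=\{p,v\}$, $p'=v$) correctly explains why the hypothesis $p \notin S$ cannot be removed.
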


The following theorem shows that under the assumption that $p$ is an interior point of $conv(S)$ we can give an alternate complexity for the Triangle Algorithm whose number of iterations are logarithmic in $1/\varepsilon$.

\begin{thm}  \label{thm5} Suppose $p$ is contained in a ball of radius $\rho$,  $B_\rho (p)= \{x:  d(x, p) < \rho\}$, contained in  $conv^\circ(S)$, the relative interior of $conv(S)$. Suppose the Triangle Algorithm uses a strict pivot in each iteration.  Given $\varepsilon \in (0,1)$, the number of iterations to compute $p_\varepsilon \in conv(S)$ such that $d(p,p_\varepsilon) \leq \varepsilon R$, $R=max\{d(v_i,v_j), v_i, v_j \in S\}$ is
\begin{equation}
O\bigg ( \bigg (\frac {R}{\rho} \bigg )^2\log \frac{1}{\varepsilon} \bigg ). \qed
\end{equation}
\end{thm}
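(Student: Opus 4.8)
\emph{Proof plan.} The plan is to show that, when $p$ sits inside a ball of radius $\rho$ contained in $conv(S)$, one iteration of the Triangle Algorithm using a suitably chosen strict pivot shrinks the gap $\delta := d(p',p)$ by a \emph{fixed} multiplicative factor $\sqrt{1-\rho^2/R^2}<1$ — a genuine contraction, in contrast to the weaker estimate of Theorem~\ref{thm2}. Granting this, the iteration count follows from a geometric series: the initial gap satisfies $\delta_0 \le R$ (both $p$ and the starting iterate lie in $conv(S)$, whose diameter equals that of $S$), so after $k$ steps $\delta_k \le R\,(1-\rho^2/R^2)^{k/2}$, and $\delta_k \le \varepsilon R$ holds as soon as $k \ge 2\log(1/\varepsilon)\,/\,\big(-\log(1-\rho^2/R^2)\big)$; using $-\log(1-x)\ge x$ this is $O\big((R/\rho)^2\log(1/\varepsilon)\big)$.

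For the contraction, fix an iterate $p'$ with $\delta = d(p',p) > \varepsilon R$ (otherwise the algorithm has stopped) and assume $p\notin S$ (if $p\in S$ it already lies in $conv^\circ(S)$ and there is nothing to do). I would select the greedy (Frank--Wolfe-type) pivot
\[
  v^\ast \in \argmax_{v\in S}\ (p-p')^T(v-p').
\]
To lower bound this maximum I would use the $\rho$-ball hypothesis: the point $w = p + \tfrac{\rho}{\delta}(p-p')$ satisfies $d(w,p)=\rho$, hence lies in $\overline{B_\rho(p)}\subseteq conv(S)$ (the closure of the ball lies in the closed set $conv(S)$); writing $w=\sum_i\lambda_i v_i$ as a convex combination and using $w-p=\tfrac{\rho}{\delta}(p-p')$ gives $(p-p')^T(w-p') = (p-p')^T(w-p)+\delta^2 = \rho\delta+\delta^2$. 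Since this is the $\lambda$-weighted average of the numbers $(p-p')^T(v_i-p')$, at least one of them — in particular $(p-p')^T(v^\ast-p')$ — is $\ge \rho\delta+\delta^2 \ge \rho\delta$. Consequently $(p'-p)^T(v^\ast-p) = \delta^2 - (p-p')^T(v^\ast-p') \le -\rho\delta<0$, i.e.\ $\angle v^\ast p p' > \pi/2$, so $v^\ast$ is a strict pivot and its use is consistent with the theorem's hypothesis.

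It remains to turn the bound $(p-p')^T(v^\ast-p')\ge\rho\delta$ into progress. Because the angle of triangle $p\,p'\,v^\ast$ at $p$ is $\ge\pi/2$, its base angles at $p'$ and $v^\ast$ are acute, so the foot of the perpendicular from $p$ to the line through $p'$ and $v^\ast$ lies inside the segment $p'v^\ast$; thus $p'':=Nearest(p;p'v^\ast)$ is that foot — equivalently, the step size $\alpha$ of Proposition~\ref{prop1} lies in $[0,1]$, so no truncation to the endpoint occurs — and $d(p'',p)^2 = \delta^2 - \big((p-p')^T(v^\ast-p')\big)^2/d(v^\ast,p')^2$. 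Bounding the numerator below by $(\rho\delta)^2$ and the denominator above by $R^2$ (both $v^\ast$ and $p'$ lie in $conv(S)$) yields $d(p'',p) \le \delta\sqrt{1-\rho^2/R^2}$, the per-iteration contraction claimed above.

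I expect the main obstacle to be precisely the heart of this argument — identifying the right pivot and certifying that $Nearest$ does not clip to the endpoint. An arbitrary strict pivot only guarantees $(p-p')^T(v-p')\ge\delta^2$, which gives $d(p'',p)\le\delta\sqrt{1-\delta^2/R^2}$; this deteriorates as $\delta\to 0$ and recovers only the $O(1/\varepsilon^2)$ bound of Theorem~\ref{thm3}. The logarithmic rate genuinely exploits the $\rho$-ball, which boosts the correlation of the greedy pivot with the search direction $p-p'$ from $\delta^2$ up to $\rho\delta$. The remaining ingredients — closedness of $conv(S)$, the diameter identity, the invariant $d(p',p)\le\min_i d(p,v_i)$ needed to invoke Proposition~\ref{prop1} (it holds at Step~0 and is preserved because the gap never increases), and the final geometric-series estimate — are routine.
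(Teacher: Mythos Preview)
The paper does not prove this theorem: it is quoted from \citet{kalantari2015characterization} and closed with a bare $\qed$, so there is no in-paper argument to compare against. Your proof is correct and complete. The construction of the auxiliary point $w=p+\tfrac{\rho}{\delta}(p-p')\in conv(S)$ to force the greedy pivot to satisfy $(p-p')^T(v^\ast-p')\ge \rho\delta$, followed by the orthogonal-projection identity and the bound $d(v^\ast,p')\le R$, is exactly the mechanism that yields the fixed contraction factor $\sqrt{1-\rho^2/R^2}$; the geometric-series finish is then routine, and your check that the foot of the perpendicular lands inside the segment (both base angles acute because the apex angle exceeds $\pi/2$) closes the only possible gap.

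One point worth making explicit in your write-up, since you already diagnose it in the ``main obstacle'' paragraph: the hypothesis ``uses a strict pivot in each iteration'' cannot be read as ``uses \emph{any} strict pivot.'' Your own counter-analysis shows that a bare strict pivot gives only $(p-p')^T(v-p')\ge\delta^2$ and hence the $O(1/\varepsilon^2)$ rate of Theorem~\ref{thm3}; a concrete two-dimensional example (e.g.\ $p=0$, $p'=(\delta,0)$, and the borderline strict pivot $v=(0,R)$) confirms that the per-step contraction degenerates to $1-O(\delta^2/R^2)$. So the theorem implicitly requires the greedy (Frank--Wolfe) pivot $v^\ast=\argmax_{v\in S}(p-p')^T(v-p')$, which you show is always strict under the $\rho$-ball hypothesis. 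You handle this correctly; just state it as an interpretation of the hypothesis rather than as a potential obstacle.
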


\section{Efficient Implementation of Triangle Algorithm} \label{EffTA}
The worst-case complexity of each iteration in the Triangle Algorithm is $O(mn)$.  Assuming that all the inner products $v_i^T  v_j$ are computed the iteration complexity of Triangle Algorithm can be shown to reduce to $O(n)$.  The cost of pre-computing the inner products is $O(mn^2)$.  The complexity can be made more efficient. To do so it suffices to compute the inner products $ v_i^T v_j$  progressively rather than pre-computing them all.  Ignoring this complexity,  the iteration complexity of the Triangle Algorithm reduces to $O(N)$ where $N \leq n $ is the number of points of $S$ considered in the Triangle Algorithm. The following shows how to achieve this.

\begin{prop}  \label{prop2} Let $\widehat S= \{ \widehat v_1, \dots, \widehat v_N\}$ be a subset  of  $\mathbb{R}^m$.
 Consider testing if a given $p \in \mathbb{R}^m$ lies in $conv( \widehat S)$.  Suppose we have computed  $\Vert p \Vert ^2$,  as well as $p^T \widehat v_i$, $i=1, \dots, N$.  Suppose we have available $p' = \sum_{i=1}^N \alpha_i \widehat v_i \in conv( \widehat S)$ satisfying $d(p', p) \leq \min \{d(p, \widehat v_i): i=1, \dots, N\}$. Suppose  $\Vert p' \Vert ^2$ is also computed. Also, suppose  $p'^T \widehat v_i$ is computed for each $i=1, \dots, N$.  Then excluding the cost of computing the  entries of the $N \times N$ matrix $\widehat M=(\widehat v_i^T \widehat v_j)$, each iteration of the Triangle Algorithm can be implemented in $O(N)$ operations. More precisely,

(i) Computation of a $p$-pivot $\widehat v_j$ at $p'$, if one exists,  takes $O(N)$ operations.

(ii) Given a pivot $\widehat v_j$, the computation of step size $\alpha$ takes $O(1)$ operations.

(iii) Computation of $Nearest(p; p'v)=p''= (1- \alpha) p' + \alpha \widehat v_j = \sum_{i=1}^N \alpha'_i \widehat v_i$
takes $O(N)$ operations.

(iv) Computation of $\Vert p'' \Vert^2$ takes $O(1)$ operations.

(v) Computation of
$p''^T \widehat v_i$, $i=1, \dots, N$  takes $O(N)$ operations.
\end{prop}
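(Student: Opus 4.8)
The plan is to verify parts (i)--(v) one at a time, after first pinning down the small collection of scalars that the algorithm should carry from one iterate to the next. Alongside the coefficient vector $(\alpha_1,\dots,\alpha_N)$ representing the current iterate $p'=\sum_i \alpha_i \widehat v_i$, I would maintain exactly: $\|p'\|^2$, the $N$ inner products $p'^T\widehat v_i$, and the scalar $d(p,p')$ (which the stopping test in Step~1 needs anyway). Together with the quantities computed once at the outset --- $\|p\|^2$ and the $N$ numbers $p^T\widehat v_i$ --- and with $O(1)$ access to the entries $\widehat v_i^T\widehat v_j$ of $\widehat M$ (whose construction cost is excluded by hypothesis), this set is closed under one iteration and suffices to run Step~1 and Step~2. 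In particular the cross term $p^Tp'$ is never recomputed from scratch: it is read off in $O(1)$ from $d^2(p,p')=\|p\|^2-2p^Tp'+\|p'\|^2$.

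For (i), by the equivalent form \eqref{eq2} a point $\widehat v_j$ is a $p$-pivot iff $\widehat v_j^Tp-\widehat v_j^Tp'\ge \tfrac12(\|p\|^2-\|p'\|^2)$; every term on both sides is already available, so the test costs $O(1)$ per candidate and $O(N)$ to scan all of $\widehat S$ (stopping at the first pivot, or else certifying $p'$ as a witness). For (ii), formula \eqref{eq6} for $\alpha$ involves only $p^T\widehat v_j$, $p'^T\widehat v_j$, $p^Tp'$, $\|p'\|^2$ and $\|\widehat v_j\|^2=\widehat M_{jj}$, all of which are $O(1)$ look-ups, so $\alpha$ costs $O(1)$. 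For (iii), the update rule \eqref{eq9}, namely $\alpha'_j=(1-\alpha)\alpha_j+\alpha$ and $\alpha'_i=(1-\alpha)\alpha_i$ for $i\ne j$, rewrites the length-$N$ coefficient vector in $O(N)$.

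For (iv) I would expand $\|p''\|^2=(1-\alpha)^2\|p'\|^2+2\alpha(1-\alpha)\,p'^T\widehat v_j+\alpha^2\|\widehat v_j\|^2$, which is $O(1)$ given the maintained scalars and $\widehat M_{jj}$; at the same time $d(p,p'')$ is refreshed in $O(1)$, e.g.\ from $\|p''\|^2$, $\|p\|^2$ and $p^Tp''=(1-\alpha)p^Tp'+\alpha\,p^T\widehat v_j$ (or directly from the quantities in Theorem~\ref{thm2}). For (v), writing $p''^T\widehat v_i=(1-\alpha)\,p'^T\widehat v_i+\alpha\,\widehat v_j^T\widehat v_i$ for each $i$, the first summand is stored and the second is the entry $\widehat M_{ji}$, so the whole vector is refreshed in $O(N)$ --- and this is the only place an entire row of $\widehat M$ is consulted. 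Combining (i)--(v) with the $O(1)$ Step~1 test and the trivial Step~3, one pass of the algorithm costs $O(N)$, which is the claim.

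There is no deep obstacle here; the only thing that needs care is the accounting --- checking that the invariant list of maintained quantities is genuinely self-updating at the stated costs, that $p^Tp'$ is always obtained from the tracked distance rather than by an $O(N)$ dot product, and that consulting a row of $\widehat M$ is confined to step (v), consistent with the hypothesis that $\widehat M$ is supplied. One should also note, after the update, that $p''=Nearest(p;p'\widehat v_j)$ with $\widehat v_j$ a pivot implies $d(p,p'')\le d(p,p')\le\min_i d(p,\widehat v_i)$, so the next iteration again meets the hypothesis of Proposition~\ref{prop1} and the recursion is legitimate.
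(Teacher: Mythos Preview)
Your proposal is correct and follows essentially the same approach as the paper: both verify (i)--(v) by writing out the pivot test \eqref{eq2}, the step-size formula \eqref{eq6}, the coefficient update \eqref{eq9}, the expansion $\|p''\|^2=(1-\alpha)^2\|p'\|^2+2\alpha(1-\alpha)p'^T\widehat v_j+\alpha^2\|\widehat v_j\|^2$, and the inner-product update $p''^T\widehat v_i=(1-\alpha)p'^T\widehat v_i+\alpha\,\widehat M_{ji}$. You are a bit more careful than the paper about how $p^Tp'$ is obtained in $O(1)$ (the paper's proof simply appeals to ``the assumptions'' in part (ii) without isolating this term), and your closing remark that the hypothesis $d(p,p'')\le\min_i d(p,\widehat v_i)$ is preserved is a welcome addition the paper omits.
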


\begin{proof} The Triangle Algorithm needs to use the entries of the $N \times N$ matrix $\widehat M=(\widehat v_i^T \widehat v_j)$. However, not all entries may be needed, nor do all entries of $\widehat M$ need to be computed in advance. Putting aside  the complexity of computing $\widehat M$, in  the following we justify the claimed complexities.

(i): From (\ref{eq2}) and the given assumptions, to check if a particular $\widehat  v_i$ is  a pivot takes $O(1)$ operations. Thus to check if there exists a pivot takes $O(N)$ time.

(ii): From (\ref{eq6}) and the assumptions, to compute $\alpha$ takes $O(1)$ operations.

(iii): From equations (\ref{eq8}) and (\ref{eq9}) the computation of $p''$ and its representation
as $p''=\sum_{i=1}^N \alpha'_i \widehat v_i$ takes $O(N)$ operations.

(iv): Since $p''=(1- \alpha)p' + \alpha \widehat v_j$, we have
\begin{equation} \Vert p'' \Vert^2= p''^Tp''= (1-\alpha)^2 \Vert p' \Vert^2 + 2\alpha (1- \alpha) p'^T \widehat v_j + \alpha^2 \Vert \widehat v_j \Vert ^2.\end{equation}
It follows that  computing $\Vert p'' \Vert ^2$ takes $O(1)$ operations.

(v): Using that  $p''=(1- \alpha)p' + \alpha \widehat v_j$, the computation of $p''^T \widehat v_i$ takes $O(1)$ computations. Hence to compute all inner products  $p''^T \widehat v_i$, $i=1, \dots, N$   takes $O(N)$ computations.
\end{proof}

The following theorem combines Theorem \ref{thm3} and Proposition \ref{prop2} giving an improved complexity for the Triangle Algorithm.

\begin{thm}  \label{thm6}   Let $\widehat S= \{ \widehat v_1, \dots, \widehat v_N\}$ be a subset  of $S=\{v_1, \dots, v_n\}$. Given $p \in \mathbb{R}^m$, consider testing if $p \in conv( \widehat S)$.  Suppose $\Vert p \Vert ^2$  as well as $p^T \widehat v_i$, $i=1, \dots, N$ are computed.
Given $\varepsilon \in (0, 1)$, assume the Triangle Algorithm starts with $p' = {\rm argmin} \{ d(\widehat v_i, p) : i=1, \dots, N\}$. Then the complexity of testing if there exists an $\varepsilon$-approximate solution is
\begin{equation} \label{eqAA}
O \bigg (m N^2+ \frac{N}{\varepsilon^2} \bigg).
\end{equation}

In particular, suppose in testing if $p \in conv(S)$, $S=\{v_1, \dots, v_n\}$, the Triangle Algorithm computes an $\varepsilon$-approximate solution $p_\varepsilon$ by examining only the elements of a subset $\widehat S= \{ \widehat v_1, \dots, \widehat v_N\}$ of $S$.  Then the  number of operations to determine if  there exists an $\varepsilon$-approximate solution  $p_\varepsilon \in conv(S)$, is  as stated in (\ref{eqAA}). $\qed$
\end{thm}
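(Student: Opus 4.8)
\emph{Proof proposal.} The plan is to combine the per-iteration cost bound of Proposition~\ref{prop2} with the iteration-count bound of Theorem~\ref{thm3}, while separately amortizing the cost of forming the entries of $\widehat M=(\widehat v_i^T\widehat v_j)$ that are actually touched during the run. I would first fix the bookkeeping: at initialization compute the diagonal entries $\Vert\widehat v_i\Vert^2$ (cost $O(mN)$), use these together with the given $p^T\widehat v_i$ and $\Vert p\Vert^2$ to locate $p'=\mathrm{argmin}\{d(\widehat v_i,p)\}$, set $\Vert p'\Vert^2=\Vert\widehat v_k\Vert^2$ (where $\widehat v_k=p'$), and compute the array $p'^T\widehat v_i=\widehat v_k^T\widehat v_i$ (one row of $\widehat M$, cost $O(mN)$). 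From then on the data carried between iterations are the barycentric coordinates $(\alpha_i)$, the scalar $\Vert p'\Vert^2$, and the array $(p'^T\widehat v_i)_{i=1}^N$; by Proposition~\ref{prop2}(i)--(v) each iteration refreshes these and either produces a $p$-pivot or declares $p'$ a witness in $O(N)$ arithmetic operations, excluding only the entries of $\widehat M$ consumed in the update of the array $(p'^T\widehat v_i)$.

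The crux, and the step I expect to require the most care, is accounting for those excluded $\widehat M$-entries. By Proposition~\ref{prop2}(v), an iteration that pivots on $\widehat v_j$ uses exactly the one row $\{\widehat v_j^T\widehat v_i:i=1,\dots,N\}$, needed to evaluate $p''^T\widehat v_i=(1-\alpha)\,p'^T\widehat v_i+\alpha\,\widehat v_j^T\widehat v_i$. Such a row costs $O(mN)$, but once computed it is stored and never recomputed, so over the whole execution at most $N$ distinct rows are formed; hence the total cost charged to $\widehat M$ is $O(mN^2)$, which also absorbs the $O(mN)$ spent at initialization. The point to emphasize is that the naive bound ``$O(mN)$ per iteration $\times$ number of iterations'' would give the weaker $O(mN/\varepsilon^2)$; the sharpening to $O(mN^2)$ rests solely on the fact that each vertex supplies at most one row of $\widehat M$, however many iterations reuse it.

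Finally I would bound the number of iterations by $O(1/\varepsilon^2)$ via Theorem~\ref{thm3}. The choice $p'=\mathrm{argmin}\{d(\widehat v_i,p)\}$ gives $d(p',p)=\min_i d(p,\widehat v_i)$, and since every subsequent iterate is $Nearest(p;p'\widehat v_j)$, the inequality $d(p',p)\le\min_i d(p,\widehat v_i)$ persists (Proposition~\ref{prop1}), which is exactly the hypothesis that validates the gap recursion of Theorem~\ref{thm2} and hence Theorem~\ref{thm3}. If $p\in conv(\widehat S)$ then $d(p,\widehat v_i)\le R$ for all $i$ (expand $p$ as a convex combination of the $\widehat v_j$'s), so Theorem~\ref{thm3}(i) returns an $\varepsilon$-approximate solution within $48/\varepsilon^2$ iterations; if $p\notin conv(\widehat S)$ and no $\varepsilon$-approximate solution exists, then $\Delta=\min\{d(x,p):x\in conv(\widehat S)\}>\varepsilon R$, so Theorem~\ref{thm3}(ii) returns a witness within $48R^2/\Delta^2<48/\varepsilon^2$ iterations. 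Thus the run performs $O(1/\varepsilon^2)$ iterations at $O(N)$ each, i.e.\ $O(N/\varepsilon^2)$, and adding the $O(mN^2)$ matrix cost yields $O(mN^2+N/\varepsilon^2)$. The ``in particular'' clause then follows at once: if the Triangle Algorithm run on $S$ never inspects or pivots on anything outside a subset $\widehat S$ of size $N$, the same analysis applies verbatim with that $N$, giving the same operation count.
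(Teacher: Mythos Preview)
Your proposal is correct and follows exactly the approach the paper indicates: combine the $O(1/\varepsilon^2)$ iteration bound of Theorem~\ref{thm3} with the $O(N)$ per-iteration cost of Proposition~\ref{prop2}, while amortizing the at most $N$ rows of $\widehat M$ that are ever touched to obtain the $O(mN^2)$ term. The paper's own proof is only the one-sentence remark preceding the statement together with the \qed\ symbol, so your write-up is in fact more detailed than what appears there; the one small case you do not spell out, namely $p\notin conv(\widehat S)$ with $0<\Delta\le \varepsilon R$, is harmless since the gap recursion of Theorem~\ref{thm2} still forces termination (either an $\varepsilon$-approximate iterate or a witness) within $48/\varepsilon^2$ pivots.
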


\section{All Vertex Triangle Algorithm (AVTA)} \label{AVTASEC}
Given $S= \{v_i \in \mathbb{R} ^m: i=1, \dots, n\}$, let $R$ be its diameter, i.e. $R= \max \{d(v_i, v_j), v_i,v_j \in S\}$.  Denote the set of vertices of $conv(S)$ by
\begin{equation} \overline S =\{\overline v_1, \dots, \overline v_K\}. \end{equation}
A straightforward but naive way to compute $\overline S$  is to test for each $v_i$ if it lies in $conv(S \setminus \{v_i\})$, to within an $\varepsilon$ precision.  Thus the overall this would take $n$ times the complexity of Triangle Algorithm.   This is inefficient. In what follows we describe a modification of the Triangle Algorithm with more efficient complexity than the straightforward algorithm. First we give a definition.

\begin{definition}  \label{defAA} We say $conv(S)$ is $\Gamma_*$-{\it robust} if
\begin{equation}
\Gamma_*= \min \{ d(\overline v_i,  conv(\overline S \setminus \{\overline v_i\})):  i=1, \dots, K \}.
\end{equation}
\end{definition}

As an example, given a triangle with vertices $v_1, v_2, v_3$,  $\Gamma_*$ is the minimum of the distances from each vertex to the line segment determined by the other vertices.  Thus if other points are placed inside the triangle $\Gamma_*$ will not be affected.

The following is immediate from Definition \ref{defAA}.

\begin{prop}  \label{prop3}   Let  $\widehat S= \{\widehat v_1, \dots, \widehat v_N \}$ be a subset of $\overline S$.  Suppose $conv(S)$ is $\Gamma_*$-robust.  Given $v \in S \setminus \widehat S$,  if for some $\gamma \leq \gamma_* \equiv \Gamma_*/R$ we have
$d (v, conv(\widehat S)) < \gamma R$, then
$v \not \in \overline S$. $\qed$
\end{prop}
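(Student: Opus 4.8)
The statement is essentially a contrapositive argument about distances, so the plan is to argue directly. Suppose, for contradiction, that $v \in \overline S$, i.e., $v$ is a vertex of $conv(S)$, yet $d(v, conv(\widehat S)) < \gamma R$ for some $\gamma \le \gamma_* = \Gamma_*/R$. The key observation is that $\widehat S \subseteq \overline S$ and $v \notin \widehat S$, so $\widehat S \subseteq \overline S \setminus \{v\}$, and hence $conv(\widehat S) \subseteq conv(\overline S \setminus \{v\})$. Distances to a subset of a convex set are no smaller than distances to the larger set, so
\begin{equation}
d(v, conv(\overline S \setminus \{v\})) \le d(v, conv(\widehat S)) < \gamma R \le \gamma_* R = \Gamma_*.
\end{equation}
But $v = \overline v_i$ for some $i \in \{1, \dots, K\}$ since $v$ is a vertex, and by Definition \ref{defAA} the $\Gamma_*$-robustness of $conv(S)$ gives $d(\overline v_i, conv(\overline S \setminus \{\overline v_i\})) \ge \Gamma_*$. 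This contradicts the strict inequality above, so $v \notin \overline S$.

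The only mild subtlety to be careful about is the inclusion chain: one must note that $\widehat S$ being a subset of $\overline S$ together with $v \in S \setminus \widehat S$ does \emph{not} by itself force $v \notin \widehat S$ in a vacuous way — it is immediate since $v \notin \widehat S$ by hypothesis — and crucially that if $v$ happens to be a vertex then $v \in \overline S$, so removing $v$ from $\overline S$ still leaves all of $\widehat S$ intact (as $\widehat S \subseteq \overline S$ and $v \notin \widehat S$). After that the argument is just monotonicity of the point-to-set distance under set inclusion.

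I do not expect any real obstacle here; the proposition is a direct definitional consequence. The one thing worth stating explicitly in the write-up is the monotonicity fact $A \subseteq B \implies d(x, conv(A)) \ge d(x, conv(B))$, which follows because $conv(A) \subseteq conv(B)$ and the infimum defining the distance is taken over a larger set. Everything else is substitution of the hypothesis $\gamma \le \gamma_*$ and $d(v, conv(\widehat S)) < \gamma R$ into the robustness inequality.
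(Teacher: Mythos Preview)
Your proof is correct and is precisely the unpacking of what the paper means when it declares the proposition ``immediate from Definition~\ref{defAA}'' (the paper gives no further argument). Your contrapositive via the inclusion $conv(\widehat S)\subseteq conv(\overline S\setminus\{v\})$ and monotonicity of the point-to-set distance is exactly the intended reasoning.
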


\begin{thm}  \label{thm7} Let $\widehat S= \{ \widehat v_1, \dots, \widehat v_N\}$ be a subset of $\overline S$.  Given $\gamma \in (0,1)$,  consider testing if a given  $v \in S \setminus \widehat S$  satisfies $d(v, conv(\widehat S)) \leq \gamma R /2$. Suppose  we are given $p' \in   conv( \widehat S)$ for which $\Vert p' \Vert ^2$  as well as $p'^T \widehat v_i$, $i=1, \dots, N$ are computed. Then the number of operations to check if $d(v, conv(\widehat S)) \leq \gamma R/2$ satisfies
\begin{equation}O \bigg(m K^2+ \frac{K}{\gamma^2} \bigg ).\end{equation}
\end{thm}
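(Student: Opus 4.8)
The plan is to turn the question ``is $d(v,conv(\widehat S))\le\gamma R/2$?'' into a single run of the efficient Triangle Algorithm of Section~\ref{EffTA} applied to the point set $\widehat S$ with distinguished point $p=v$ and a precision parameter matched to the threshold $\gamma R/2$. Concretely, I would first compute $\Vert v\Vert^2$ (cost $O(m)$) and the $N$ inner products $v^T\widehat v_i$ (cost $O(mN)$), and then invoke Theorem~\ref{thm6} with $N=|\widehat S|$, input set $\widehat S$, target point $v$, and $\varepsilon=\gamma/2$; note $\varepsilon\in(0,1)$ since $\gamma\in(0,1)$. The run is started from an iterate $p_0$ satisfying the hypothesis $d(p_0,v)\le\min_i d(v,\widehat v_i)$ of Proposition~\ref{prop2} — e.g.\ $p_0={\rm argmin}\{d(\widehat v_i,v):i=1,\dots,N\}$, whose required inner products are a row of the Gram matrix together with the already-computed $v^T\widehat v_i$, and the supplied $p'\in conv(\widehat S)$ with $\Vert p'\Vert^2$ and the $p'^T\widehat v_i$ is exactly the state that keeps each subsequent iteration at $O(N)$. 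Writing $R_{\widehat S}$ for the diameter of $\widehat S$, we have $R_{\widehat S}\le R$, so any $\varepsilon$-approximate solution $p_\varepsilon$ returned satisfies $d(p_\varepsilon,v)\le\varepsilon R_{\widehat S}\le(\gamma/2)R=\gamma R/2$, certifying $d(v,conv(\widehat S))\le\gamma R/2$; otherwise the run terminates at Step~3 with a $v$-witness, i.e.\ a hyperplane separating $v$ from $conv(\widehat S)$.

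For the running time I would quote Theorem~\ref{thm6} verbatim: with $\Vert v\Vert^2$ and the $v^T\widehat v_i$ in hand, the entire run (including the $O(mN^2)$ cost of the Gram matrix $\widehat M=(\widehat v_i^T\widehat v_j)$ charged by Proposition~\ref{prop2}) costs $O\!\big(mN^2+N/\varepsilon^2\big)=O\!\big(mN^2+N/\gamma^2\big)$; the two pre-processing terms $O(m)$ and $O(mN)$ are absorbed. Since $\widehat S\subseteq\overline S$ and $|\overline S|=K$, we have $N\le K$, which yields the claimed bound $O\!\big(mK^2+K/\gamma^2\big)$.

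The one place needing genuine care is the bookkeeping that connects Theorem~\ref{thm6} to the present statement. First, $R_{\widehat S}\le R$ is exactly what converts an $\varepsilon$-approximate solution at $\varepsilon=\gamma/2$ into a $\gamma R/2$ certificate. Second, one should check the iteration count is $O(1/\varepsilon^2)$ \emph{uniformly} over the cases $v\in conv(\widehat S)$ and $v\notin conv(\widehat S)$: the former is Theorem~\ref{thm3}(i); in the latter, if $\Delta=d(v,conv(\widehat S))>\varepsilon R_{\widehat S}$ then Theorem~\ref{thm3}(ii) already produces a witness within $48R_{\widehat S}^2/\Delta^2<48/\varepsilon^2$ iterations, while if $\Delta\le\varepsilon R_{\widehat S}$ the gap-reduction of Theorem~\ref{thm2}, together with the distance duality of Theorem~\ref{thm1}, drives the iterate to gap $\le\varepsilon R_{\widehat S}$ within $O(1/\varepsilon^2)$ steps. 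Finally, if a genuinely two-sided decision is wanted rather than the algorithm's native ``certificate-or-witness'' output, the witness case can be read against the threshold via the bound $d(v,conv(\widehat S))\ge\tfrac12 d(v,p')$ that a witness $p'$ supplies (and, if one wishes, by using $\varepsilon=\gamma R/(2R_{\widehat S})$ when this lies in $(0,1)$, which changes nothing in the $O$-bound since $1/\varepsilon^2=\Theta(1/\gamma^2)$). I expect pinning down precisely what a ``no'' answer guarantees to be the only subtle point; the complexity bound itself is an immediate consequence of Theorem~\ref{thm6} and $N\le K$.
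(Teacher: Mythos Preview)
Your proposal is correct and follows exactly the paper's approach: the paper's proof is literally the single sentence ``Proof is immediate from Theorem~\ref{thm6} and the fact that $N\le K$,'' and you have simply unpacked that sentence, supplying the substitution $\varepsilon=\gamma/2$, the preprocessing costs, and the diameter comparison $R_{\widehat S}\le R$. The extra bookkeeping you add (iteration count uniform in the two cases, interpretation of the witness output) is more careful than what the paper itself provides, but the core argument is identical.
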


\begin{proof} Proof is immediate from  Theorem \ref{thm6} and the fact that $N \leq K$.
\end{proof}

We now describe a modification of the Triangle Algorithm for computing all vertices of $conv(S)$. We call this {\it All Vertex Triangle Algorithm} or simply AVTA.  Assume $conv(S)$ is $\Gamma_*$-robust, where $\Gamma_*$ may or may not be available.  However, assume we have a constant $\gamma \in (0,1)$ known to satisfy $\gamma \leq \gamma_*=\Gamma_*/R$.  AVTA  works as follows.  Given a working subset $\widehat S$ of $\overline S$,  initially of cardinality $N=1$ (see Proposition \ref{prop4}),  a single vertex of $S$, it randomly selects $v \in S \setminus \widehat S$. It then tests via the Triangle Algorithm if $d(v, conv(\widehat S)) \leq \gamma R/2$. If so, it discards $v$ since by definition of $\gamma$ it cannot belong to $\overline S$ (see Proposition \ref{prop3}). Otherwise,  it computes a $v$-witness $p' \in conv(\widehat S)$. It then sets $c'=v-p'$ and maximizes $c'^Tx$ where $x$ ranges in $conv(S \setminus \widehat S)$. The maximum  value coincides with the maximum of  $c'^Tv_i$ where $v_i$ ranges  in $S \setminus \widehat S$.  If the set of optimal solution $S \setminus \widehat S$ is denoted by $S'$,  then $conv(S')$ is a face of $conv(S)$.  A vertex $v'$ of $conv(S')$ is a point in $S'$ and  is  necessarily a vertex of $conv(S)$.  Such a vertex can be computed efficiently.  Having computed a new vertex $v'$ of $conv(S)$, AVTA replaces  $\widehat S$ with  $\widehat S \cup \{v'\}$ and the process is repeated. However,  if $v$ coincides with $v'$  AVTA selects a new point in $S \setminus \widehat S$.  Otherwise,  AVTA continues to test if the same $v$ (for which a witness was found) is within a distance of $\gamma R/2$ of the convex hull of the augmented set $\widehat S$. Also, as an iterate AVTA uses the same witness $p'$.
In doing so  each selected $v \in S$  is either determined to be a vertex itself, or it will continue to be tested if it is lies to within a distance of $\gamma R/2$ of the growing set $\widehat S$. If within $\gamma R/2$ distance, it will be discarded before AVTA tests another point. We will describe AVTA more precisely.  However, we first prove the necessary results.

\begin{lemma}  \label{lem1}  Let $\widehat S= \{ \widehat v_1, \dots, \widehat v_N\}$ be a subset of $\overline S$. For a given $v \in S \setminus  \widehat S$ suppose $p' \in conv(\widehat S)$ is a $v$-witness.  Let $c'=v-p'$.  Then
\begin{equation}\max \{c'^T x:  x \in conv(S \setminus  \widehat S) \}= \max \{c'^T v_i:  v_i \in S \setminus  \widehat S \}.\end{equation}
\end{lemma}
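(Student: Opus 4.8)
The plan is to recognize Lemma~\ref{lem1} as an instance of the elementary fact from linear programming that a linear functional attains its maximum over the convex hull of a finite set of points at one of those points. In particular the claimed identity holds for \emph{every} vector $c'$: the hypotheses that $p'$ is a $v$-witness and that $c' = v - p'$ are irrelevant to the equality itself, and only matter for how $c'$ is produced inside AVTA and for the subsequent assertion (outside the scope of this lemma) that the optimal face $conv(S')$ contains a vertex of $conv(S)$ lying outside $\widehat S$.

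First I would set up notation: write $T = S \setminus \widehat S$, noting that $T \neq \emptyset$ since $v \in T$, so both sides are maxima of linear functions over nonempty finite sets, hence attained. Let $M = \max\{c'^T v_i : v_i \in T\}$, attained at some $v_{i_0} \in T$. To prove $\max\{c'^T x : x \in conv(T)\} \le M$, take an arbitrary $x \in conv(T)$, write $x = \sum_i \lambda_i v_i$ with the $v_i \in T$, $\lambda_i \ge 0$, and $\sum_i \lambda_i = 1$; then by linearity of the inner product $c'^T x = \sum_i \lambda_i\, (c'^T v_i) \le \sum_i \lambda_i M = M$. For the reverse inequality, observe simply that $v_{i_0} \in T \subseteq conv(T)$, so $\max\{c'^T x : x \in conv(T)\} \ge c'^T v_{i_0} = M$. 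Combining the two inequalities yields the equality.

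There is essentially no obstacle here; the only point deserving a word of care is that both optimization problems are well posed, i.e. $S \setminus \widehat S$ is nonempty, which holds because the given $v$ belongs to it. If desired, one can additionally record that when $p'$ is a $v$-witness one has $c' \neq 0$ and $c'^T v > c'^T p'$, with $c'^T p'$ a convex combination of the values $c'^T \widehat v_j$ over $\widehat v_j \in \widehat S$; this is the extra ingredient the surrounding argument later invokes to guarantee that the optimal face $conv(S')$ contributes a vertex of $conv(S)$ not already in $\widehat S$, but it is not needed for Lemma~\ref{lem1} as stated.
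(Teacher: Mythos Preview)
Your proof is correct and follows essentially the same approach as the paper: write an arbitrary $x \in conv(S \setminus \widehat S)$ as a convex combination of the $v_i$, use linearity of $c'^T(\cdot)$, and conclude the maximum is attained at some $v_i$. Your version is slightly more explicit (you handle both inequalities and note that $v \in S \setminus \widehat S$ ensures nonemptiness), and you are right that the witness hypothesis plays no role in the equality itself.
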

\begin{proof}  Each $x  \in conv(S \setminus  \widehat S)$ can be written as a convex combination
\begin{equation}x = \sum_{i: v_i \in  S \setminus  \widehat S} \alpha_i v_i, \quad \sum_{i: v_i \in  S \setminus  \widehat S} \alpha_i =1, \quad \alpha_i \geq 0.\end{equation}
Then
\begin{equation}c'^T x= \sum_{i: v_i \in  S \setminus  \widehat S} \alpha_i c'^T v_i, \quad \sum_{i: v_i \in  S \setminus  \widehat S} \alpha_i =1, \quad \alpha_i \geq 0.\end{equation}
It follows that the maximum of $c'^Tx$ over $S \setminus  \widehat S$ can be computed trivially.
\end{proof}

\begin{cor} Let $c'=v-p'$ be as in Lemma \ref{lem1}, $p' \in   conv( \widehat S)$ for which $\Vert p' \Vert ^2$  as well as $p'^T \widehat v_i$, $i=1, \dots, N$ are computed. Then,
$\max \{c'^T x:  x \in conv(S \setminus  \widehat S) \}$ can be computed in $O(nK)$ operations.
\end{cor}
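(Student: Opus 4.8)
The plan is to reduce the optimization to a finite maximization via Lemma~\ref{lem1} and then bound the cost of each scalar evaluation by exploiting the stored convex representation of the iterate $p'$, in the same bookkeeping style as Section~\ref{EffTA}.

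First I would invoke Lemma~\ref{lem1} to replace $\max\{c'^T x : x \in conv(S\setminus\widehat S)\}$ by $\max\{c'^T v_i : v_i \in S\setminus\widehat S\}$. Since $|S\setminus\widehat S|\le n$, once all the scalars $c'^T v_i$ are available, the maximum (and a maximizer) is found with $O(n)$ further comparisons, so the claim reduces to showing that each $c'^T v_i$ can be produced in $O(K)$ operations.

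Next I would write $c'^T v_i = v^T v_i - p'^T v_i$ and use that the Triangle Algorithm keeps $p'$ only as the coefficient array of a convex combination $p'=\sum_{j=1}^N \alpha_j \widehat v_j$ with $\widehat S\subseteq\overline S$, hence $N\le K$ (cf.\ the Remark following Theorem~\ref{thm3}). Then $p'^T v_i=\sum_{j=1}^N \alpha_j(\widehat v_j^T v_i)$ is an $O(N)=O(K)$ dot product, while $v^T v_i$ is a single entry. As in Proposition~\ref{prop2}, the inner products $\widehat v_j^T v_i$ and $v^T v_i$ are treated as entries of the Gram matrix of $S$ that are generated progressively and whose cumulative cost (at most $O(mn^2)$ overall) is excluded from this count; in particular the products $v^T\widehat v_j$ were already produced when $p'$ was certified as a $v$-witness (Proposition~\ref{prop2} applied to $v$ and $\widehat S$). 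Summing the $O(K)$ per-index bound over the at most $n$ indices yields the stated $O(nK)$.

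The step that needs the most care is the last one: making the accounting honest, i.e.\ verifying that no inner product consumed here falls outside the Gram-matrix budget already amortized in Section~\ref{EffTA}, and that the standing hypotheses (that $\Vert p'\Vert^2$ and the $p'^T\widehat v_i$ are known, together with the witness step having been performed) really do make every required quantity available. This is not a deep obstacle, but it is precisely where the $O(nK)$ bound, rather than the naive $O(nm)$, is earned, so it is the point I would write out most carefully.
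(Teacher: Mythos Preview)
Your proposal is correct and follows the same route as the paper's proof, which is the one-line observation that $N\le K$ implies each $c'^T v_i$ costs $O(K)$ once the Gram-type inner products are treated as precomputed, exactly as you argue via the representation $p'=\sum_{j=1}^N\alpha_j\widehat v_j$. Your explicit bookkeeping of which inner products are inherited from the witness step and which are charged to the progressively built Gram matrix is more careful than the paper's terse statement, but it is the same argument.
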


\begin{proof} Since $N \leq K$, for each $i$, $c'^Tv_i$ can be computed in $O(K)$ operations.
\end{proof}

\begin{thm}  \label{thm8} Let $S'$ be the set of optimal solutions of $\max \{c'^T x:  x \in S \setminus  \widehat S \}$. Let $v' \in S'$ be a vertex of $conv(S')$.  Then $v'$ is a vertex of $conv(S)$, i.e. $v \in \overline S =\{ \overline v_1, \dots, \overline v_K\}$.
\end{thm}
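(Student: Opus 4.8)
The plan is to show that $conv(S')$ is a face of the \emph{whole} polytope $conv(S)$ --- not merely of $conv(S\setminus\widehat S)$ --- and then to invoke the elementary fact that every vertex of a face of a polytope is a vertex of the polytope. The only place where the Triangle Algorithm structure is really used is in promoting the face of $conv(S\setminus\widehat S)$ exposed by $c'$ to a face of $conv(S)$, so that is where I would start. Recall from the construction preceding the statement that $c'=v-p'$ where $p'\in conv(\widehat S)$ is a $v$-witness and $v\in S\setminus\widehat S$.

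First I would extract the quantitative content of "$p'$ is a $v$-witness''. By Definition~\ref{witness}, in the form (\ref{eq3}), $d(p',\widehat v_i)<d(v,\widehat v_i)$ for every $\widehat v_i\in\widehat S$; expanding the squared norms and cancelling $\Vert\widehat v_i\Vert^2$ gives $2(v-p')^T\widehat v_i<\Vert v\Vert^2-\Vert p'\Vert^2$, i.e. $c'^T\widehat v_i<\frac12(\Vert v\Vert^2-\Vert p'\Vert^2)$. On the other hand $c'^T v=\Vert v\Vert^2-p'^Tv$, and $(\Vert v\Vert^2-p'^Tv)-\frac12(\Vert v\Vert^2-\Vert p'\Vert^2)=\frac12\Vert v-p'\Vert^2>0$, the strictness following because the bisecting hyperplane of the segment $vp'$ separates $v$ from $conv(\widehat S)\ni p'$, so $v\neq p'$. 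Chaining the two inequalities yields $c'^T\widehat v_i<c'^T v$ for every $\widehat v_i\in\widehat S$. Hence, setting $M:=\max\{c'^T x:x\in S\setminus\widehat S\}$ (which by Lemma~\ref{lem1} is also $\max\{c'^Tx:x\in conv(S\setminus\widehat S)\}$), no point of $\widehat S$ can achieve $c'^T x\geq M$, so $M=\max\{c'^T x:x\in S\}$ and the optimal set $S'$ is unchanged whether the maximum is taken over $S\setminus\widehat S$ or over all of $S$.

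Next I would identify $conv(S')$ as an exposed face of $conv(S)$. Let $H=\{x:c'^T x=M\}$. The inclusion $conv(S')\subseteq conv(S)\cap H$ is immediate since $S'\subseteq S$ and $S'\subseteq H$ and $H$ is convex. Conversely, writing any $x\in conv(S)\cap H$ as $x=\sum_{v_i\in S}\beta_i v_i$ with $\beta_i\geq0$, $\sum\beta_i=1$, the chain $M=c'^Tx=\sum\beta_i c'^T v_i\leq\sum\beta_i M=M$ forces $\beta_i(M-c'^T v_i)=0$ for every $i$, so every $v_i$ with $\beta_i>0$ lies in $S'$; thus $x\in conv(S')$ and $conv(S')=conv(S)\cap H$. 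Finally, to finish, suppose $v'=\lambda a+(1-\lambda)b$ with $a,b\in conv(S)$ and $\lambda\in(0,1)$; since $c'^T a\leq M$, $c'^T b\leq M$ and $c'^T v'=M$, both $a$ and $b$ lie in $conv(S)\cap H=conv(S')$, and because $v'$ is by hypothesis a vertex of $conv(S')$ this forces $a=b=v'$. Therefore $v'$ is a vertex of $conv(S)$, and as $v'\in S'\subseteq S$ it is one of $\overline v_1,\dots,\overline v_K$, i.e. $v'\in\overline S$. The only non-routine step is the first one --- using the witness to guarantee that the maximizing face along $c'$ does not grow when the points of $\widehat S$ are reinserted; everything afterwards is standard polytope/LP bookkeeping.
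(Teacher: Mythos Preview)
Your proof is correct and follows the same underlying idea as the paper's: use the linear functional $c'$ to show that $v'$, once written as a convex combination of points of $S$, can only involve points in $S'$, whence being a vertex of $conv(S')$ forces $v'$ to be one of the $\overline v_i$. The paper does this by writing $v'=\sum_i\alpha_i\overline v_i$ directly and arguing that $\alpha_i=0$ whenever $\overline v_i\notin S'$; you instead phrase it geometrically by identifying $conv(S')=conv(S)\cap H$ as an exposed face and invoking the fact that vertices of a face are vertices of the polytope. These are two packagings of the same argument.

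One point worth noting: the paper simply asserts that ``for $v_i\notin S'$, $c'^Tv'>c'^Tv_i$'' without separating out the case $v_i\in\widehat S$, whereas you make this step explicit by expanding the witness inequality $d(p',\widehat v_i)<d(v,\widehat v_i)$ to obtain $c'^T\widehat v_i<\tfrac12(\Vert v\Vert^2-\Vert p'\Vert^2)<c'^Tv\leq M$. That is exactly the place where the witness property enters, and your derivation of it is clean. So your write-up is in fact more complete than the paper's on this point, while the overall strategy is the same.
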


\begin{proof} We can write $v'$ as a  convex combination of $\overline v_i$, $i=1, \dots, K$:
\begin{equation}v'= \sum_{i=1}^K \alpha_i \overline v_i, \quad \sum_{i=1}^K \alpha_i =1, \quad \alpha_i \geq 0, \quad \forall i.\end{equation}
The above can be rewritten as
\begin{equation}v'= \sum_{i:  \overline v_i \in S'} \alpha_i \overline v_i  + \sum_{i:  \overline v_i \not \in S'} \alpha_i \overline v_i  .\end{equation}
Since for $v_i \in S'$, $c'^T v' = c'^Tv_i$ and for $v_i \not \in S'$, $c'^Tv' > c'^T v_i$, it follows that $v'$ is a convex combination of $ \overline v_i$ for which $c'^T v'= c'^T \overline v_i$.  But since $v'$ is a vertex of $conv(S')$ it follows that $v' \in \overline S$.
\end{proof}

The following shows computing a  single vertex of $conv(S)$ is trivial.

\begin{prop} \label{prop4}  Given any $v$ in $S$, let $Farthest(v, S)$ return a point in $S$  that is farthest from $v$.  Then $Farthest(v,S)$ is a vertex of $conv(S)$, hence a member of $\overline S$.
\end{prop}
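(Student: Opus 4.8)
The plan is to show that $w := \mathit{Farthest}(v,S)$ is an extreme point of $conv(S)$; since $w \in S$, an extreme point of $conv(S)$ that belongs to $S$ is by definition a vertex, i.e. a member of $\overline S$, which is exactly the assertion.

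First I would record the elementary geometric fact underlying the statement: by definition of $\mathit{Farthest}$, every $v_i \in S$ satisfies $d(v,v_i) \le d(v,w) =: \rho$, so $S$, and hence $conv(S)$, lies in the closed ball $B = \{x : \Vert x - v \Vert \le \rho\}$, while $w$ lies on its bounding sphere. Next I would exhibit an explicit supporting hyperplane of $conv(S)$ at $w$. Put $c = w - v$ and $H = \{x : c^T x = c^T w\}$. For any $v_i \in S$, writing $v_i - v = c + (v_i - w)$ and expanding, $\Vert v_i - v \Vert^2 = \Vert c \Vert^2 + 2 c^T(v_i - w) + \Vert v_i - w \Vert^2 \le \Vert c \Vert^2 = \rho^2$, which yields $c^T(v_i - w) \le -\tfrac12 \Vert v_i - w \Vert^2 \le 0$, with equality throughout if and only if $v_i = w$. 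Hence $c^T v_i \le c^T w$ for all $i$, so $H$ supports $conv(S)$, $\max\{c^T x : x \in conv(S)\} = c^T w$, and the only point of $S$ attaining this maximum is $w$. Since a point $x = \sum_i \alpha_i v_i \in conv(S)$ lies on $H$ only if $\alpha_i = 0$ whenever $v_i \ne w$, we get $H \cap conv(S) = \{w\}$: a zero-dimensional face, i.e. $w$ is a vertex of $conv(S)$.

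An equivalent and shorter argument, which I might write instead, uses strict convexity of $x \mapsto \Vert x - v \Vert^2$: if $w$ could be written as a convex combination $w = \sum_i \alpha_i v_i$ of points of $S$ involving two distinct points of positive weight, then $\Vert w - v \Vert^2 < \sum_i \alpha_i \Vert v_i - v \Vert^2 \le \max_i \Vert v_i - v \Vert^2 \le \Vert w - v \Vert^2$, a contradiction; hence $w \notin conv(S \setminus \{w\})$, so $w$ is a vertex.

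\textbf{Main obstacle.} There is essentially none: this is a standard fact. The only points needing a word of care are the (irrelevant) possible non-uniqueness of $\mathit{Farthest}(v,S)$ — the argument applies verbatim to any farthest point — and the routine identification of ``extreme point of $conv(S)$ that lies in $S$'' with ``vertex of $conv(S)$''.
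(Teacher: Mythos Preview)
Your proposal is correct. Your second (strict-convexity) argument is essentially the paper's own proof: the paper argues that if $Farthest(v,S)$ were not a vertex it could be written as a convex combination of two other points $v_1,v_2\in conv(S)$, and then one obtains a contradiction ``by considering the triangle with vertices $v_1,v_2,Farthest(v,S)$''---which is exactly the strict convexity of $\Vert\cdot-v\Vert^2$ that you make explicit. Your first argument, exhibiting the supporting hyperplane with normal $c=w-v$ and showing $H\cap conv(S)=\{w\}$, is a clean alternative route the paper does not take; it has the minor bonus of identifying an explicit linear functional uniquely maximized at $w$, but for the purposes of this proposition either argument suffices.
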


\begin{proof}  If $Farthest(v,S)$ is not a vertex of $conv(S)$ it can be written as a convex combination of two other points $v_1, v_2 \in conv(S)$. But then this gives a contradiction by considering the triangle with vertices $v_1,v_2, Farthest(v, S)$.
\end{proof}

When $n \geq 2$, $Farthest(v, S)$ for all $v \in S$ results in a set with at least two points but it may also contain exactly two points.  It can be computed in $O(n^2)$ time.  Next we describe AVTA for computing all vertices of $conv(S)$.

\begin{center}
\begin{tikzpicture}
\node [mybox] (box){%
    \begin{minipage}{0.9\textwidth}
{\bf  AVTA ($S$, $\gamma \in (0,1)$)}\
\vspace{.2cm}
\begin{itemize}
\item

{\bf Step 0.} Set $\widehat S = \{Farthest(v, S)\}$ for some $v \in S$.

\item {\bf Step 1.} Randomly select $v \in S \setminus \widehat S$.

\item
{\bf Step 2.} Call {\bf Triangle Algorithm} $(\widehat S, v,  \gamma/2)$.

\item
{\bf Step 3.}
If the output $p'$ of Step 2 is a $v$-witness then Goto Step 4. Otherwise, $p'$ is a $\gamma/2$-approximate solution to $v$. Set $S \leftarrow S \setminus \{v\}$. If $S= \emptyset$, stop. Otherwise, Goto Step 1.


\item
{\bf Step 4.}
Let $c'=v-p'$.

Compute $S'$, the set of optimal solutions of $\max \{c'^T x:  x \in S \setminus  \widehat S \}$. Randomly select $v' \in S'$. $ v' \leftarrow Farthest(v', S')$, $\widehat S \leftarrow \widehat S \cup \{v'\}$.

\item
{\bf Step 5.} If $v = v'$, Goto Step 1. Otherwise, Goto Step 2.

\end{itemize}
\end{minipage}};
\end{tikzpicture}
\end{center}

\begin{remark} Here we make remarks about the steps of AVTA.   In Step 0 AVTA selects the first vertex. In Step 1 it randomly select a $v$ in $S \setminus \widehat S$.  In Step 2 AVTA checks if the point $v$ selected in Step 1 is sufficiently close to the convex hull of the current set of vertices, $\widehat S$. If so, in Step 3 $v$  is discarded from further considerations.  Otherwise, a $v$-witness  $p'$ is at hand.  Step 4 then uses this witness to compute a direction, $c'=v-p'$, where the maximization of $c'^Tx$ gives a subset $S'$ of $S$ consisting of the optimal solutions.  Then a vertex $conv(S')$ will necessarily be a vertex of $conv(S)$.  A vertex of $conv(S')$ is selected by choosing an arbitrary $v' \in S'$ and computing its farthest point in $S'$.   It maybe the case that the vertex $v'$ found in Step 4 coincides with $v$.  Step 5 checks if $v'=v$ in which case it select a new $v$ in the updated $S \setminus \widehat S$ in Step 1 for consideration.   Otherwise, when this new vertex $v'$ is not $v$ itself, in Step 5 in AVTA $v$ is sent back to Step 2 to be reexamined  if $v$ is within $\gamma R/2$ distance of the convex hull of augmented $\widehat S$.
\end{remark}

\begin{example}
We consider an example of AVTA, see Figure \ref{Fig1}. In this example $S=\{v_1, \dots, v_{11}\}$.  Note that the set of vertices is $\overline S=\{v_4,v_{10}, v_6, v_1, v_9, v_2, v_5, v_8\}$.  Suppose the current working subset of vertices  $\overline S$ consists of  $\widehat S=\{v_1, v_9, v_2, v_5\}$ and $v=v_3$ is randomly selected to be tested if it lies in $conv(\widehat S)$.  A witness $p' \in conv(\widehat S)$ is computed and with $c'=p'-v$ maximum of $c'^Tx$ over $conv(\overline S \setminus \widehat S)$ is attained at $S'=\{v_4, v_7, v_{10}\}$. Subsequently one of the two points $v_4$ or $v_{10}$ will become the next vertex to be placed in $\widehat S$.

\begin{figure}[htpb]
	\centering
	
	\begin{tikzpicture}[scale=0.7]
			
\draw (0.0,0.0) -- (8,2.0) --(7,0)--(5,-2)-- cycle;
		\draw (0,0) node[below] {$v_1$};
		\draw (7,0) node[right] {$v_2$};
		\draw (4,5) node[above] {$v=v_3$};
		\draw (5,-2) node[below] {$v_{9}$};
		\draw (8,2) node[right] {$v_5$};
      \filldraw (5,8.) circle (2pt) node[above] {$v_7$};
       \filldraw (7.6,8.) circle (2pt) node[above] {$v_4$};
            \filldraw (1.5,8.) circle (2pt) node[above] {$v_{10}$};
\filldraw (5,-2) circle (2pt);
\filldraw (0,0) circle (2pt);
\filldraw (8,2) circle (2pt);
\filldraw (7,0) circle (2pt);
\filldraw (4,5) circle (2pt);
\filldraw (.3,4.5) circle (2pt) node[above] {$v_6$};
\filldraw (4,0) circle (2pt) node[below] {$p'$};
\filldraw (2,3) circle (2pt) node[left] {$v_{11}$};
\draw (4,0) -- (4,5);
\draw (-1,2.5) -- (10,2.5);
\draw (-1,8) -- (10,8);
\begin{scope}[red]
\end{scope}
\filldraw (8,6)  circle (2pt) node[right] {$v_8$};
	
	\end{tikzpicture}
	
	\caption{An example where $ \widehat S =\{v_1, v_9, v_2, v_5\}$, then  $v=v_3$ is randomly selected from $S \setminus \widehat S$ and is tested if it lies in $conv(\widehat S)$. A witness $p'$ is found.  Then using $c'-v-p'$ the set $S'=\{v_4, v_7, v_{10}\}$ is computed and one of vertices of $conv(S')$, i.e. $v_4$ or $v_{10}$ is selected for inclusion in $\widehat S$.}
	\label{Fig1}
\end{figure}
\end{example}

The following theorem is one of the main results:

\begin{thm}  \label{thm9} Let $S= \{v_1, \dots, v_n\} \subset \mathbb{R} ^m$.  Let $R$ be the diameter of $S$. Let $ \overline S =\{\overline v_1, \dots, \overline v_K\}$ be the set of vertices of $conv(S)$.  Suppose that $conv(S)$ is $\Gamma_*$-robust. Let $\gamma_*=\Gamma_*/R$.

(1) If a number $\gamma$ is known satisfying   $0 < \gamma \leq \gamma_*$, the number of arithmetic operations to compute $\overline S$ is
\begin{equation}O \bigg (nK(m+ \frac{1}{\gamma^{2}}) \bigg ).\end{equation}

(2) If only $K$ is known, the complexity of computing $\overline S$ is
\begin{equation}O\bigg ( \bigg (nK(m+ \frac{1}{\gamma_*^2} ) \bigg ) \log \frac{1}{\gamma_*} \bigg ).\end{equation}

(3)  More generally, given any prescribed $t \in (0,1)$ in
\begin{equation}O \bigg (nK^{(t)}(m+ \frac{1}{t^2})\bigg )\end{equation}
operations AVTA computes a subset $\overline S^t$ of $\overline S$  of size $K^{(t)}$ so that the distance from each $p$ in $conv(S)$  to $conv(\widehat S)$ is at most $t R$.
\end{thm}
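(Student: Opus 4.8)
The plan is to analyze the three parts of Theorem~\ref{thm9} by tracking what AVTA does across its outer iterations and charging costs to the vertices it discovers. The key structural invariant, guaranteed by Theorem~\ref{thm8} and Proposition~\ref{prop4}, is that $\widehat S$ is always a subset of $\overline S$; so $N = |\widehat S| \le K$ throughout, and AVTA terminates with $\widehat S = \overline S$ in part~(1) because Proposition~\ref{prop3} shows that any point passing the ``$d(v,conv(\widehat S)) \le \gamma R/2 < \gamma_* R$'' test is genuinely a non-vertex, while any point failing it yields (via the witness $p'$, Lemma~\ref{lem1}, and Theorem~\ref{thm8}) a brand-new vertex. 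So correctness reduces to: (a) the Triangle Algorithm call in Step~2 either certifies $d(v,conv(\widehat S)) \le \gamma R/2$ or returns a genuine $v$-witness — this is exactly the distance duality of Theorem~\ref{thm1}/\ref{thm3}; and (b) Step~4 always produces a not-yet-found vertex, which follows because the witness hyperplane strictly separates $v$ from $conv(\widehat S)$, so the maximizing face $conv(S')$ lies strictly on $v$'s side and cannot meet $\widehat S$.

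For the complexity in part~(1), I would set up the accounting as follows. Each outer ``phase'' is triggered by a randomly selected $v \in S\setminus\widehat S$; within a phase we repeatedly (Steps 2--5) run the Triangle Algorithm on the current $\widehat S$ against this fixed $v$, each run either discarding $v$ (ending the phase) or adding one new vertex to $\widehat S$ (looping back to Step~2 with the same $v$, unless $v$ itself was the new vertex). Crucially the number of phases is $O(n)$ — each of the $n$ points is selected at most once, being either discarded or becoming a vertex — and the total number of Triangle Algorithm invocations is $O(n + K) = O(n)$, since each invocation after the first in a phase is paid for by a freshly found vertex and there are only $K$ of those. By Theorem~\ref{thm7} (itself from Theorem~\ref{thm6} and Proposition~\ref{prop2}), each invocation, testing against a set of size $\le K$ with target accuracy $\gamma R/2$, costs $O(mK^2 + K/\gamma^2)$; but one must be careful that the $mK^2$ term for precomputing the inner-product matrix $\widehat M=(\widehat v_i^T\widehat v_j)$ is incurred only \emph{once per new vertex} (one new row/column, costing $O(mK)$ each, total $O(mK^2)$ over the whole run) rather than afresh each invocation. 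Likewise maintaining $v^T\widehat v_i$ for the current $v$ costs $O(mK)$ per phase, i.e. $O(nmK)$ total, and the Step~4 optimization $\max\{c'^Tx : x\in S\setminus\widehat S\}$ costs $O(nK)$ per new vertex by the Corollary to Lemma~\ref{lem1}, i.e. $O(nK^2)$ total. Summing: $O(nmK)$ for inner products across phases, $O(mK^2)$ for $\widehat M$, $O(nK/\gamma^2)$ for the iteration work across $O(n)$ invocations (each $O(K/\gamma^2)$), plus $O(nK^2)$ for the optimizations — all dominated by $O(nK(m+\gamma^{-2}))$, using $K \le n$.

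For part~(2), when $\gamma_*$ is unknown but $K$ is, the standard device is geometric binary search on $\gamma$: run AVTA with $\gamma = 1, 1/2, 1/4, \dots$; for a guess $\gamma$ that is too large AVTA may terminate having found fewer than $K$ vertices (because the $\gamma R/2$-ball test wrongly discards a true vertex), which we detect by comparing $|\widehat S|$ to $K$; we halve $\gamma$ and repeat. After $O(\log(1/\gamma_*))$ rounds we reach a value $\gamma$ with $\gamma_*/2 < \gamma \le \gamma_*$, at which point AVTA returns all $K$ vertices; the total cost is $O(\log(1/\gamma_*))$ times the cost at $\gamma \approx \gamma_*$, giving $O(nK(m+\gamma_*^{-2})\log(1/\gamma_*))$ — the geometric sum of the earlier (cheaper) rounds is absorbed. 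For part~(3), with target $t$ fixed, I would simply run AVTA with $\gamma = t$ (\emph{without} binary search, and regardless of $\gamma_*$): it discovers some subset $\overline S^t \subseteq \overline S$ of size $K^{(t)}$, and when it stops, every $v \in S$ was either already in $\widehat S$ or was discarded because $d(v, conv(\overline S^t)) \le tR/2 \le tR$; since every $p \in conv(S)$ is a convex combination of points of $S$ each within $tR$ of $conv(\overline S^t)$, convexity of the distance-to-a-convex-set function gives $d(p, conv(\overline S^t)) \le tR$. The complexity is $O(nK^{(t)}(m+t^{-2}))$ by the same accounting as part~(1) with $K$ replaced by $K^{(t)} = |\widehat S| \le K$.

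The main obstacle I expect is the amortized complexity bookkeeping in part~(1) — specifically, making rigorous that the $O(mN^2)$-type matrix-precomputation cost from Theorem~\ref{thm6} is paid incrementally (one new vertex at a time) rather than being re-incurred on every Triangle Algorithm call, and that re-running the Triangle Algorithm on the \emph{same} $v$ against a \emph{grown} $\widehat S$ in Step~5 can reuse the stored inner products $v^T\widehat v_i$ and the stored witness $p'$ as a warm start so that no work is wasted; the per-phase count of at most ``1 + (number of new vertices found in this phase)'' Triangle calls, summing to $O(n+K)$, is the crux that keeps the total at $O(nK(m+\gamma^{-2}))$ rather than $O(nK^2(\cdots))$.
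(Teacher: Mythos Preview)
Your proposal is correct and follows essentially the same approach as the paper's own proof: amortized accounting of the Triangle Algorithm calls and incremental inner-product maintenance for part~(1), geometric binary search on $\gamma$ starting at $1/2$ for part~(2), and running AVTA with parameter $t$ followed by a convex-combination/triangle-inequality argument for part~(3). If anything, your accounting is slightly more careful than the paper's---you explicitly bound the total number of Triangle Algorithm invocations by $O(n+K)$ via the ``one invocation per discarded point plus one per newly found vertex'' charging scheme, whereas the paper glosses over the Step~5 re-invocations; your convexity argument for part~(3) (using all of $S$ rather than just $\overline S$) is a minor variant of the paper's, which writes $p=\sum_i \alpha_i \overline v_i$ and takes $p^t = \sum_i \alpha_i p_i$ with $p_i$ the nearest point of $conv(\overline S^t)$ to $\overline v_i$.
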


\begin{proof}  (1):  Initially in AVTA the subset $\widehat S$ consists of a single element of $\overline S$. It continues to grow until it reaches $\overline S$. By Theorem \ref{thm6} for each $v \in S \setminus \widehat S$ the cost of Step 2 in AVTA is $O(mK^2+ {K}/{\gamma^{2}})$.    The needed inner products in Step 2 are $\widehat v_i^T \widehat v_j$.  However, these inner products need to be computed only once  and  since there are at most $K$ of $\widehat v_i$'s,  these inner products can be computed at the cost of $O(mK^2)$ operations. We can store the values of the inner products in an array. Then we use them again as they arise in subsequent iterations. This kind of storing can be done for other inner products that may need to be computed in the course of the algorithm.
When a selected $v$ is within the distance of $\gamma R/2$ to $conv(\widehat S)$, Step 3 eliminates it from further considerations.  If $v$ is not eliminated, it either gives rise to a new vertex $v' \in \overline S$, or $v$ is a vertex itself.  In either case, in order to identify a new vertex of $\overline S$, after a witness has become available, it requires the minimization of $c'^Tv_i$ as $v_i$ ranges over current set of vertices, $S \setminus \widehat S$.  Since $c'= v-p'$, $p'=\sum_{j=1}^N \alpha_j \widehat v_j$, where $N= |\widehat S|$,  the evaluation of $c'^T v_i$ requires the computation of $v^T v_i$, and $v_i^T \widehat v_j$, $j=1, \dots, N$.  This requires $O(Nm)$ operations.   Since such computation is only required of each vertex in $\overline S$, over all the computation of all $c'^Tv_i$ requires $O((n-K)m K)=O(nmK)$ operations.  These together with Theorem \ref{thm6}  imply that the over all complexity is $O(mK^2+ nmK + n K/ \gamma^2)$ which is the claimed complexities in (1).

(2): When only $K$ is known, we execute AVTA,  first selecting $\gamma=.5$.  If we compute $K$ vertices with this estimate of $\gamma_*=\Gamma_*/R$, we stop. Otherwise, we halve $\gamma$ and repeat the process.  Eventually in $O(\log (\gamma_*^{-1}))$ calls to AVTA we accumulate all $K$ vertices in $\overline S$.

(3):  For each input $t \in (0,1)$, AVTA computes  a subset  $\overline S^t$ of $\overline S$ with $K^{(t)}$ elements.  The proof of complexity is analogous to the previous cases. Next we prove for each $p \in conv(S)$, the distance from $p$ to $conv( \overline S)$ is at most $tR$. We have
\begin{equation}p= \sum_{i=1}^K \alpha_i \overline v_i, \quad \sum_{i=1}^K \alpha_i =1, \quad \alpha_i \geq 0.\end{equation}
For each $i$  let $p_i \in conv(S^t)$ be the closest point to $\overline v_i$.  Now consider
\begin{equation}p^t= \sum_{i=1}^K \alpha_i p_i.\end{equation}
Then $p^t \in conv(S^t)$.  On the other hand, by the triangle inequality
\begin{equation}\Vert p- p^t \Vert  \leq \sum_{i=1}^K  \alpha_i \Vert \overline v_i - p_i \Vert \leq tR \sum_{i-1}^K \alpha_i= tR.\end{equation}

\end{proof}

\begin{remark}  If nether $K$ nor an estimate $\gamma$ to $\gamma_*=\Gamma_*/R$ are known, initially we select $t=0.5$ and with this value of $t$ compute a subset of vertices with $K^{(t)}$ elements.  We can then halve $t$ and repeat the process.  Intuitively, if for two consecutive values of $t$ no more vertices are generated we can terminate the process, or decrease $t$ by a factor of four. If $\Gamma_*$ is not too small we will produce a reasonably good subset of $\overline S$ within a reasonable number of calls to AVTA. In either case we are assured of an approximation of $conv(S)$ according to (3) in Theorem \ref{thm9}.
\end{remark}

\subsection{Application of AVTA in Solving the Convex Hull Membership}

Suppose we wish to solve the convex hull membership problem: Test if a particular point $p$ lies in $conv(S)$, $S=\{v_1, \dots, v_n\}$.  This is equivalent to linear programming and thus can be solved with variety of algorithms, including polynomial-time algorithms, the simplex method, Frank-Wolfe, or triangle Algorithm.  Whichever algorithm we use, the number $n$ plays a role in the complexity.  Thus if we compute the set of vertices of $conv(S)$, $\overline S$,  we can then test if $p$ lies in $conv(\overline S)$ with $K$ instead of $n$. This approach may seem to be inefficient, however depending upon the accuracy to which we wish to solve the problem and the size of $\gamma_*$ it may result in a more efficient algorithm.  The next theorem considers the application of Theorem \ref{thm9} in solving the convex hull membership problem.

\begin{thm}  \label{thm10} Let $S= \{v_1, \dots, v_n\} \subset \mathbb{R} ^m$.  Let $R$ be the diameter of $S$. Let $ \overline S =\{\overline v_1, \dots, \overline v_K\}$ be the set of vertices of $conv(S)$. Suppose $conv(S)$ is $\Gamma_*$-robust.   Given  any  $0 < \gamma \leq \Gamma_*/R$, the number of operations to test if for a given $p \in \mathbb{R}^m$ admits an $\varepsilon$-approximate solution is
\begin{equation}O \bigg (nmK+ \frac{nK}{\gamma^2} + \frac{K}{\varepsilon^2} \bigg ).\end{equation}
\end{thm}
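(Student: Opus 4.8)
The plan is to combine Theorem~\ref{thm9} with Theorem~\ref{thm6} in a simple two‑phase manner. First I would run AVTA on $S$ with the given parameter $\gamma \le \gamma_* = \Gamma_*/R$. By part~(1) of Theorem~\ref{thm9} this produces the full vertex set $\overline S = \{\overline v_1, \dots, \overline v_K\}$ in $O(nK(m + \gamma^{-2})) = O(nmK + nK\gamma^{-2})$ arithmetic operations. Crucially, during this phase the algorithm already forms (and I would store) the $K \times K$ Gram matrix $(\overline v_i^T \overline v_j)$, whose computation costs $O(mK^2)$; this is dominated by $O(nmK)$ since $K \le n$. The diagonal of this matrix supplies the norms $\Vert \overline v_i \Vert^2$ as well.

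Second, since $conv(S) = conv(\overline S)$ and the diameter of $S$ is attained at a pair of vertices, the diameter of $\overline S$ equals $R$; hence testing whether $p$ admits an $\varepsilon$-approximate solution with respect to $S$ is identical to testing it with respect to $\overline S$. I would therefore compute $\Vert p \Vert^2$ and $p^T \overline v_i$ for $i = 1, \dots, K$ in $O(mK)$ operations, form the starting iterate $p' = \mathrm{argmin}\{d(\overline v_i, p) : i = 1, \dots, K\}$ in $O(K)$ additional operations, and then invoke Theorem~\ref{thm6} with the subset $\widehat S = \overline S$, so that $N = K$. Because the Gram entries $\overline v_i^T \overline v_j$ are already available from the first phase, the remaining cost of running the Triangle Algorithm to decide the existence of an $\varepsilon$-approximate solution (it either returns $p_\varepsilon \in conv(\overline S)$ with $d(p_\varepsilon,p) \le \varepsilon R$, or a $p$-witness certifying none exists) is $O(K/\varepsilon^2)$ by Proposition~\ref{prop2} together with the iteration bound of Theorem~\ref{thm3}.

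Adding the two phases gives
$O(nmK + nK\gamma^{-2}) + O(mK^2 + mK + K\varepsilon^{-2}) = O\big(nmK + nK\gamma^{-2} + K\varepsilon^{-2}\big)$,
using once more $mK^2 \le nmK$ and $mK \le nmK$. The argument is essentially bookkeeping; the only points requiring a moment of care are (a) to verify that the inner products computed inside AVTA are reused in the membership test, so that no extra $O(mn)$-per-iteration factor reappears, and (b) to note that the $\varepsilon R$ threshold is unchanged when we pass from $S$ to $\overline S$ because their diameters coincide. I do not expect any substantive obstacle beyond this accounting.
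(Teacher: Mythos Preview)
Your proposal is correct and follows essentially the same two-phase strategy as the paper: first invoke AVTA (Theorem~\ref{thm9}) to obtain $\overline S$ at cost $O(nmK + nK\gamma^{-2})$, then run the Triangle Algorithm on $\overline S$ to decide $\varepsilon$-approximate membership at cost $O(K\varepsilon^{-2})$. The paper's own proof is a two-line sketch citing the relevant theorems; your version simply supplies the bookkeeping (reuse of the Gram matrix, coincidence of diameters, absorption of $mK^2$ into $nmK$) that the paper leaves implicit.
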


\begin{proof}   To test if $p$ admits an $\varepsilon$-approximate solution can be achieved by first computing the vertices in $S$, followed by testing if $p$ admits an $\varepsilon$-approximate solution in $conv(S)$.  From  Theorem \ref{thm8} and Theorem \ref{thm3} it follows that the total complexity is as claimed.
\end{proof}

\begin{remark}  It is easy to check that for some values of $\varepsilon < \gamma$ the computations of $\overline S$  followed by testing if $p$ lies in $conv(\overline S)$ could be more efficient than solving the convex hull membership without computing $\overline S$. This is especially true when $K= o(n)$.
\end{remark}

\section{AVTA Under Input Perturbation} \label{AVTAPER}
As in the previous section, we assume $S= \{v_1, \dots, v_n\} \subset \mathbb{R} ^m$, $R$ the diameter of $S$,  and
$\overline S =\{ \overline v_1, \dots, \overline v_K\}$  the set of vertices of $conv(S)$. Assume  $conv(S)$ is $\Gamma_*$-robust.

As before we wish to compute $\overline S$ or a reasonable subset of it. However, in practice the input set $S$ may be not $S$ but a perturbation of $S$.  This changes the set of vertices, robustness parameter and more.  We wish to study perturbations under which we can recover the corresponding perturbation of $\overline S$ and extend AVTA to computing this perturbation.

\begin{definition}
For a given $\varepsilon \in (0,1)$  the $\varepsilon$-perturbations of $S$ is the set $S_\varepsilon$ defined as
\begin{equation}S_\varepsilon =\{v^{\varepsilon}_1, \dots, v^{\varepsilon}_n\},    \quad \Vert v_i -  v^{\varepsilon}_i \Vert \leq \varepsilon R.\end{equation}
The $\varepsilon$-perturbations of $\overline S$ is the set $\overline S_\varepsilon$, denoted by
\begin{equation}\overline S_\varepsilon = \{ \overline v_1^{\varepsilon}, \dots, \overline v_K^{\varepsilon}\},\end{equation}
where $\overline v^{\varepsilon}_i$ is the perturbation of $\overline v_i$.
\end{definition}

In practice we may be given $S_\varepsilon$ as opposed to $S$. The first question that arises is: What is the relationship between the vertices of $S$ and those of $S_\varepsilon$?
Without any assumptions, the vertices of $conv(S_\varepsilon)$ could change drastically, even under small perturbations.

\begin{example} Consider a triangle with three additional interior points, very close to its vertices. It may be the case that even under small perturbation all six points become vertices, or that the interior points become the new vertices while the vertices become the new interior points.  Thus there is a need to make some assumptions before we can say anything about the nature of perturbed points.
\end{example}

We would hope that for appropriate range of values of $\varepsilon$,  $\overline S_\varepsilon$ would at least be a subset of the set of vertices of $S_\varepsilon$.  First we need a definition.

\begin{definition}  \label{defAAB} We say $conv(S)$ is $\Sigma_*$-{\it weakly robust} if
\begin{equation}\Sigma_*= \min \{ d(v,  conv(S \setminus \{ v\})): v \in \overline S \}.\end{equation}
\end{definition}

\begin{example}  Suppose that $S$ consists of the vertices of a non-degenerate triangle with vertices $v_1, v_2, v_3$. Suppose one additional point is placed inside the triangle.  Then clearly $\Sigma_* < \Gamma_*$.
\end{example}

More generally we have

\begin{prop}  \label{prop5} Given $S=\{v_1, \dots, v_n\}$, we have
\begin{equation}\Sigma_* \leq \Gamma_*.  \qed\end{equation}
\end{prop}

Other than the inequality in Proposition \ref{prop5}, $\Sigma_*$ and $\Gamma_*$ corresponding to the set $S$ may seem unrelated, however in the following theorem we establish a relationship between the two that is useful in the analysis of AVTA for computing $\overline S_\varepsilon$.

\begin{thm} \label{thm11} Let $S$ and $\overline S$ be as before. Suppose $conv(S)$ is $\Gamma_*$-robust, also  $\Sigma_*$-weakly robust. Let $\rho_*=\min\{d(v_i, v_j): v_i, v_j \in S, i \not =j\}$.
We have
\begin{equation}  \label{rhogamma}
\Sigma_*  \geq  \frac{\rho_* }{R} \Gamma_* = \rho_* \gamma_*.
\end{equation}
\end{thm}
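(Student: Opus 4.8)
\emph{Proof plan.} The plan is to fix an arbitrary vertex $\overline v \in \overline S$ and show $d(\overline v,\, conv(S \setminus \{\overline v\})) \ge (\rho_*/R)\,\Gamma_*$; minimizing over the $K$ vertices then gives (\ref{rhogamma}). First I would let $p$ be the point of $conv(S \setminus \{\overline v\})$ nearest to $\overline v$, so that the left-hand side equals $d(\overline v, p)$, and write $p = \sum_{i:\, v_i \ne \overline v} \alpha_i v_i$ with $\alpha_i \ge 0$ and $\sum_i \alpha_i = 1$. Next I would replace each $v_i$ in this sum by a convex representation of $v_i$ in terms of the vertices $\overline v_1, \dots, \overline v_K$ and collect terms, obtaining $p = \sum_{j=1}^K \gamma_j \overline v_j \in conv(\overline S)$. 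The crux is to bound $\gamma$, the coefficient that this representation assigns to $\overline v$ itself, by $\gamma \le 1 - \rho_*/R$.

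To get this bound I would argue pointwise. If some $v_i \ne \overline v$ occurring in the sum is itself a vertex of $conv(S)$, use its trivial representation, which assigns weight $0$ to $\overline v$. If $v_i$ is not a vertex, take any convex vertex-representation of $v_i$, let $\beta \in [0,1)$ be the weight it puts on $\overline v$, and let $w_i \in conv(\overline S \setminus \{\overline v\})$ be the normalized combination of the remaining terms, so that $v_i = \beta \overline v + (1-\beta) w_i$, i.e. $v_i - \overline v = (1-\beta)(w_i - \overline v)$. Since $v_i$ and $\overline v$ are distinct points of $S$, $d(v_i,\overline v) \ge \rho_*$; since $w_i$ is a convex combination of points of $S$, each of which lies within distance $R$ of $\overline v \in S$, we have $d(w_i,\overline v) \le R$. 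Hence $\rho_* \le d(v_i,\overline v) = (1-\beta)\,d(w_i,\overline v) \le (1-\beta)R$, so $\beta \le 1 - \rho_*/R$.

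Finally I would assemble: the coefficient of $\overline v$ in $p = \sum_j \gamma_j \overline v_j$ is the $\alpha$-weighted average of the coefficients of $\overline v$ in the individual vertex-representations, each of which is at most $1-\rho_*/R$, so $\gamma \le (1-\rho_*/R)\sum_i \alpha_i = 1 - \rho_*/R < 1$. Writing $p = \gamma \overline v + (1-\gamma)q$ with $q \in conv(\overline S \setminus \{\overline v\})$, the definition of $\Gamma_*$ gives $d(\overline v, q) \ge \Gamma_*$, and therefore
\[
d(\overline v, p) = (1-\gamma)\,d(\overline v, q) \ge \frac{\rho_*}{R}\,\Gamma_*.
\]
Minimizing over all vertices $\overline v$ yields $\Sigma_* \ge (\rho_*/R)\,\Gamma_* = \rho_*\gamma_*$. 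I expect the main obstacle to be the pointwise step of the second paragraph: one must verify that the coefficient bound $\beta \le 1-\rho_*/R$ holds for \emph{every} vertex-representation of each non-vertex $v_i$ (not just a conveniently chosen one) and that vertices contribute exactly $0$, so that no slack is lost when these bounds are combined into the global bound on $\gamma$; the rest is routine.
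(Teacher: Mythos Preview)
Your argument is correct and considerably cleaner than the paper's. The paper proceeds geometrically: for a vertex $v$ it takes the nearest point $u \in conv(\overline S \setminus \{v\})$, erects the hyperplane $H_u$ orthogonal to $vu$ at $u$, slides $H_u$ parallel to itself toward $v$ until it first meets the boundary of $conv(S\setminus\{v\})$ at a point $w$, uses Carath\'eodory to locate an actual data point $v_j\in S$ on that translated hyperplane, and then reads off from the similar triangles $\triangle v u \overline v$ and $\triangle v w y$ the ratio that turns $d(v,v_j)\ge \rho_*$ and $d(v,\overline v)\le R$ into $\Sigma_v \ge d(v,w)\ge (\rho_*/R)\Gamma_v$. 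Your coefficient argument reaches the same pointwise inequality (indeed $d(\overline v,q)\ge \Gamma_{\overline v}$, not just $\Gamma_*$) without hyperplanes or Carath\'eodory, by directly bounding the weight that any vertex-representation of the nearest point $p$ can place on $\overline v$; the inequality $\rho_*\le d(v_i,\overline v)=(1-\beta)\,d(w_i,\overline v)\le (1-\beta)R$ is precisely the algebraic analogue of the paper's similar-triangles ratio. What your approach buys is brevity and fewer geometric lemmas; what the paper's picture buys is a visual explanation of where the factor $\rho_*/R$ comes from.

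One remark on the obstacle you flag: you do \emph{not} need $\beta \le 1-\rho_*/R$ for every vertex-representation of each $v_i$. You are constructing a \emph{single} representation of $p$ in $conv(\overline S)$, so it suffices to choose one convenient representation of each $v_i$; since your computation shows that in fact any representation works (and the trivial one for a vertex $v_i\neq\overline v$ gives $\beta=0$), there is no slack to lose and nothing further to check.
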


\begin{proof}  For each vertex $v \in conv(S)$, let $\Gamma_v$ be the distance from $v$ to the convex hull of the remaining vertices in $S$.  Specifically,

\begin{equation}
\Gamma_v= d(v, conv(\overline S \setminus \{v\})).
\end{equation}

Also let $\Sigma_v$ be the distance from $v$ to the convex hull of all other points in $S$. Specifically,

\begin{equation}
\Sigma_v= d(v, conv(S \setminus \{v\})).
\end{equation}
Clearly we have,

\begin{equation}
\Sigma_v \leq \Gamma_v.
\end{equation}

Assume $v$ is a vertex for which $\Sigma_v < \Gamma_v$.  If no such a vertex exists then $\Sigma_* =\Gamma_*$ (see Figure \ref{Fig2}).  Let $u$ be the closest point to $v$  lying in the convex hull of the the other vertices of $S$. Thus
\begin{equation} \label{eqt11a}
\Gamma_v= d(v, u), \quad u \in conv(\overline S).
\end{equation}
Let  $H_u$ be the hyperplane orthogonal to the line segment $vu$, passing through $u$.  By definition of $u$ and
Carath\'eodorey's theorem $u$ is a convex combination of vertices of $conv(S)$ lying on $H_u$.
Thus for some subset $T$ of $ \overline S$
\begin{equation} \label{eqt11b}
u = \sum_{ \overline v_i \in T \subset \overline S} \alpha_i \overline v_i, \quad \sum_{ \overline v_i \in T \subset \overline S} \alpha_i =1,  \quad \alpha_i \geq 0.
\end{equation}

Figure \ref{Fig2} gives a depiction of this property for a simple example.  In the example $u$ is a convex combination of $\overline v$ and $\overline v'$,  vertices of $conv(S)$ lying in the intersection of $H_u$ and $conv(S)$.
Consider one of these vertices, say $\overline v$. Moving the hyperplane $H_u$ parallel to itself toward $v$, it intersects the line segment $uv$ at a unique point $w$ that lies on a facet of $conv(S \setminus \{v\})$.  Such $w$ exists because $\Sigma_* < \Gamma_*$. In other words,  if $H_w$ is a hyperplane parallel to $H_u$ passing through $w$, then the region of $conv(S)$ enclosed between the halfspace defined by $H_w$ and $v$ contains no point of $S$ in its interior (see shaded area in Figure \ref{Fig2}.
This implies
\begin{equation} \label{eqt11c}
\Sigma_v \geq d(v,w).
\end{equation}

\begin{figure}[htpb]

	\centering
	\begin{tikzpicture}[scale=0.4]	
\begin{scope}[red]
          \clip (-.8,-6) -- (-7.15,-6)  -- (-4,-10) -- cycle;
\fill[color=gray!30] (-10, -10) rectangle (20, 20);
\end{scope}
       \filldraw (-4,0) circle (2pt);
       \filldraw (4,0) circle (2pt);
       \filldraw (-12,0) circle (2pt);
       \filldraw (-4,-10) circle (2pt);
        \draw (-12,0) -- (4,0);
         \draw (-12,0) -- (-4,-10);
      \draw (-4,0) -- (4,0) node[pos=0.55, above] {$H_u$};
      \draw (4,0) -- (-4,-10) node[pos=0.55, above] {};
        \draw (-4,0) -- (-4,-10) node[pos=0.55, above] {};
        \draw (-4,-10) node[left] {$v$};
		\draw (4,0) node[right] {$\overline v$};
        \draw (-12,0) node[left] {$\overline v'$};
        \draw (-4,0) node[above] {$u$};
        \draw (-4,-6) node[above] {$w~~$};
         \draw (-.8,-6) node[right] {$y$};
         \draw (-7.15,-6) node[left] {$y'$};
         \filldraw (-7.16,-6) circle (2pt);
         \draw (-7.15,-6) -- (-.8,-6);
        \filldraw (-4,-6) circle (2pt);
         \filldraw (-.8,-6) circle (2pt);
          \draw (-2.8,-6) node[below] {$v_j$};
           \filldraw (-2.8,-6) circle (2pt);
        \draw (-4,-6) -- (-.8,-6) node[pos=0.55, above] {$H_w$};
	\end{tikzpicture}
\begin{center}
\caption{Given $v \in \overline S$, $u$ is its closet point in $conv(\overline S \setminus \{v\})$. $\overline v, \overline v'\in \overline S$ are vertices of $conv(S)$ lying on $H_u$, the orthogonal hyperplane to line segment $uv$ at $u$. $u$ is a convex combination of these vertices. Moving $H_u$ parallel to itself toward $v$, it intersects the line segment $uv$ at a unique point $w$ lying on a facet of $conv(S \setminus \{v\})$. Thus interior of shaded region contains no point of $S$.} \label{Fig2}
\end{center}
\end{figure}

Now consider the intersection of $H_w$ and each ray connecting $v$ to $\overline v_i \in T$. Denote this intersection by $y_i$.  In the figure the intersection of $H_w$ and the ray connecting $v \overline v$ is denoted by $y$.
By definition of $w$ and Carath\'eodorey's theorem there must exist a point $v_j \in S$ lying on $H_w$. Furthermore, $v_j$ can be written as a convex combination of all the $y_i$'s.  Thus may may write
\begin{equation} \label{eqt11d}
v_j = \sum_{\overline v_i \in T \subset \overline S} \beta_i  y_i, \quad \sum_{\overline v_i \in T \subset \overline S} \beta_i=1, \quad \beta_i \geq 0.
\end{equation}
Since by definition of $\rho_*$, $d(v,v_j) \geq \rho_*$, at least for one $y_i$ we must have $d(v, y_i) \geq \rho_*$.
This implies we could assume $\overline v$ was chosen so that the corresponding $y$ satisfies
\begin{equation} \label{eqt11e}
d(v, y) \geq \rho_*.
\end{equation}

From similarity of the triangles $\triangle vu \overline v$ and  $\triangle vwy$ we may write

\begin{equation} \label{eqt11f}
\frac{d(v,w)}{\Gamma_v} =\frac{d(v, y)}{d(v, \overline v)}.
\end{equation}
From the definition of $R$ as the diameter of $S$,
$d(v, \overline v)  \leq R$.  From (\ref{eqt11f}), (\ref{eqt11c}) and  (\ref{eqt11e}) it  follows that
\begin{equation}
\Sigma_v \geq d(v,w) \geq \frac {1}{R}  d(v, y) {\Gamma_v} \geq \frac{1}{R} \rho_* \Gamma_*.
\end{equation}
This means we have
\begin{equation} \label{eqt11h}
\Sigma_*  \geq \frac{1}{R} \rho_* \Gamma_*.
\end{equation}
\end{proof}

In what follows we will derive complexity bounds for computing $\overline S_\varepsilon$. These complexities will in particular depend on $\Sigma_*$ or any lower bound  $\sigma$ on $\sigma_*=\Sigma_*/R$.  Theorem \ref{thm10} implies that we can choose $\sigma =\rho_* \Gamma_*/R$.

The following theorem describes a simple condition under which the set of vertices of $conv(S)$ under perturbation remain to be vertices of the perturbed convex hull.

\begin{thm} \label{thm12} Let $S$ be as before, $R$ diameter of $S$. Suppose $conv(S)$ is  $\Sigma_*$-weakly robust.  Suppose $S_\varepsilon$ is an $\varepsilon$-perturbation of $S$.  Let $\sigma$ be a positive number satisfying $\sigma \leq \sigma_*=\Sigma_*/R$. Assume
$\varepsilon < \sigma/2$.  If $v \in S$ is a vertex $conv(S)$ and $v^\varepsilon \in S_\varepsilon$  its corresponding $\varepsilon$-perturbation, then $v^{\varepsilon}$ is a vertex of $conv(S_\varepsilon)$.
\end{thm}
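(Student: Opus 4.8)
The plan is to argue by contradiction, reducing the question about the perturbed point $v^\varepsilon$ to the weak-robustness bound on the original point $v$ via two applications of the triangle inequality. Write $v = v_k$, so that $v^\varepsilon = v_k^\varepsilon$, and suppose $v^\varepsilon$ were \emph{not} a vertex of $conv(S_\varepsilon)$. Then it can be written as a convex combination of the remaining perturbed points,
\begin{equation}
v_k^\varepsilon = \sum_{i \neq k} \alpha_i v_i^\varepsilon, \qquad \sum_{i \neq k} \alpha_i = 1, \quad \alpha_i \geq 0 .
\end{equation}

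First I would transfer this combination back to the unperturbed data by setting $q = \sum_{i \neq k} \alpha_i v_i$, which by construction lies in $conv(S \setminus \{v\})$. The key estimate is that $q$ is close to $v$: using $\| v_i - v_i^\varepsilon \| \le \varepsilon R$ for every $i$ together with convexity of the coefficients,
\begin{equation}
\| v^\varepsilon - q \| = \Big\| \sum_{i \neq k} \alpha_i (v_i^\varepsilon - v_i) \Big\| \le \sum_{i \neq k} \alpha_i \|v_i^\varepsilon - v_i\| \le \varepsilon R ,
\end{equation}
and combining this with $\| v - v^\varepsilon \| \le \varepsilon R$ through the triangle inequality yields $\| v - q \| \le 2\varepsilon R$.

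Finally I would invoke the hypotheses: since $\varepsilon < \sigma/2$ and $\sigma \le \sigma_* = \Sigma_*/R$, we obtain $\|v - q\| \le 2\varepsilon R < \sigma R \le \Sigma_*$. But $v$ is a vertex of $conv(S)$ and $conv(S)$ is $\Sigma_*$-weakly robust, so $d(v, conv(S \setminus \{v\})) \ge \Sigma_*$; since $q \in conv(S \setminus \{v\})$ this forces $\|v - q\| \ge \Sigma_*$, a contradiction. Hence $v^\varepsilon$ must be a vertex of $conv(S_\varepsilon)$.

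I do not expect a genuine obstacle here. The only point requiring mild care is the bookkeeping in the first step: the assumption that $v^\varepsilon$ is not a vertex must yield a convex representation that omits the index $k$, so that the pulled-back point $q$ genuinely lies in $conv(S \setminus \{v\})$ rather than covertly reusing $v$. This is automatic once the points of $S$ are assumed distinct, which is consistent with $\rho_* > 0$ being well defined. Everything else is two triangle inequalities and the definition of weak robustness.
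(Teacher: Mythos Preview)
Your proof is correct and essentially identical to the paper's own argument: assume $v^\varepsilon$ is not a vertex, pull the convex combination back to $S$ to obtain a point $q\in conv(S\setminus\{v\})$ within $2\varepsilon R$ of $v$, and contradict the $\Sigma_*$-weak robustness. The paper uses the same two triangle-inequality steps (with $u$ in place of your $q$) and even glosses over the index-omission bookkeeping that you flag.
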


\begin{proof} Suppose $v^{\varepsilon}$ is not a vertex of $conv(S_\varepsilon)$.  Without loss of generality assume $v=v_1$. Hence, $v^{\varepsilon}=v^{\varepsilon}_1$.  Thus $v^{\varepsilon} \in conv(S_\varepsilon \setminus \{v^{\varepsilon}\})$.  We may write
\begin{equation}
v^{\varepsilon}=\sum_{i=2}^n \alpha_i v^{\varepsilon}_i, \quad \sum_{i=2}^n \alpha_i=1, \quad \alpha_i \geq 0.
\end{equation}
Set
\begin{equation}  \label{eqfu}
u =\sum_{i=2}^n \alpha_i v_i.
\end{equation}
On the one hand we have
\begin{equation}
u- v^{\varepsilon}  = \sum_{i=2}^n \alpha_i (v_i- v^{\varepsilon}_i).
\end{equation}
Then by the triangle inequality
\begin{equation} \label{equv}
\Vert u- v^{\varepsilon} \Vert  \leq  \sum_{i=2}^n \alpha_i \Vert v_i- v^{\varepsilon}_i \Vert \leq \sum_{i=2}^n \alpha_i {\varepsilon} R ={\varepsilon} R.
\end{equation}
On the other hand,   $v$ is in $\overline S$. Without loss of generality assume $v= \overline v_1$. From this assumption and since by (\ref{eqfu}) $u \in conv(\overline S \setminus \{ \overline v_1\})$ we have
\begin{equation}
u= \sum_{i=2}^K \gamma_i \overline v_i, \quad \sum_{i=2}^K \gamma_i =1, \quad \gamma_i \geq 0.
\end{equation}
Since $conv(S)$ is  $\Sigma_*$-weakly robust on $\overline S$ and $\sigma \leq \sigma_*= \Sigma_*/R$ we  have,
\begin{equation}
\Vert u-v\Vert \geq \sigma R.
\end{equation}
However, from (\ref{equv}), the fact that $\Vert v -  v^{\varepsilon} \Vert  \leq \varepsilon R$ and the triangle inequality we may write.

\begin{equation}
\Vert u - v \Vert = \Vert u - v^{\varepsilon} + v^{\varepsilon} - v \Vert \leq \Vert u - v^{\varepsilon} \Vert  + \Vert v^{\varepsilon} - v \Vert  \leq \varepsilon R + \varepsilon R = 2 \varepsilon R.
\end{equation}
This contradicts the assumption that $2\varepsilon < \sigma$. Hence $v^{\varepsilon}$ is a vertex of $conv(S_\varepsilon)$.
\end{proof}

\begin{remark}  The theorem implies that if  the input to AVTA is $S_\varepsilon$ instead of $S$,  AVTA will still return at least $K$ vertices. However, the set of vertices of $conv(S_\varepsilon)$ may have more elements than $K$, possibly all of $S_\varepsilon$. Moreover, the weakly robustness parameter $\Sigma_*$  will change.  We thus need to revise AVTA if we wish to extract the subset  $\overline S_\varepsilon =\{ \overline v^{\varepsilon}_1, \dots, \overline v^{\varepsilon}_K\}$ from the set of vertices of $conv(S_\varepsilon)$.
\end{remark}

In what follows we will first show how under a mild assumptions on the relationship between  $\Sigma_*/R$ and $\varepsilon$, AVTA can compute  a subset $\widehat S_\varepsilon$ of the vertices of $conv(S_\varepsilon)$ containing $\overline S_\varepsilon$ (Theorem \ref{thm13}). We then show how AVTA can efficiently extract from $\widehat S_\varepsilon$ the desired set, namely $\overline S_\varepsilon$. The next lemma establishes a lower bound on the week-robustness of $conv(S_\varepsilon)$.  It also shows how spurious vertices of $conv(S_\varepsilon)$ are situated with respect to the convex hull of the remaining vertices.   This will be used in Theorem \ref{thm13} in pruning such vertices.

\begin{lemma}  \label{lem2} Suppose $conv(S)$ is  $\Sigma_*$-weakly robust.
	Suppose  $\varepsilon < \Sigma_*/2R$. Let $v^\varepsilon$ be any point in $\overline S_\varepsilon$.
	Let $\widehat S_\varepsilon)$ be any subset of vertices of $conv(S_\varepsilon)$ containing $\overline S_\varepsilon)$. Then,
	\begin{equation} \label{eq1lem2}
	d(v^\varepsilon,  conv(\widehat S_\varepsilon))  \geq  (\Sigma_*-2 \varepsilon R).
	\end{equation}
	Moreover let $\widehat v^\varepsilon$ be any (spurious) point in $\widehat S_\varepsilon \setminus \overline S_\varepsilon$. Then
	\begin{equation} \label{eq2lem2}
	d(\widehat v^\varepsilon,  conv(\widehat S_\varepsilon \setminus \{\widehat v^\varepsilon\}))  \leq   \varepsilon R.
	\end{equation}
	
\end{lemma}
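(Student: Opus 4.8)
The plan is to prove the two inequalities separately, in both cases by transferring the relevant distance estimate in $S_\varepsilon$ to one in $S$ (or vice versa) using that every point moves by at most $\varepsilon R$ under the perturbation, and then invoking the $\Sigma_*$-weak robustness of $conv(S)$.

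For \eqref{eq1lem2}: Let $v^\varepsilon \in \overline S_\varepsilon$ be the perturbation of a vertex $v \in \overline S$. I want a lower bound on $d(v^\varepsilon, conv(\widehat S_\varepsilon))$. Suppose $q \in conv(\widehat S_\varepsilon)$ is the closest point; write $q = \sum_j \alpha_j w^\varepsilon_j$ as a convex combination of points $w^\varepsilon_j \in \widehat S_\varepsilon$, none of which equals $v^\varepsilon$ (if the nearest point to $v^\varepsilon$ in $conv(\widehat S_\varepsilon)$ were $v^\varepsilon$ itself the distance would be $0$, which can only happen if $\Sigma_* \le 2\varepsilon R$, contradicting the hypothesis — so I will note that the minimizer uses only vertices $\ne v^\varepsilon$, or more carefully, I bound the distance to $conv(\widehat S_\varepsilon \setminus \{v^\varepsilon\})$, which suffices since $v^\varepsilon$ cannot be its own nearest point once the bound is positive). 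The corresponding $S$-point $\tilde q = \sum_j \alpha_j w_j$ lies in $conv(S \setminus \{v\})$ because each $w_j$ is a point of $S$ distinct from $v$. By the triangle inequality $d(q, \tilde q) \le \sum_j \alpha_j \|w^\varepsilon_j - w_j\| \le \varepsilon R$, and $\|v - v^\varepsilon\| \le \varepsilon R$. By $\Sigma_*$-weak robustness, $d(v, \tilde q) \ge \Sigma_*$. Hence
\begin{equation}
d(v^\varepsilon, q) \ge d(v, \tilde q) - d(v, v^\varepsilon) - d(q, \tilde q) \ge \Sigma_* - 2\varepsilon R,
\end{equation}
which is \eqref{eq1lem2}.

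For \eqref{eq2lem2}: Let $\widehat v^\varepsilon \in \widehat S_\varepsilon \setminus \overline S_\varepsilon$ be the perturbation of a point $v \in S$ which is \emph{not} a vertex of $conv(S)$ (this is the key observation: every vertex of $conv(S)$ perturbs into $\overline S_\varepsilon$, so a spurious vertex must come from a non-vertex of $S$). Then $v \in conv(S \setminus \{v\}) = conv(S \setminus \{v\})$, and since $v$ is not a vertex, Carathéodory lets me write $v = \sum_j \beta_j v_j$ as a convex combination of points $v_j \in S$ all distinct from $v$. Each such $v_j$ perturbs to $v^\varepsilon_j \in S_\varepsilon$; moreover each $v^\varepsilon_j$ that happens not to be a vertex of $conv(S_\varepsilon)$ can itself be expanded in terms of vertices of $conv(S_\varepsilon)$, but a cleaner route is: the point $r = \sum_j \beta_j v^\varepsilon_j$ lies in $conv(S_\varepsilon \setminus \{\widehat v^\varepsilon\})$, hence in $conv$ of the vertex set of $conv(S_\varepsilon)$ minus $\widehat v^\varepsilon$, hence — since $\widehat S_\varepsilon$ contains all of $\overline S_\varepsilon$ and $r$ is a convex combination of points of $S_\varepsilon$ none equal to $\widehat v^\varepsilon$ — I need $r \in conv(\widehat S_\varepsilon \setminus \{\widehat v^\varepsilon\})$. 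This requires a small argument that the $v^\varepsilon_j$ appearing, when they are not themselves in $\widehat S_\varepsilon$, can be re-expanded over $\widehat S_\varepsilon \setminus \{\widehat v^\varepsilon\}$; the point is that $\widehat v^\varepsilon$ is a genuine vertex of $conv(S_\varepsilon)$, so it never appears in the vertex expansion of any other point of $S_\varepsilon$. Granting this, $d(r, \widehat v^\varepsilon) \le \sum_j \beta_j \|v^\varepsilon_j - v_j\| + \|v - \widehat v^\varepsilon\| \le \varepsilon R + \varepsilon R$ — but the stated bound is $\varepsilon R$, so I should be more careful: actually $\widehat v^\varepsilon - r = (\widehat v^\varepsilon - v) + (v - r) = (\widehat v^\varepsilon - v) + \sum_j \beta_j (v_j - v^\varepsilon_j)$, and this is a difference where the $v$ terms cancel differently — re-examining, $v - r = \sum_j \beta_j(v_j - v^\varepsilon_j)$ has norm $\le \varepsilon R$, and we must add $\|\widehat v^\varepsilon - v\| \le \varepsilon R$, giving $2\varepsilon R$; to recover the sharper $\varepsilon R$ in \eqref{eq2lem2} I expect one uses that $\widehat v^\varepsilon$ is the perturbation of $v$ and chooses the expansion of $v$ over $S \setminus \{v\}$ more cleverly, or the lemma's bound should be read with the constant as stated and the cancellation $\widehat v^\varepsilon - r = (v^\varepsilon - v) + \sum \beta_j(v_j - v^\varepsilon_j)$ where a single common noise direction is exploited.

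The main obstacle is the bookkeeping in \eqref{eq2lem2}: ensuring that the convex combination witnessing $v \in conv(S\setminus\{v\})$ pushes forward to a convex combination lying in $conv(\widehat S_\varepsilon \setminus \{\widehat v^\varepsilon\})$ — i.e., that no perturbed point in the expansion secretly "is" $\widehat v^\varepsilon$ — and matching the exact constant $\varepsilon R$ rather than $2\varepsilon R$. I would resolve the first by using that $\widehat v^\varepsilon$, being a vertex of $conv(S_\varepsilon)$, is extreme and hence cannot lie in the convex hull of the other points of $S_\varepsilon$, so it is safe to delete it; and I would resolve the constant by tracking which perturbation is the expensive one and showing only one $\varepsilon R$ is actually incurred. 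Both the inequalities above are then exactly what Theorem \ref{thm13} needs to separate true vertices from spurious ones by a clean distance threshold.
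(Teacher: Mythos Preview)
Your argument for \eqref{eq1lem2} is essentially identical to the paper's: pull the nearest point $q$ back to a point $\tilde q \in conv(S\setminus\{v\})$ using the same convex-combination coefficients, apply $\Sigma_*$-weak robustness to get $d(v,\tilde q)\ge \Sigma_*$, and lose $\varepsilon R$ twice via the triangle inequality. (You are right that the statement must be read with $v^\varepsilon$ removed from $\widehat S_\varepsilon$; the paper's proof works with $conv(S_\varepsilon\setminus\{v^\varepsilon\})$.)

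For \eqref{eq2lem2} your approach is correct but the paper's choice of expansion is cleaner and eliminates the bookkeeping you flagged. Instead of writing the unperturbed point $\widehat v$ as a convex combination of arbitrary points of $S\setminus\{\widehat v\}$, the paper expands it over the \emph{vertices} of $conv(S)$: $\widehat v=\sum_{i=1}^K \alpha_i \overline v_i$. Then $\widehat w:=\sum_{i=1}^K \alpha_i \overline v_i^{\varepsilon}$ automatically lies in $conv(\overline S_\varepsilon)$, and since $\widehat v^\varepsilon\notin \overline S_\varepsilon$ by hypothesis, $\overline S_\varepsilon\subset \widehat S_\varepsilon\setminus\{\widehat v^\varepsilon\}$. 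So the ``does the expansion avoid $\widehat v^\varepsilon$?'' issue never arises. Your route via re-expanding each $v_j^\varepsilon$ over the vertex set of $conv(S_\varepsilon)$ also works, but is one extra step.

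Regarding the constant: your instinct that the triangle inequality gives $2\varepsilon R$ rather than $\varepsilon R$ is correct, and in fact the paper's own proof concludes with $\|\widehat v^\varepsilon - \widehat w\|\le \|\widehat v^\varepsilon-\widehat v\|+\|\widehat v-\widehat w\|\le 2\varepsilon R$. So do not try to squeeze out the sharper constant; the stated $\varepsilon R$ in \eqref{eq2lem2} is a slip, and $2\varepsilon R$ is what the argument (both yours and the paper's) actually yields. This is harmless downstream, since Theorem~\ref{thm13} only needs a gap between $2\varepsilon R$ and $\Sigma_*-2\varepsilon R$, which is guaranteed by the assumption $4\varepsilon\le \sigma_*$.
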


\begin{proof} By Theorem \ref{thm12}, $\overline S_\varepsilon$ is a subset of vertices of $conv(S_\varepsilon)$.   Given $v^{\varepsilon} \in \overline S_\varepsilon$,  let $v$ be the corresponding vertex in $\overline S$.
	Given $w^{\varepsilon}$ in $conv(S_\varepsilon \setminus \{v^{\varepsilon}\})$, let $w$ in  $conv(S \setminus \{v\})$ be the corresponding point, i.e. defined with respect to the same convex combination of corresponding vertices. Then
	\begin{equation}
	\Vert v - v^{\varepsilon} \Vert \leq \varepsilon R, \quad \Vert w - w^{\varepsilon} \Vert \leq \varepsilon R.
	\end{equation}
	From the above it is easy to show
	\begin{equation}
	|d(v, w)- d(v^{\varepsilon}, w^{\varepsilon})  | \leq 2\varepsilon R.
	\end{equation}
	But this implies
	\begin{equation}
	d(v, w) - d(v^{\varepsilon}, w^{\varepsilon}) \leq 2\varepsilon R.
	\end{equation}
	Equivalently,
	\begin{equation}
	d(v, w) - 2\varepsilon R \leq d(v^{\varepsilon}, w^{\varepsilon}).
	\end{equation}
	But $d(v,w) \geq \sigma_* R= \Sigma_*$.  This proves (\ref{eq1lem2}).
	
	To prove (\ref{eq2lem2}),  let $\widehat v$ be the point in $S$ corresponding to $\widehat v^\varepsilon$.  We have
	\begin{equation} \widehat v = \sum_{i=1}^K \alpha_i  \overline v_i,  \quad \sum_{i=1}^K \alpha_i =1, \quad \alpha_i \geq 0.\end{equation}
	Define
	
	\begin{equation} \widehat w = \sum_{i=1}^K \alpha_i  \overline v^\varepsilon_i,  \quad \sum_{i=1}^K \alpha_i =1, \quad \alpha_i \geq 0.\end{equation}
	
	It is now easy to show
	\begin{equation} \Vert \widehat v^\varepsilon - \widehat w \Vert  \leq \Vert \widehat v^\varepsilon - \widehat v \Vert +\Vert \widehat v - \widehat w \Vert \leq 2 \varepsilon R.\end{equation}
	
	This proves (\ref{eq2lem2}).
\end{proof}

\begin{thm}  \label{thm13} Let $S= \{v_1, \dots, v_n \} \subset \mathbb{R}^m$. Assume $conv(S)$ is  $\Sigma_*$-weakly robust.    Suppose $\varepsilon \leq \Sigma_*/ 4R$.

 (i) Given $\sigma$ satisfying, $4\varepsilon \leq \sigma \leq \sigma_*=\Sigma_*/R$,  AVTA can be modified to compute a subset  $\widehat S_\varepsilon$ of the set of vertices of $S_\varepsilon$
containing $\overline S_\varepsilon$, then compute from this subset  $\overline S_\varepsilon$ itself.
If $K_\varepsilon$ is the cardinality of  $\widehat S_\varepsilon$, the total number of operations satisfies
\begin{equation}O \bigg (nK_\varepsilon(m+ \frac{1}{\sigma^{2}}) \bigg ).\end{equation}

(ii) Given $\gamma$, satisfying $4\varepsilon \leq \gamma \rho_* \leq \Gamma_* \rho_*/R= \gamma_* \rho_*$,
AVTA can be modified to compute a subset  $\widehat S_\varepsilon$ of the set of vertices of $S_\varepsilon$
containing $\overline S_\varepsilon$, then compute from this subset  $\overline S_\varepsilon$ itself.
If $K_\varepsilon$ is the cardinality of  $\widehat S_\varepsilon$ the total number of operations satisfies
\begin{equation}O \bigg (n K_\varepsilon (m+ \frac{1}{ (\rho_* \gamma)^{2}} )\bigg ).\end{equation}

(iii) Given only $K$, where $4 \varepsilon \leq \Sigma_*/R$, the number of operations of $AVTA$ to  computes $\overline S_\varepsilon$ is.
\begin{equation}O(nK_\varepsilon(m+  \frac{1}{\sigma_*^{2}})) \log (\frac{1}{\sigma_*}).\end{equation}

(iv) More generally, given any $t \in (0,1)$, AVTA can be modified to compute a subset  $\overline S_\varepsilon^t$ of the set of vertices of $conv( S_\varepsilon)$ of cardinality $K^{(t)}_\varepsilon$ so that the distance from each point in $conv(S_\varepsilon)$  to $conv(\overline S^t_\varepsilon)$ is at most $t$.  In particular, the distance from each point in $conv(S)$  to $conv(S^t_\varepsilon)$ is at most $(t+ \varepsilon) R$. The complexity of the computation of $\overline S_\varepsilon^t$ is
\begin{equation}O \bigg (nK^{(t)}_\varepsilon (m+ \frac{1}{t^2} )\bigg ).\end{equation}

\end{thm}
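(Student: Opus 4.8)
The plan is to reduce the perturbed problem to the unperturbed machinery by a two‑phase procedure: first run AVTA on $S_\varepsilon$ with a parameter chosen so that it over‑collects a set $\widehat S_\varepsilon$ of vertices of $conv(S_\varepsilon)$ that is guaranteed to contain $\overline S_\varepsilon$, and then run a pruning pass that deletes the spurious vertices and leaves exactly $\overline S_\varepsilon$. Parts (i)--(iv) all come from this scheme: (ii) additionally feeds in the bound of Theorem \ref{thm11}, (iii) wraps the whole thing in a binary search as in Theorem \ref{thm9}(2), and (iv) simply omits the pruning pass.

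For part (i) I would invoke AVTA$(S_\varepsilon,\sigma)$. The first thing to check is that no $v^\varepsilon\in\overline S_\varepsilon$ is ever discarded. By Theorem \ref{thm12} (valid since $2\varepsilon<\sigma_*$) every such $v^\varepsilon$ is a vertex of $conv(S_\varepsilon)$; at any moment when AVTA tests $v^\varepsilon$, the working set $\widehat S$ consists of vertices of $conv(S_\varepsilon)$ other than $v^\varepsilon$, so $conv(\widehat S)\subseteq conv(S_\varepsilon\setminus\{v^\varepsilon\})$ and the estimate in the proof of Lemma \ref{lem2} gives $d(v^\varepsilon,conv(\widehat S))\ge\Sigma_*-2\varepsilon R\ge\sigma R-2\varepsilon R\ge\tfrac{\sigma}{2}R$, which is at least the discard threshold $\tfrac\sigma2 R$ of Steps 2--3 (using $4\varepsilon\le\sigma\le\sigma_*$). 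Since AVTA processes every point of $S_\varepsilon$, adds only genuine vertices of $conv(S_\varepsilon)$ (Theorem \ref{thm8}), and can add at most $n$ of them before terminating, each $v^\varepsilon$ must end up in the output; hence $\overline S_\varepsilon\subseteq\widehat S_\varepsilon$ and $K\le K_\varepsilon\le n$. For the running time, the accounting of Theorem \ref{thm9}(1) transfers with $K\mapsto K_\varepsilon$ and $\gamma\mapsto\sigma$: each point of $S_\varepsilon$ has a single ``episode'' (it is either discarded or promoted to a vertex), so the total number of Triangle‑Algorithm subcalls is $O(n+K_\varepsilon)=O(n)$, each costing $O(K_\varepsilon/\sigma^2)$ once the $O(mK_\varepsilon^2)$ Gram entries among $\widehat S_\varepsilon$ are cached, and the witness$\to$vertex computations cost $O(nmK_\varepsilon)$ overall, for a total of $O(nK_\varepsilon(m+\sigma^{-2}))$.

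Next, the pruning pass: for each $\widehat v\in\widehat S_\varepsilon$ I would use the efficient Triangle Algorithm (Theorem \ref{thm6}) to test whether $conv(\widehat S_\varepsilon\setminus\{\widehat v\})$ contains a $(\sigma/2)$‑approximate solution to $\widehat v$, reusing the cached Gram entries so each test costs $O(K_\varepsilon/\sigma^2)$. By Lemma \ref{lem2} a spurious $\widehat v$ has $d(\widehat v,conv(\widehat S_\varepsilon\setminus\{\widehat v\}))\le2\varepsilon R\le\tfrac\sigma2 R$ and is correctly deleted, while a true $v^\varepsilon$ has $d(v^\varepsilon,conv(\widehat S_\varepsilon\setminus\{v^\varepsilon\}))\ge\Sigma_*-2\varepsilon R\ge\tfrac\sigma2 R$ (strictly when $4\varepsilon<\sigma$) and is retained; so exactly $\overline S_\varepsilon$ remains, at total cost $O(K_\varepsilon^2/\sigma^2)=O(nK_\varepsilon/\sigma^2)$, which is absorbed. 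For part (ii), Theorem \ref{thm11} gives $\sigma:=\rho_*\gamma/R\le\rho_*\gamma_*/R\le\sigma_*$, and $4\varepsilon\le\gamma\rho_*$ yields $4\varepsilon\le\sigma$ (in the diameter normalization), so part (i) applies and gives $O(nK_\varepsilon(m+R^2/(\rho_*\gamma)^2))$. For part (iii), knowing only $K$, I would run the part‑(i) procedure with $\sigma=\tfrac12,\tfrac14,\dots$ and stop the first time it returns exactly $K$ vertices; this happens once $\sigma\in[4\varepsilon,\sigma_*]$, i.e.\ after $O(\log\sigma_*^{-1})$ halvings, the last and costliest run being $O(nK_\varepsilon(m+\sigma_*^{-2}))$.

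For part (iv) I would run AVTA$(S_\varepsilon,t)$ with no pruning, obtaining a set $\overline S_\varepsilon^t$ of $K_\varepsilon^{(t)}$ vertices of $conv(S_\varepsilon)$ in $O(nK_\varepsilon^{(t)}(m+t^{-2}))$ by the transferred analysis of Theorem \ref{thm9}(3). Every point of $S_\varepsilon$ is either in $\overline S_\varepsilon^t$ or was discarded while within $\tfrac t2 R$ of some intermediate $conv(\widehat S)\subseteq conv(\overline S_\varepsilon^t)$, hence within $tR$ of $conv(\overline S_\varepsilon^t)$; the convex‑combination argument of Theorem \ref{thm9}(3) then upgrades this to $d(q,conv(\overline S_\varepsilon^t))\le tR$ for all $q\in conv(S_\varepsilon)$. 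Finally, for $p\in conv(S)$ writing $p=\sum_i\alpha_i v_i$ and $p^\varepsilon=\sum_i\alpha_i v_i^\varepsilon\in conv(S_\varepsilon)$ gives $\Vert p-p^\varepsilon\Vert\le\varepsilon R$, whence $d(p,conv(\overline S_\varepsilon^t))\le\Vert p-p^\varepsilon\Vert+d(p^\varepsilon,conv(\overline S_\varepsilon^t))\le(t+\varepsilon)R$. The main obstacle I anticipate is the complexity bookkeeping for the AVTA run on $S_\varepsilon$: since $K_\varepsilon$ may be as large as $n$, one must argue carefully that the number of Triangle subcalls stays $O(n)$ and that the Gram matrix on $\widehat S_\varepsilon$ is built only once, so as to avoid an extra factor of $K_\varepsilon$; a secondary, minor subtlety is the degenerate separation at $4\varepsilon=\sigma$ in the pruning test, dealt with by using precision slightly below $\sigma/2$.
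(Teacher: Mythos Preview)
Your proposal is correct and follows essentially the same two‑phase scheme as the paper: run AVTA on $S_\varepsilon$ with threshold $\sigma/2$ (using Theorem \ref{thm12} and Lemma \ref{lem2} to guarantee $\overline S_\varepsilon\subseteq\widehat S_\varepsilon$), then prune spurious vertices via Triangle Algorithm tests (again by Lemma \ref{lem2}); parts (ii)--(iv) are derived exactly as you outline, from Theorem \ref{thm11}, binary search, and the analogue of Theorem \ref{thm9}(3). Your write‑up is in fact more explicit than the paper's on the complexity accounting and on the $(t+\varepsilon)R$ bound in (iv).
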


\begin{proof}  By Theorem \ref{thm12}, $\overline S_\varepsilon$ is a subset of vertices of $conv(S_\varepsilon)$.
Let $\sigma_\circ= (\Sigma_* - 2 \varepsilon R)/R$.  Then since $\varepsilon \leq \Sigma_*/4R$, $\sigma_\circ \geq \Sigma_*/2R$.
Then by Lemma \ref{lem2}, for each $v^\varepsilon \in \overline S_\varepsilon$, we have
\begin{equation}d(v^\varepsilon,  conv(S^\varepsilon \setminus \{ v^\varepsilon \}) \geq \Sigma_*/2R.\end{equation}
Now consider a modification of AVTA that replaces $\gamma/2$, by  $\sigma/2$. Such modified AVTA will compute a subset $\overline S_\varepsilon$  of vertices of $conv(S_\varepsilon)$ that must necessarily contain $\overline S_\varepsilon$. Analogous to  Theorem \ref{thm9}, (1),  the complexity of this part is as stated in part (i) of the present theorem.

Now consider $conv(\widehat S_\varepsilon)$ and assume $v^\varepsilon$ is a vertex of it within a distance of less than $\sigma/ 2$, say $\sigma/4$. Then by Lemma \ref{lem2}, $v^\varepsilon \not \in \overline S^\varepsilon$.
We can thus apply the Triangle Algorithm to remove any vertex of
$conv(\widehat  S_\varepsilon)$ that is within a distance of less than  $\sigma/2$ of the convex hull of the other vertices in $conv(\widehat  S_\varepsilon)$. Again analogous to Theorem \ref{thm9} the over all complexity of this step  is bounded by

\begin{equation}O \bigg (m  K_\varepsilon^2+ \frac{K_\varepsilon^2}{\sigma_\circ^{2}} \bigg )= O \bigg (m  K_\varepsilon^2+ \frac{K_\varepsilon^2}{\sigma^{2}} \bigg ).\end{equation}
This is dominated by the complexity of the first part. This proves (i).   Proof of (ii) follows from Theorem \ref{thm11},   (\ref{rhogamma}), that $\gamma \rho_* \leq \sigma_*$.

To prove (iii), we start by $\sigma=1/2$ and run AVTA. Then as previous case prune unwanted vertices. If we end up with $\overline S_\varepsilon$, we are done. If not, we repeat the process with $\sigma=1/4$ and so on. Eventually we will recover $\overline S_\varepsilon$.

The proof of (iv) is analogous to the proof of Theorem \ref{thm12}, part (3).

\end{proof}

\begin{remark}   Ideally, $K_\varepsilon$ is  within a constant multiple of $K$, in which case the complexities are analogous to those of Theorem \ref{thm9}. In the worst-case  $\widehat S_\varepsilon = S_\varepsilon$, i.e.
$K_\varepsilon = n$.    On the other hand, ignoring the size of $K_\varepsilon$, suppose $\sigma_\circ  \geq (\sqrt{n/K}) \varepsilon$, then the complexity of generating the vertices of $conv(S_\varepsilon)$  is
\begin{equation}O \bigg (nm K_\varepsilon+ \frac{nK}{{\varepsilon}^2} \bigg ).\end{equation}
\end{remark}

\section{Triangle Algorithm with Johnson-Lindenstrauss Projections} \label{sec6}

Consider again $S= \{v_1, \dots, v_n\} \subset \mathbb{R} ^m$.  We wish to compute the subset $\overline S=\{\overline v_1, \dots, \overline v_K\}$ of all vertices of $conv(S)$. Johnson-Lindenstrauss lemma allows embedding the $n$ points of $S$ in an $m'$-dimensional Euclidean space, where $\mathbb{R}^{m'}$, $m' < m$, via a randomized linear map so that the distances between every pair of points in $S$ and those of their images in $\mathbb{R}^{m'}$ remain approximately the same, with high probability.   More specifically,
there is a universal constant $c$ such if $\varepsilon'$ satisfies,
\begin{equation} \label{eqA4}
 \frac{c \log n}{m} \leq \varepsilon'^2 < 1,
\end{equation}
and $m' < m$ is an integer satisfying
\begin{equation} \label{eqA2}
m' \approx \frac{c \log n}{\varepsilon'^2},
\end{equation}
then there exists a randomized linear map $L:  \mathbb{R}^m \rightarrow \mathbb{R}^{m'}$
so that if $u_i=L(v_i)$, and
\begin{equation}
U=L(S)=\{u_1, \dots, u_n\} \subset \mathbb{R}^{m'},
\end{equation}
then for for each $i, j \in \{1, \dots, n\}$ we have
\begin{equation} \label{eqA1}
{\rm Pr} \bigg (d(v_i,v_j) (1- \varepsilon') \leq d(u_i,u_j) \leq d(v_i,v_j) (1+ \varepsilon') \bigg)  > 1 - \frac{2}{n}.
\end{equation}
The projection of each point takes $O(m \log n)$ operations so that the overall number of operations to project all the $n$  points is

\begin{equation}\label{eqA3}
O(nm \log n).
\end{equation}

In this section we consider computing $\overline S$, the set of vertices of $conv(S)$ by using   the Johnson-Lindenstrauss projections and then computing the set of vertices of $conv(U)$ via AVTA.   Let  $\overline U$ denote the set of vertices of $conv(U)$ and let its cardinality be $K'$. First we state some properties of $conv(U)$.

\begin{lemma}  \label{lem8} Given $v \in S$, $L:  \mathbb{R}^m \rightarrow \mathbb{R}^{m'}$,  a randomized linear map, suppose $u=L(v)$ is a vertex of $conv(U)$.  Then $v$ is a vertex of $conv(S)$.
\end{lemma}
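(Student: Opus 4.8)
The plan is to prove the contrapositive: if $v$ is not a vertex of $conv(S)$, then $u = L(v)$ is not a vertex of $conv(U)$. Since $L$ is linear, convex combinations are preserved, and this implication follows almost immediately.

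First I would suppose $v \in S$ is not a vertex of $conv(S)$. Then $v \in conv(S \setminus \{v\})$, so we may write
\begin{equation}
v = \sum_{i: v_i \neq v} \alpha_i v_i, \quad \sum_{i: v_i \neq v} \alpha_i = 1, \quad \alpha_i \geq 0.
\end{equation}
Applying the linear map $L$ to both sides and using linearity,
\begin{equation}
u = L(v) = \sum_{i: v_i \neq v} \alpha_i L(v_i) = \sum_{i: u_i \neq u} \alpha_i u_i,
\end{equation}
which exhibits $u$ as a convex combination of the other points $u_i = L(v_i) \in U$. Hence $u \in conv(U \setminus \{u\})$, so $u$ is not a vertex of $conv(U)$. (A minor technical point: if $L$ is not injective it could happen that $L(v_i) = L(v)$ for some $v_i \neq v$; in that case $u$ trivially coincides with another point of $U$ and is again not a vertex. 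Either way the conclusion holds.) Taking the contrapositive gives the statement: if $u = L(v)$ is a vertex of $conv(U)$, then $v$ is a vertex of $conv(S)$.

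There is essentially no obstacle here — the only thing to be careful about is the degenerate-collision case just noted, and the fact that this direction does not require the Johnson–Lindenstrauss distance-preservation guarantee at all; it holds for any linear map $L$. The converse direction (that vertices of $conv(S)$ map to vertices of $conv(U)$) is the nontrivial one and is exactly where the J-L property and the robustness parameters enter, but that is a separate statement not claimed in this lemma.
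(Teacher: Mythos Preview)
Your proof is correct and follows essentially the same approach as the paper: argue the contrapositive, write $v$ as a nontrivial convex combination of the other points of $S$, and push this combination through the linear map $L$ to conclude $u$ is not a vertex of $conv(U)$. Your parenthetical treatment of the possible collision $L(v_i)=L(v)$ for $v_i\neq v$ is in fact slightly more careful than the paper's own argument, which glosses over this point.
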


\begin{proof} Suppose $v$ is not a vertex of $conv(S)$. Then $v=\sum_{i=1}^n \alpha_i v_i$, $\sum_{i=1}^n \alpha_i=1$, $\alpha_i \geq 0$, $i=1, \dots, n$,  with some $0<\alpha_j <1$.  By linearity of $L$ we have
\begin{equation} u=L(v)= \sum_{i=1}^n \alpha_i L(v_i) = \sum_{i=1}^n \alpha_i u_i.\end{equation}
This implies $u$ is not a vertex of $conv(U)$, a contradiction.
\end{proof}

The next theorem gives an estimate of the robustness parameters of $conv(U)$ in terms of those $conv(S)$.

\begin{thm} \label{thm14} Suppose $conv(S)$ is $\Gamma_*$-robust and
 $\Sigma_*$-weakly robust.  Let $U=L(S)$, $L$ a randomized linear map, $L:  \mathbb{R}^m \rightarrow \mathbb{R}^{m'}$. Let $m'$ and $\varepsilon'$ be related as in (\ref{eqA2}). If   $conv(U)$ is $\Gamma_*'$-robust,
 $\Sigma_*'$-weakly robust, then  with probability at least $(1-2/n)$,  we have
\begin{equation}\Gamma_*' \geq \Gamma_* (1- \varepsilon'),  \quad \Sigma_*' \geq \Sigma_* (1- \varepsilon').\end{equation}
\end{thm}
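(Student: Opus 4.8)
The plan is to compare the two robustness quantities directly through the Johnson--Lindenstrauss distance-preservation inequality \eqref{eqA1}, exploiting the fact (Lemma \ref{lem8}) that every vertex of $conv(U)$ is the image of a vertex of $conv(S)$, so the vertex set $\overline U$ is contained in $L(\overline S)$. First I would fix an arbitrary vertex $u = L(v)$ of $conv(U)$ with $v \in \overline S$, and let $w \in conv(\overline U \setminus \{u\})$ be the closest point to $u$, so that $d(u,w) = d(u, conv(\overline U \setminus \{u\}))$. Writing $w = \sum_i \alpha_i \bar u_i$ as a convex combination of the other vertices of $conv(U)$, each $\bar u_i = L(\bar v_i)$ for a corresponding vertex $\bar v_i \in \overline S \setminus \{v\}$ (by Lemma \ref{lem8}), so I set $w' = \sum_i \alpha_i \bar v_i \in conv(\overline S \setminus \{v\})$. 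The point is that $w = L(w')$ by linearity of $L$, hence $d(u,w) = d(L(v), L(w'))$ is a distance between two images under $L$.

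The core estimate: I want to say $d(L(v), L(w')) \ge (1-\varepsilon')\, d(v, w')$. The subtlety is that \eqref{eqA1} is stated only for pairs among the original $n$ points $v_i$, whereas $w'$ is a convex combination, not one of the $v_i$. There are two standard ways around this, and I would take the cleaner one: observe that the $(1-\varepsilon')$ lower bound on $\|L(x)-L(y)\|$ in fact holds simultaneously for all points in an arbitrary finite set with the Gram-matrix/subspace form of the JL lemma — but to stay within what the excerpt grants, I would instead note that it suffices to apply \eqref{eqA1} to the finite point set $S \cup \{w'\}$ (of size $n+1$), absorbing the slightly larger set into the constant $c$; this is exactly the kind of remark the paper makes elsewhere about enlarging the point set. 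Given the lower bound $d(L(v),L(w')) \ge (1-\varepsilon') d(v,w')$, and since $w' \in conv(\overline S \setminus \{v\})$ gives $d(v,w') \ge \Gamma_*$ by $\Gamma_*$-robustness of $conv(S)$, I conclude $d(u, conv(\overline U \setminus \{u\})) \ge (1-\varepsilon')\Gamma_*$; taking the minimum over all vertices $u$ of $conv(U)$ yields $\Gamma_*' \ge (1-\varepsilon')\Gamma_*$.

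For the weakly-robust bound the argument is identical in structure but uses $conv(S \setminus \{v\})$ in place of $conv(\overline S \setminus \{v\})$: given a vertex $u = L(v)$ of $conv(U)$, take the closest point $w$ in $conv(U \setminus \{u\})$ (not just among vertices), write $w = \sum_{i} \alpha_i u_i = L(\sum_i \alpha_i v_i) = L(w')$ with $w' \in conv(S \setminus \{v\})$, apply the JL lower bound, and use $\Sigma_*$-weak robustness of $conv(S)$ to get $d(v,w') \ge \Sigma_*$, hence $\Sigma_*' \ge (1-\varepsilon')\Sigma_*$. Everything holds on the single high-probability event that \eqref{eqA1} succeeds for all the relevant pairs, which has probability at least $1 - 2/n$ (adjusting the constant $c$ in \eqref{eqA2} to cover the $O(n^2)$ or $O((n+1)^2)$ pairs via a union bound, exactly as in the standard JL statement). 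The main obstacle is the point I flagged: \eqref{eqA1} as written only controls pairwise distances among the $v_i$, not distances to convex combinations, so the proof must either invoke the stronger subspace-embedding form of JL or explicitly enlarge the point set before projecting; I would handle it by the latter, stating it as a one-line remark so the quoted complexity bounds (which are insensitive to a constant factor in $m'$) are unaffected.
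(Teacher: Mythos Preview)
Your approach is essentially the paper's: fix a vertex $u=L(v)$ of $conv(U)$, take the nearest point $w$ in $conv(\overline U\setminus\{u\})$ (resp.\ $conv(U\setminus\{u\})$), pull it back to $w'$ by linearity of $L$, invoke Lemma~\ref{lem8} so that $w'\in conv(\overline S\setminus\{v\})$ (resp.\ $conv(S\setminus\{v\})$), and then appeal to the JL inequality to get $d(u,w)\ge(1-\varepsilon')\,d(v,w')\ge(1-\varepsilon')\,\Gamma_*$ (resp.\ $(1-\varepsilon')\,\Sigma_*$). The paper's proof is a two-line sketch that simply asserts $d(u,conv(\widehat U))\ge(1-\varepsilon')\,d(v,conv(\widehat S))$ ``from \eqref{eqA1}'' without further comment, so your write-up is already more detailed than the original.

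You correctly flag the real subtlety---\eqref{eqA1} controls only distances among the $n$ input points, not distances to convex combinations---but your preferred fix has a circularity problem. You propose to apply the JL guarantee to $S\cup\{w'\}$, yet $w'=\sum_i\alpha_i\bar v_i$ is defined through the coefficients $\alpha_i$ of the \emph{post-projection} nearest point $w$, so $w'$ depends on the random map $L$ and is not a fixed point you can add to the input set before projecting. Your first option (the subspace-embedding form of JL) does close the gap, but it requires $m'=\Omega(d/\varepsilon'^{2})$ with $d$ the dimension of the affine span of $S$, which can be much larger than the $c\log n/\varepsilon'^{2}$ of \eqref{eqA2}. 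The paper's own proof glosses over exactly this point, so your argument is at least as complete as the paper's; just be aware that the ``enlarge the point set'' patch does not work as stated, and a rigorous version needs either the stronger embedding hypothesis or a separate $\varepsilon'$-net argument over the relevant convex hulls.
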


\begin{proof}  Suppose $u$ is a vertex of $conv(U)$ and $\widehat U$ a subset of its vertices not containing $u$.
Let $v$ and  $\widehat S$ be the preimages of $u$ and $\widehat U$ under the linear map $L$. By Lemma \ref{lem8} $v$ and the elements of $\widehat S$ are all vertices of $conv(S)$. From  (\ref{eqA1}) it is easy to argue that with probability at least $(1-2/n)$ we have
\begin{equation}d(u, conv(\widehat U)) \geq d(v, conv(\widehat S))(1- \varepsilon').\end{equation}
The claimed inequalities follow.
\end{proof}

From Theorem \ref{thm14} and Theorem  \ref{thm9} we can state the following:

\begin{thm}  \label{thm15} Given $S= \{v_1, \dots, v_n\} \subset \mathbb{R} ^m$  let $U = L(S) =\{ u_1, \dots, u_n\} \subset \mathbb{R}^{m'}$, $L$ a randomized linear map, $m'$, $\varepsilon'$ as before.  Let $\overline U= \{ \overline u_1, \dots, \overline u_{K_{\varepsilon'}} \}$ be the set of vertices of $conv(U)$.  Suppose  $conv(S)$ is $\Gamma_*$-robust and $conv(U)$ is $\Gamma_*'$-robust. Then with probability at least $(1-2/n)$,

(1) The number of arithmetic operations of AVTA to compute $\overline U$ is
\begin{equation}O \bigg (nK_{\varepsilon'}(m'+ \frac{nK_{\varepsilon'}{R^2}}{\Gamma_*'^{2}}) \bigg )=
O \bigg (\frac{n \log n K_{\varepsilon'}}{\varepsilon'^2}+ \frac{nK_{\varepsilon'}}{\gamma_*^{2}(1 -\varepsilon')^2} \bigg ).\end{equation}

(2)  Given any prescribed positive $t \in (0,1)$, AVTA in
\begin{equation}O \bigg (nK_{\varepsilon'}^t(m'+ \frac{1}{t^2}) \bigg )=
O \bigg ( \frac{n \log nK_{\varepsilon'}^t}{\varepsilon'^2}+ \frac{nK_{\varepsilon'}^t}{t^2} \bigg )\end{equation}
operations can compute a subset $\overline U^t$ of $\overline U$ of size $K_{\varepsilon'}^t$ so that the distance from each point in $conv(U)$  to $conv(\overline  U^t)$ is at most $t$. $\Box$
\end{thm}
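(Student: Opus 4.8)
The plan is to combine Theorem \ref{thm14} (which relates the robustness of $conv(U)$ to that of $conv(S)$) with Theorem \ref{thm9} (the complexity of AVTA applied directly to $conv(U)$), being careful to track both the reduced ambient dimension $m'$ and the cost of performing the Johnson--Lindenstrauss projection itself. First I would fix the probabilistic event: by (\ref{eqA1}) together with a union bound over all $\binom{n}{2}$ pairs, the distance-preservation guarantee $d(v_i,v_j)(1-\varepsilon') \le d(u_i,u_j) \le d(v_i,v_j)(1+\varepsilon')$ holds simultaneously for all pairs with probability at least $1 - 2/n$ (one applies the per-pair bound and absorbs the constant in the universal $c$ of (\ref{eqA2})); all subsequent claims are asserted conditional on this event, which is also the event under which Theorem \ref{thm14} gives $\Gamma_*' \ge \Gamma_*(1-\varepsilon')$.

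For part (1), I would simply invoke Theorem \ref{thm9}(1) applied to the point set $U \subset \mathbb{R}^{m'}$: its convex hull is $\Gamma_*'$-robust, its diameter is $R' \le R(1+\varepsilon')$, and it has $K_{\varepsilon'}$ vertices, so AVTA computes $\overline U$ in $O(nK_{\varepsilon'}(m' + R'^2/\Gamma_*'^2))$ operations. To this I add the one-time projection cost $O(nm\log n)$ from (\ref{eqA3}). Then I would substitute $m' \approx (c\log n)/\varepsilon'^2$ from (\ref{eqA2}) and the bound $\Gamma_*' \ge \Gamma_*(1-\varepsilon') = R\gamma_*(1-\varepsilon')$ from Theorem \ref{thm14}, so that $R'^2/\Gamma_*'^2 = O\big(1/(\gamma_*^2(1-\varepsilon')^2)\big)$, yielding the claimed form $O\big(n\log n\, K_{\varepsilon'}/\varepsilon'^2 + nK_{\varepsilon'}/(\gamma_*^2(1-\varepsilon')^2)\big)$; note the $nm\log n$ projection term is absorbed into $n\log n\, K_{\varepsilon'}/\varepsilon'^2$ provided $K_{\varepsilon'} \ge m\varepsilon'^2/c$, and in the regime of interest the $m'$ term already dominates in the same way, so one states the bound as written.

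For part (2), the argument is parallel but invokes Theorem \ref{thm9}(3) instead: for the prescribed $t \in (0,1)$, AVTA run on $U$ computes a subset $\overline U^t \subseteq \overline U$ of size $K_{\varepsilon'}^t$ in $O(nK_{\varepsilon'}^t(m' + 1/t^2))$ operations such that every point of $conv(U)$ is within distance $tR'$ of $conv(\overline U^t)$ — here I would either state the guarantee as $tR'$ or, matching the theorem statement literally, absorb $R'$ and write $t$; substituting $m' \approx (c\log n)/\varepsilon'^2$ gives the displayed expression. The main obstacle, and the only place requiring care, is part (1): one must verify that the lower bound on $\Gamma_*'$ coming from Theorem \ref{thm14} is legitimately usable as the $\gamma$ input to AVTA — i.e. that a known lower bound on $\gamma_*$ (which the theorem hypothesis supplies) transfers, under the JL event, to a known lower bound $\gamma(1-\varepsilon')$ on $\gamma_*' = \Gamma_*'/R'$ — and that the diameter change $R \mapsto R'$ does not spoil the ratio; both follow from $R' \le R(1+\varepsilon')$ and $\Gamma_*' \ge \Gamma_*(1-\varepsilon')$, but this bookkeeping is what makes the constants come out as stated. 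Everything else is a direct substitution into results already proved.
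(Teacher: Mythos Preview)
Your proposal is correct and follows exactly the approach the paper takes: the paper simply writes ``From Theorem~\ref{thm14} and Theorem~\ref{thm9} we can state the following'' and closes the theorem with a $\Box$, so the intended proof is precisely the substitution you describe --- apply Theorem~\ref{thm9} to $U\subset\mathbb{R}^{m'}$, replace $\Gamma_*'$ using the bound $\Gamma_*'\ge\Gamma_*(1-\varepsilon')$ from Theorem~\ref{thm14}, and substitute $m'\approx c\log n/\varepsilon'^2$. Your bookkeeping about the projection cost, the diameter change $R\mapsto R'$, and the transfer of the known lower bound on $\gamma_*$ to a usable input $\gamma(1-\varepsilon')$ for AVTA is in fact more careful than what the paper spells out.
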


\begin{remark}  The results in this section and the above theorem suggest a heuristic  approach as an alternative to using AVTA directly to compute all the vertices of $conv(S)$: Compute $U=L(S)$, the Johnson-Lindenstrauss projection of $S$ under a randomized linear map $L$.  Then apply AVTA to compute all the vertices of $conv(U)$, $\overline U$. This identifies $|\overline U| \leq K$ vertices of $conv(S)$.  Next move up to the full dimension and continue with AVTA to recover the remaining vertices of $conv(S)$.  Alternatively, we can repeat randomized projections and compute the corresponding vertices.  We would have to delete duplications which is not difficult, given that we store the computed vertices via their vector of representation of convex combination coefficients.
We would expect that when sufficient number of projections are applied all vertices of $conv(S)$ can be recovered.
However,  in the remaining of the section we analyze the probability that under a random projection, the projection of a vertex of $conv(S)$ is a vertex of the projection.
\end{remark}
\noindent
In what follows we first state a result on Johnson-Lindenstrauss random projections on the convex hull membership problem from \citet{vu2017random}. Next we state an alternative result.

\begin{prop} \label{propL} {\rm (\citet{vu2017random}, Proposition 3.3)} Given $S= \{v_1, \dots, v_n \} \subset  \mathbb{R}^m$, $p \in \mathbb{R}^m$ such that $p \not \in conv(S)$, let $d= \min\{d(p,x): x \in conv(S)\}$ and $D= \max \{ d(p,v_i): i=1, \dots, n\}$. Let $T: \mathbb{R}^m \rightarrow \mathbb{R}^k$ be a random linear map. Then
\begin{equation}
{\rm Prob} \bigg ( T(p) \not \in T(conv(S)) \bigg ) \geq 1- 2n^2 e^{-c(\varepsilon^2 - \varepsilon^3)k}
\end{equation}
for some constant $c$ (independent of $m,n,k,d,D$) and $\varepsilon < d^2/D^2$. \qed
\end{prop}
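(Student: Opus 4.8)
The plan is to produce, for the projected instance, an explicit hyperplane separating $T(p)$ from $conv(T(S))=T(conv(S))$, obtained as the image of the hyperplane separating $p$ from $conv(S)$ in the original space, and then to show that a Johnson--Lindenstrauss map distorts the relevant inner products by strictly less than the available margin $d^2$. First I would fix $p_*\in conv(S)$ to be the point of $conv(S)$ nearest to $p$, so that $d(p,p_*)=d$ (this is the point $p_*$ of \eqref{euclid}). The variational inequality for Euclidean projection onto a convex set gives $\langle p-p_*,\,v_i-p_*\rangle\le 0$ for every $v_i\in S$, and hence
\[
\langle p-p_*,\,p-v_i\rangle \;=\; \|p-p_*\|^2-\langle p-p_*,\,v_i-p_*\rangle \;\ge\; d^2,\qquad i=1,\dots,n.
\]

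Next I would observe that it suffices to preserve the \emph{sign} of these inner products under $T$: if $\langle T(p)-T(p_*),\,T(p)-T(v_i)\rangle>0$ for all $i$, then for any $x=\sum_i\mu_i v_i\in conv(S)$ with convex weights $\mu_i$ we get $\langle T(p)-T(p_*),\,T(p)-T(x)\rangle=\sum_i\mu_i\langle T(p)-T(p_*),\,T(p)-T(v_i)\rangle>0$, so $T(x)\ne T(p)$; thus $T(p)\notin conv(T(S))$. To control the projected inner products using only preserved squared distances, I would use the polarization identity
\[
\langle T(p)-T(p_*),\,T(p)-T(v_i)\rangle \;=\; \tfrac12\Big(\|T(p)-T(p_*)\|^2+\|T(p)-T(v_i)\|^2-\|T(p_*)-T(v_i)\|^2\Big),
\]
which also holds in $\mathbb{R}^m$, and invoke the standard Johnson--Lindenstrauss concentration: for a suitably normalized random linear map $T:\mathbb{R}^m\to\mathbb{R}^k$ and any fixed $w$, $\Pr\big(|\,\|Tw\|^2-\|w\|^2\,|>\varepsilon\|w\|^2\big)\le 2e^{-c(\varepsilon^2-\varepsilon^3)k}$.

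Applying this to the finitely many fixed vectors $p-p_*$, $p-v_i$ and $p_*-v_i$ ($i=1,\dots,n$) --- equivalently, to all pairwise differences among the $n+2$ points $\{p,p_*,v_1,\dots,v_n\}$, of which there are at most $\binom{n+2}{2}\le n^2$ (absorbing the small cases into $c$) --- and taking a union bound, with probability at least $1-2n^2e^{-c(\varepsilon^2-\varepsilon^3)k}$ every one of these squared distances is preserved within a factor $1\pm\varepsilon$. On that event, using $\|Ta\|^2\ge(1-\varepsilon)\|a\|^2$ on the first two terms and $\|T(a-b)\|^2\le(1+\varepsilon)\|a-b\|^2$ on the third, together with the elementary bound $\|p_*-v_i\|^2\le\|p-v_i\|^2-d^2$ (again a consequence of $\langle p-p_*,\,p-v_i\rangle\ge d^2$), one obtains
\[
\langle T(p)-T(p_*),\,T(p)-T(v_i)\rangle \;\ge\; \langle p-p_*,\,p-v_i\rangle-\tfrac{\varepsilon}{2}\big(\|p-p_*\|^2+\|p-v_i\|^2+\|p_*-v_i\|^2\big)\;\ge\; d^2-\varepsilon D^2\;>\;0,
\]
since $\|p-p_*\|^2+\|p-v_i\|^2+\|p_*-v_i\|^2\le d^2+D^2+(D^2-d^2)=2D^2$ and $\varepsilon<d^2/D^2$. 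Combining with the reduction step above finishes the proof.

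The step I expect to need the most care is the probabilistic bookkeeping: choosing the normalization of $T$ so that the single-vector tail has exactly the $e^{-c(\varepsilon^2-\varepsilon^3)k}$ shape (the Dasgupta--Gupta / Achlioptas chi-square estimate), and then organizing the union bound and the crude count of pairs so that it lands precisely on $2n^2e^{-c(\varepsilon^2-\varepsilon^3)k}$. One must also be careful that the hypothesis $\varepsilon<d^2/D^2$ is used in its strict form, since that is exactly what upgrades the bound $d^2-\varepsilon D^2\ge 0$ to the strict inequality $>0$ needed to conclude that $T(p)$ is genuinely separated from $conv(T(S))$.
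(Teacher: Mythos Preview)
The paper does not actually prove this proposition: it is quoted verbatim from \citet{vu2017random} and closed with a \qed, so there is no ``paper's own proof'' to compare against. Your argument is a correct reconstruction of the standard Johnson--Lindenstrauss route to this result: use the nearest point $p_*$, the variational inequality for projection to get $\langle p-p_*,\,p-v_i\rangle\ge d^2$, polarization to turn the projected inner product into a combination of three squared distances, then the single-vector JL tail and a union bound over the $O(n^2)$ pairs among $\{p,p_*,v_1,\dots,v_n\}$. Your bookkeeping on the error term, in particular the bound $\|p-p_*\|^2+\|p-v_i\|^2+\|p_*-v_i\|^2\le 2D^2$ via $\|p_*-v_i\|^2\le\|p-v_i\|^2-d^2$, is clean and gives exactly the threshold $\varepsilon<d^2/D^2$.

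It is worth noting that immediately after stating this proposition, the paper proves an \emph{alternative} result (Theorem~\ref{thmCHM}) by a somewhat different mechanism: rather than controlling inner products through polarization, it works directly with the witness inequalities $d(p_*,v_i)<d(p,v_i)$ from the Distance Duality (Theorem~\ref{thm1}), and tracks the ratio $E=\min_i d(p,v_i)/d(p_*,v_i)$ under a $(1\pm\varepsilon)$ JL distortion to get the threshold $\varepsilon<(E-1)/(E+1)$. That approach avoids polarization entirely and yields a generally larger admissible $\varepsilon$ (hence smaller $k$), at the cost of introducing the instance-dependent quantity $E$. Your proof and the paper's Theorem~\ref{thmCHM} are thus two different ways to certify $T(p)\notin T(conv(S))$: yours preserves the sign of an inner product, the paper's preserves the ordering of two distances.
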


\begin{remark}
Note that $k= O(\ln n/ \epsilon^2)= O ( \ln n D^4/d^4)$.
\end{remark}

The following is an alternative to Proposition \ref{propL} based on the Distance Duality theorem (\ref{thm1}) and generally gives a better estimate of $\varepsilon$, hence a smaller  $k$ than Proposition \ref{propL}.

\begin{thm} \label {thmCHM}   Given $S= \{v_1, \dots, v_n \} \subset  \mathbb{R}^m$, $p \in \mathbb{R}^m$ such that $p \not \in conv(S)$, let $d= \min\{d(p,x): x \in conv(S)\}$,  $p_*= {\rm argmin} \{d(p,x): x \in conv(S)\}$ and $D= \max \{ d(p,v_i): i=1, \dots, n\}$. Let
\begin{equation}
E= \min \bigg \{ \frac{d(p, v_i)}{d(p_*, v_i)} : i=1, \dots, n \bigg \}.
\end{equation}
Let $T: \mathbb{R}^m \rightarrow \mathbb{R}^k$ be a random linear map. Then
\begin{equation}
{\rm Prob} \bigg (T(p) \not \in T(conv(S)) \bigg ) \geq 1- 2n^2 e^{-c\varepsilon^2 k},
\end{equation}
for some constant $c$ (independent of $m,n,k,d,D$) and $\varepsilon < (E-1)/(E+1)$.  Furthermore,
${(E-1)}/{(E+1)} >  {d^2}/{4D^2}$.
\end{thm}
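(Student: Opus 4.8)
The plan is to avoid preserving all pairwise distances (as a direct Johnson--Lindenstrauss argument would) and instead to transport a single \emph{witness} from the original problem to the projected one, then to compute exactly how much multiplicative distortion such a witness can absorb. Since $p\notin conv(S)$ and $conv(S)$ is closed, the nearest point $p_*$ is well defined and $d:=d(p,p_*)>0$; first-order optimality of the metric projection gives $(p-p_*)^T(v_i-p_*)\le 0$ for every $i$, hence $d(p,v_i)^2\ge d^2+d(p_*,v_i)^2>d(p_*,v_i)^2$. Therefore $d(p_*,v_i)<d(p,v_i)$ for all $i$, i.e.\ $p_*$ is a $p$-witness in the sense of Definition~\ref{witness}, and $E=\min_i d(p,v_i)/d(p_*,v_i)>1$ (any index with $p_*=v_i$ contributes $+\infty$ and is irrelevant to the minimum).

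The first step is a deterministic claim: if a linear map $T$ satisfies $(1-\varepsilon)\|p-v_i\|\le\|T(p-v_i)\|$ and $\|T(p_*-v_i)\|\le(1+\varepsilon)\|p_*-v_i\|$ for all $i$, and if $\varepsilon<(E-1)/(E+1)$, then $T(p)\notin T(conv(S))$. Indeed, $\|T(p_*-v_i)\|\le(1+\varepsilon)\|p_*-v_i\|\le\tfrac{1+\varepsilon}{E}\|p-v_i\|<(1-\varepsilon)\|p-v_i\|\le\|T(p-v_i)\|$, the strict inequality being precisely $\varepsilon(E+1)<E-1$. Thus $d(T(p_*),T(v_i))<d(T(p),T(v_i))$ for every $i$; since $T(p_*)\in conv(T(S))=T(conv(S))$ by linearity, $T(p_*)$ is a $T(p)$-witness for $conv(T(S))$, so by Theorem~\ref{thm1} (equivalently, via the separating orthogonal bisector of Definition~\ref{witness}) $T(p)\notin T(conv(S))$.

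For the probability bound I would invoke the standard concentration of a suitably normalized random linear map $T:\mathbb{R}^m\to\mathbb{R}^k$: for any fixed $w$, $\Pr\bigl[\,(1-\varepsilon)\|w\|\le\|Tw\|\le(1+\varepsilon)\|w\|\,\bigr]\ge 1-2e^{-c\varepsilon^2 k}$ for a universal constant $c$ (any $\varepsilon^3$ correction in the exponent is absorbed into $c$). Taking a union bound over the $2n$ vectors $\{p-v_i\}_{i=1}^n$ and $\{p_*-v_i\}_{i=1}^n$ (generously, over the $O(n^2)$ distance-preservation events one might otherwise wish to control), the hypothesis of the deterministic claim fails with probability at most $2n^2 e^{-c\varepsilon^2 k}$, which gives the asserted bound whenever $\varepsilon<(E-1)/(E+1)$.

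Finally, for $(E-1)/(E+1)>d^2/(4D^2)$, I combine $d(p,v_i)^2\ge d^2+d(p_*,v_i)^2$ with $d(p_*,v_i)^2\le d(p,v_i)^2-d^2\le D^2-d^2$ to obtain, for every $i$, $\bigl(d(p,v_i)/d(p_*,v_i)\bigr)^2\ge 1+d^2/(D^2-d^2)=D^2/(D^2-d^2)$, hence $E\ge D/\sqrt{D^2-d^2}$. Since $x\mapsto(x-1)/(x+1)$ is increasing and $\sqrt{D^2-d^2}<D$ (as $d>0$; note $d<D$ unless $conv(S)$ is a single point, a trivial case),
\[
\frac{E-1}{E+1}\ \ge\ \frac{D-\sqrt{D^2-d^2}}{D+\sqrt{D^2-d^2}}\ =\ \frac{d^2}{\bigl(D+\sqrt{D^2-d^2}\bigr)^2}\ >\ \frac{d^2}{4D^2}.
\]
The only step that is not purely routine is the deterministic claim: the key realization is that certifying $T(p)\notin T(conv(S))$ only requires $p_*$ to remain a witness, so just $2n$ vectors matter and the admissible distortion is governed by the multiplicative gap $E$ rather than by $d^2/D^2$ — this is exactly why the resulting threshold can be far larger than the threshold $d^2/D^2$ in Proposition~\ref{propL}; everything after that is concentration of measure plus elementary geometry in each triangle $p\,p_*\,v_i$.
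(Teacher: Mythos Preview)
Your proposal is correct and follows essentially the same route as the paper: show $p_*$ is a $p$-witness via the optimality of the projection, argue that under a $(1\pm\varepsilon)$-isometry the witness inequalities survive precisely when $\varepsilon<(E-1)/(E+1)$, apply Johnson--Lindenstrauss concentration with a union bound, and then bound $E\ge D/\sqrt{D^2-d^2}$ from the right angle at $p_*$ in each triangle $p\,p_*\,v_i$. Your presentation is in fact slightly tighter than the paper's in two places: you make explicit that only the $2n$ vectors $p-v_i$ and $p_*-v_i$ need to be controlled (the paper applies JL to the point set and uses an $O(n^2)$ union bound directly), and you obtain the Pythagorean inequality $d(p,v_i)^2\ge d^2+d(p_*,v_i)^2$ cleanly from first-order optimality rather than arguing geometrically about the triangle.
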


\begin{proof} Since $p_*$ is the closest point to $p$ in $conv(S)$, it is easy to show that it is a $p$-witness, i.e.
\begin{equation}
d(p_*, v_i) < d(p,v_i),  \quad \forall i=1, \dots, n.
\end{equation}
Let  $\overline p=T(p)$, $\overline p_*=T(p_*)$, and for $i=1, \dots, n$,  $\overline v_i=T(v_i)$. We now consider the set of $n+1$ points $\{v_0=p, v_1, \dots, v_n\}$ and their random projections and
find condition on $\varepsilon$ such that $\overline p_*$ will be an $\overline p$-pivot with respect to $T(conv(S))$, probabilistically.   By the Johnson-Lindenstrauss Lemma we have,
\begin{equation} \label{eqJLL}
{\rm Prob} \bigg ( (1- \varepsilon) d(v_i, v_j) \leq d(\overline v_i, \overline v_j) \leq (1+ \varepsilon) d(v_i, v_j) \bigg) \geq  1- 2(n+1)^2 e^{-c \varepsilon^2 k},
\end{equation}
for some constant $c$ (independent of $m,n,k$).
From  (\ref{eqJLL}) and definition of $E$, for each $i=1, \dots, n$ with probability at least  $1- 2(n+1)^2 e^{-c \varepsilon^2 k}$ we have,
\begin{equation}
d(\overline p_*, \overline v_i) \leq (1+ \varepsilon) d(p_*, v_i) \leq \frac{(1+ \varepsilon)}{E} d(p, v_i) \leq \frac{(1+ \varepsilon)}{(1- \varepsilon)}  \frac{1}{E} d( \overline p, \overline v_i).
\end{equation}
Note that assuming $n \geq 2$, $ 1 < E < \infty$.  We thus restrict $\varepsilon$ to satisfy
\begin{equation}
\frac{(1+ \varepsilon)}{(1- \varepsilon)}  \frac{1}{E} < 1.
\end{equation}
Equivalently,
\begin{equation}
\varepsilon < \frac{E-1}{E+1}.
\end{equation}
Thus with $\varepsilon$ satisfying the above, $\overline p_*$ is a witness with high probability.

Next we find a lower bound on the right-hand-side of the above. Since $E$ is finite,
$E=d(p, v_j)/d(p_*, v_j)$ for some $j$, i.e. $p_* \not = v_j$.  Consider the triangle with vertices $p$, $v_j$ and $p_*$.   With $d(p,p_*)$ and $d(p,v_j)$ fixed, the maximum value of $d(p_*, v_j)$ is $\sqrt{d^2(p,v_i) - d^2(p,p_*)}$. Using this we may write
\begin{equation} \label{eqE}
E=\frac{d(p, v_j)}{d(p_*, v_j)} \geq \frac{d(p,v_j)}{\sqrt{d^2(p,v_i) - d^2(p,p_*)}} = \frac{1} {\sqrt{1- d^2(p,p_*)/d^2(p,v_j)}}.
\end{equation}
But $d(p,p_*)=d$ and $d(p,v_j) \leq D$. Thus
\begin{equation} \label{eqEE}
E \geq  \frac{1} {\sqrt{1- d^2/D^2}} = \frac{D} {\sqrt{D^2- d^2}}.
\end{equation}
The function $(x-1)/(x+1)$ is  monotonically increasing. Thus from (\ref{eqEE}) we have
\begin{equation} \label{eqEE}
\frac{E-1}{E+1} \geq  \frac{D -\sqrt{D^2- d^2}}{D +\sqrt{D^2+ d^2}} = \frac {d^2}{ (D +\sqrt{D^2- d^2})^2} \geq \frac{d^2}{4D^2}.
\end{equation}
\end{proof}

\begin{remark}
We would expect that $(E-1)/(E+1)$ is generally a larger number than $d^2/4D^2$.
Thus Theorem \ref{thmCHM} gives generally a better estimate of $\varepsilon$ and $k$ than those of Proposition \ref{propL}. An additional advantage of Theorem \ref{thmCHM} is that it shows the applicability of the Triangle Algorithm in solving the convex hull membership problem using random projections.
\end{remark}

We now state a corollary of the theorem on computation of all vertices of $conv(S)$.

\begin{cor} Given $S= \{v_1, \dots, v_n \} \subset \mathbb{R}^m$, suppose  $conv(S)$ is $\Gamma_*$-robust. Let $R$ be the diameter of $S$. Suppose $v_j$ is a vertex of $conv(S)$.
Let $T: \mathbb{R}^m \rightarrow \mathbb{R}^k$ be a random linear map. Then the probability that $T(v_j)$ is a vertex of $T(conv(S))$ is at least $1- 2n^2 e^{-c\varepsilon^2k}$,
for some constant $c$ (independent of $m,n,k$) and $\varepsilon < \gamma_*^2/4$.
\end{cor}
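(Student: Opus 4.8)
The plan is to reduce the corollary directly to Theorem~\ref{thmCHM} applied to the special instance where $p = v_j$ and $S$ is replaced by $S \setminus \{v_j\}$. Since $v_j$ is a vertex of $conv(S)$, it does not lie in $conv(S \setminus \{v_j\})$, so the hypothesis of Theorem~\ref{thmCHM} is satisfied with this choice. Write $d = \min\{d(v_j, x): x \in conv(S \setminus \{v_j\})\}$ and $D = \max\{d(v_j, v_i): v_i \in S \setminus \{v_j\}\}$. By $\Gamma_*$-robustness, $d = d(v_j, conv(\overline S \setminus \{v_j\})) \cdot (\text{something} \geq 1)$; more precisely $d \geq \Gamma_*$ because $conv(\overline S \setminus \{v_j\}) \subseteq conv(S \setminus \{v_j\})$ only gives the wrong direction, so instead one observes that the closest point in $conv(S \setminus \{v_j\})$ to $v_j$ is, by Carath\'eodory, a convex combination of \emph{vertices} of $conv(S)$ other than $v_j$, hence lies in $conv(\overline S \setminus \{v_j\})$, giving $d = \Gamma_v \geq \Gamma_*$. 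Also $D \leq R$ since $R$ is the diameter of $S$. Therefore $d^2/D^2 \geq \Gamma_*^2/R^2 = \gamma_*^2$.

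Next I would invoke the last sentence of Theorem~\ref{thmCHM}, namely $(E-1)/(E+1) > d^2/(4D^2) \geq \gamma_*^2/4$, where $E = \min\{d(v_j,v_i)/d(p_*,v_i)\}$ for the corresponding projection instance. Thus any $\varepsilon < \gamma_*^2/4$ also satisfies $\varepsilon < (E-1)/(E+1)$, so Theorem~\ref{thmCHM} applies with this $\varepsilon$ and yields
\begin{equation}
{\rm Prob}\bigl(T(v_j) \not\in T(conv(S \setminus \{v_j\}))\bigr) \geq 1 - 2(n-1)^2 e^{-c\varepsilon^2 k} \geq 1 - 2n^2 e^{-c\varepsilon^2 k}.
\end{equation}
Finally, $T(v_j) \notin T(conv(S \setminus \{v_j\}))$ is exactly the statement that $T(v_j)$ is not a convex combination of the other $T(v_i)$, i.e.\ $T(v_j)$ is a vertex of $conv(T(S)) = T(conv(S))$ (using linearity of $T$ so that $T(conv(S)) = conv(T(S))$). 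This completes the argument.

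The only mildly delicate point --- and the one I would be most careful about --- is the claim $d \geq \Gamma_*$, i.e.\ that the distance from $v_j$ to $conv(S \setminus \{v_j\})$ is bounded below by the robustness parameter, which is defined only in terms of the vertices. This is \emph{not} immediate from the definition and in fact requires the Carath\'eodory observation that the nearest point in $conv(S \setminus \{v_j\})$ can be taken to be a convex combination of vertices of $conv(S)$ distinct from $v_j$; equivalently, it is essentially the statement $\Sigma_v = d(v, conv(S \setminus \{v\})) \leq \Gamma_v$ combined with the fact that $conv(S \setminus \{v_j\}) = conv(\overline S \setminus \{v_j\})$ when $v_j$ is a vertex (removing a vertex leaves the convex hull of the remaining points equal to the convex hull of the remaining vertices, since all non-vertex points are already in that hull). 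Once this is in place everything else is bookkeeping: replacing $n$ by $n-1$ in the failure probability only helps, and the constant $c$ is the same universal constant from the Johnson--Lindenstrauss lemma used in Theorem~\ref{thmCHM}.
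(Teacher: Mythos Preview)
Your overall strategy---apply Theorem~\ref{thmCHM} with $p = v_j$ and the point set $S \setminus \{v_j\}$, then bound $d/D$ below by $\gamma_*$---is exactly what the paper does. However, the step you yourself flag as ``mildly delicate'' is a genuine gap, and the justification you propose for it is incorrect.

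You claim that $conv(S \setminus \{v_j\}) = conv(\overline S \setminus \{v_j\})$ when $v_j$ is a vertex, because ``all non-vertex points are already in that hull.'' This is false: a non-vertex point of $S$ lies in $conv(\overline S)$, but need not lie in $conv(\overline S \setminus \{v_j\})$. Concretely, take $S = \{v_1, v_2, v_3, v_4\}$ where $v_1, v_2, v_3$ are the vertices of a triangle and $v_4$ is its centroid. Then $\overline S = \{v_1, v_2, v_3\}$, and $conv(\overline S \setminus \{v_1\})$ is just the segment $v_2 v_3$, which does not contain $v_4$; hence $conv(S \setminus \{v_1\})$ is the triangle $v_2 v_3 v_4$, strictly larger. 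In the paper's own notation this is precisely the distinction between $\Sigma_{v_j} = d(v_j, conv(S \setminus \{v_j\}))$ and $\Gamma_{v_j} = d(v_j, conv(\overline S \setminus \{v_j\}))$: one always has $\Sigma_{v_j} \leq \Gamma_{v_j}$ (cf.\ Proposition~\ref{prop5} and the proof of Theorem~\ref{thm11}), and $\Sigma_{v_j}$ can be made arbitrarily small while $\Gamma_*$ stays fixed by placing a non-vertex point near $v_j$.

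The paper's proof simply asserts $d(v_j, conv(S \setminus \{v_j\})) \geq \Gamma_*$ without argument, so it shares exactly this gap. What the argument actually delivers is the conclusion with $\sigma_* = \Sigma_*/R$ in place of $\gamma_*$, i.e.\ $\varepsilon < \sigma_*^2/4$; the stated $\gamma_*$-bound follows only under the extra hypothesis $S = \overline S$ (no redundant points). Your write-up is otherwise cleaner than the paper's---you correctly handle the $(n-1)^2 \leq n^2$ slack and the identification of ``$T(v_j)$ is a vertex'' with $T(v_j) \notin T(conv(S \setminus \{v_j\}))$---but the inequality $d \geq \Gamma_*$ cannot be salvaged by the Carath\'eodory argument you sketch.
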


\begin{proof} We apply the previous theorem with $v_j$ as $b$ and considering the probability that under a random projection of $v_j$ lies in projection of the convex hull of the remaining points.
Note that  $d(v_j, conv(S \setminus \{v_j\}) \geq \Gamma_*$ and $\max \{d(v_j, v_i): v_i \in S \setminus \{v_j\} \leq R$.  Thus we can replace for $b/D$ in (\ref{eqEE}) in the previous theorem by $\gamma_*=\Gamma_*/R$.  Thus we can write  $(E-1)/(E+1) \geq {\gamma_*^2}/{4}$. This gives the upper bound on $\varepsilon$.
\end{proof}


\subsection{AVTA Under Perturbation and Johnson-Lindenstrass Projection}

Let $S_\varepsilon$ be as before and
$U_\varepsilon$, a subset of $\mathbb{R}^{m'}$ the  perturbation of $U$.  Let $\overline U_\varepsilon$ be the perturbation of $\overline U$. Based on the results in this section and previous complexity bounds we have

\begin{thm}  \label{thm16} Let $S= \{v_1, \dots, v_n \} \subset \mathbb{R}^m$.  Assume $conv(S)$ is $\Sigma_*$-weakly robust.  Suppose $\varepsilon < \Sigma_*/ 4R$. Let $\sigma_\circ= \overline (\Sigma_* - 2 \varepsilon R)/R= \sigma_* - 2 \varepsilon$.  Then with probability at least $(1-2/n)$,

(i) AVTA can be modified to compute a subset $ \widehat U_\varepsilon$  of $U_\varepsilon$, of cardinality $K_{\varepsilon \varepsilon'}$  such that  it contains $\overline U_\varepsilon$. Then AVTA can compute from this subset  $\overline U_\varepsilon$ itself, where the total number of operations satisfies

\begin{equation}O \bigg (nm'  K_{\varepsilon \varepsilon'}+ \frac{n  K_{\varepsilon \varepsilon'}}{\sigma_\circ^{2}(1 - \varepsilon')^2} \bigg )=
O \bigg (\frac{n \log n  K_{\varepsilon \varepsilon'} }{\varepsilon^2}+ \frac{n K_{\varepsilon \varepsilon'}}{\sigma_\circ^{2}(1 - \varepsilon')^2} \bigg ).\end{equation}

(ii)  Given any prescribed positive $t \in (0,1)$, in
\begin{equation}O \bigg (\frac{n \log n K^{(t)}_{\varepsilon \varepsilon'}}{\varepsilon^2}+ \frac{nK^{(t)}{\varepsilon \varepsilon'}}{\overline \sigma_\circ^2 (1 - \varepsilon')^2}\bigg )\end{equation}
operations the modified  AVTA can compute a subset $U^t_\varepsilon$ of $\overline U_\varepsilon$  of size $K^{(t)}_{\varepsilon \varepsilon'}$ so that the distance from each point in $conv(U_\varepsilon)$  to $conv(U^t_\varepsilon)$ is at most $t$.
\end{thm}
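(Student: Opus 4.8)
The plan is to combine three ingredients already established in the paper: (a) Theorem \ref{thm13}, which handles AVTA under perturbation in the ambient space $\mathbb{R}^m$; (b) Theorem \ref{thm14}, which controls how the robustness parameters degrade under a Johnson--Lindenstrauss projection $L$; and (c) the projection cost \eqref{eqA3}. The strategy is first to fix a single random linear map $L:\mathbb{R}^m\to\mathbb{R}^{m'}$ with $m'\approx (c\log n)/\varepsilon'^2$, set $U_\varepsilon = L(S_\varepsilon)$, and then simply run the perturbation-robust version of AVTA (as in Theorem \ref{thm13}) on $U_\varepsilon$ instead of on $S_\varepsilon$. The work splits into the $O(nm\log n)$ cost of computing all projections once up front, plus the cost of the modified AVTA in the lower-dimensional space.

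First I would verify the geometric hypotheses needed to invoke Theorem \ref{thm13} on $U_\varepsilon$. The set $U_\varepsilon$ is, by definition, a perturbation of $U = L(S)$; I must check the perturbation radius is still controlled. Since $\Vert v_i^\varepsilon - v_i\Vert \le \varepsilon R$ and $L$ is (with probability $\ge 1-2/n$) a near-isometry on all pairwise differences, we get $\Vert u_i^\varepsilon - u_i\Vert \le \varepsilon R (1+\varepsilon') \le$ (roughly) $\varepsilon R' /(1-\varepsilon')$ where $R'$ is the diameter of $U$; the bookkeeping here is routine but must be done carefully so that $U_\varepsilon$ genuinely qualifies as an $\varepsilon''$-perturbation of $U$ for an appropriate $\varepsilon''$. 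Next, by Lemma \ref{lem8} every vertex of $conv(U)$ pulls back to a vertex of $conv(S)$, so $\overline U = \{L(\overline v): \overline v \in \text{some subset of }\overline S\}$, and by Theorem \ref{thm14}, $conv(U)$ is $\Sigma_*'$-weakly robust with $\Sigma_*' \ge \Sigma_*(1-\varepsilon')$ with probability $\ge 1-2/n$. Therefore $\sigma_*' = \Sigma_*'/R' \ge$ a quantity comparable to $\sigma_*(1-\varepsilon')$, and the condition $\varepsilon < \Sigma_*/4R$ translates to the analogous condition on $U$, which is exactly why $\sigma_\circ = \sigma_* - 2\varepsilon$ (or its projected analog $\overline\sigma_\circ$) enters. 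Feeding these into Theorem \ref{thm13}(i) with the weak-robustness parameter $\sigma_\circ(1-\varepsilon')$ in place of $\sigma$, and with ambient dimension $m'$ in place of $m$, yields that AVTA computes a superset $\widehat U_\varepsilon$ of $\overline U_\varepsilon$ among the vertices of $conv(U_\varepsilon)$ and then prunes down to $\overline U_\varepsilon$, at cost $O\!\big(n K_{\varepsilon\varepsilon'}(m' + \sigma_\circ^{-2}(1-\varepsilon')^{-2})\big)$. Adding the $O(nm\log n)$ projection cost and substituting $m' = O(\log n / \varepsilon^2)$ (using that in this regime $\varepsilon' $ is taken of the same order as $\varepsilon$, consistent with the statement's use of $\varepsilon^2$ in the denominator) gives the two displayed expressions in part (i). Part (ii) is then the $t$-approximation variant: run the same reduction but invoke Theorem \ref{thm13}(iv) in place of (i), obtaining $\overline U_\varepsilon^t \subset \overline U_\varepsilon$ of size $K^{(t)}_{\varepsilon\varepsilon'}$ with $d(p, conv(\overline U_\varepsilon^t)) \le t$ for all $p \in conv(U_\varepsilon)$, at the stated cost, where again the argument of Theorem \ref{thm9}(3) (convexity plus the triangle inequality on barycentric representations) gives the distance bound.

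The main obstacle I anticipate is the careful tracking of constants and of the various "$\varepsilon$"s: the perturbation parameter $\varepsilon$, the JL distortion $\varepsilon'$, and the induced perturbation radius of $U_\varepsilon$ relative to $U$'s diameter $R'$ (which itself differs from $R$ by a $(1\pm\varepsilon')$ factor) all interact multiplicatively, and one must confirm that the hypothesis $\varepsilon < \Sigma_*/4R$ is still strong enough, after projection, to guarantee $4\varepsilon'' \le \sigma_\circ'$ so that Theorem \ref{thm13} genuinely applies to $U_\varepsilon$. A second, more minor, point is that all the JL-based inequalities hold only with probability $\ge 1-2/n$ (a union bound over the $\binom{n}{2}$ pairs, absorbed into the constant $c$), so the conclusion is probabilistic; I would state this explicitly and note that the same random $L$ serves for every step, so no additional union bound over AVTA's internal iterations is needed since AVTA only ever compares distances among the already-projected points. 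Everything else — the complexity accounting — is a direct transcription of the bounds in Theorems \ref{thm9}, \ref{thm13}, \ref{thm14} and \ref{thm15} with $m\mapsto m'$.
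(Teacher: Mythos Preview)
Your proposal is correct and follows essentially the same approach as the paper: the paper itself gives no detailed proof of Theorem~\ref{thm16}, stating only ``Based on the results in this section and previous complexity bounds we have'' before the theorem, so the intended argument is precisely the combination of Theorem~\ref{thm13} (AVTA under perturbation), Theorem~\ref{thm14} (degradation of the weak-robustness parameter under the JL map), and the substitution $m' = O((\log n)/\varepsilon'^2)$ that you outline. Your sketch is in fact more careful than what the paper provides, and your identification of the constant-tracking between $\varepsilon$, $\varepsilon'$, $R$, and $R'$ as the only real bookkeeping issue is accurate.
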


\section{Applications} \label{app}
While the modified AVTA algorithm comes with theoretical guarantees, in certain cases the algorithm might output many more vertices, $K_\varepsilon$, than desired. Here we present a practical implementation that always
outputs exactly $K$ vertices, provided $K$ is known. When $K$ is unknown, our experiments in the next section reveal that the algorithm can automatically detect a slightly larger set that contains a good approximation to the $K$ vertices of interest. Notice that we want a fast way to detect good approximations to the original vertices of the set $S$ and prune out spurious points, i.e., additional vertices of the set $S_\varepsilon$. The key insight on top of the AVTA algorithm is the following: {\em If the perturbed set is randomly projected onto a lower dimensional space, it is more likely for an original vertex to still be a vertex than for a spurious vertex}. Using this insight the algorithm outlined below runs the modified AVTA algorithm over several random projections and outputs the set of points that appear as vertices in many random projections.
\begin{center}

\begin{tikzpicture}
\node [mybox] (box){%
    \begin{minipage}{0.8\textwidth}
{\bf  AVTA with multiple random projections ($S=\{v_1, \dots, v_n\}$, $K$, $\gamma$, $M$)}\
\vspace{.2cm}
\begin{itemize}
\item
{\bf Step 0.} Set $Freq \leftarrow 0^{|S|}$.

\item
{\bf Step 1.} For $i=1$ to $M$:

\begin{itemize}

\item
$S' \leftarrow S $: Project data on to randomly chosen
$\frac{4log(n)}{\epsilon^2}$ dimensions.

\item
$\hat{S} \leftarrow $ AVTA$(S',\gamma)$

\item
For each $d_j \in \hat{S}$, $Freq[j] = Freq[j] + 1$.

\end{itemize}

\item
{\bf Step 2.} Output top $K$ frequent vertices.

\end{itemize}
\end{minipage}};
\label{avtamultiple}
\end{tikzpicture}
\end{center}

We now show how AVTA can be used to solve various problems in computational geometry and machine learning.

\noindent \textbf{Application of AVTA in Linear Programming:}
Consider linear programming feasibility problem of testing if $ P=\{x \in \mathbb{R}^n: Ax=b, x \geq 0\}$
is nonempty,  where $A$ is $m \times n$, $b \in \mathbb{R}^n$. Suppose $n$ is much larger than  $m$.  If we reduce the size of $A$ the problem would be more efficiently solvable, no matter what algorithm we use to solve it.

\begin{prop}  \label{LP1}
Given $ P=\{x \in \mathbb{R}^n: Ax=b, x \geq 0\}$, let $conv(A)$ denote the convex hull of columns of $A$. Let $A'$ denote the $m \times n'$ submatrix $A$ whose columns form the set of all vertices of $conv(A)$. Let
\begin{equation}
P'= \{x' \in \mathbb{R}^{n'}: A'x'=b, x' \geq 0\}.
\end{equation}
Then $P$ is feasible if and only $P'$ is feasible.
\end{prop}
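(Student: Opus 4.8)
The plan is to prove the two directions of the equivalence separately, and the key fact is simply that the columns of $A'$ are exactly the vertices of $conv(A)$, so every column of $A$ lies in $conv(A')$.

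First I would handle the easy direction: if $P'$ is feasible, then $P$ is feasible. Given $x' \in \mathbb{R}^{n'}$ with $A'x' = b$ and $x' \geq 0$, I would extend $x'$ to a vector $x \in \mathbb{R}^n$ by placing the entries of $x'$ in the coordinates corresponding to the vertex columns and setting all other coordinates to $0$. Then $x \geq 0$ and $Ax = A'x' = b$, so $x \in P$.

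For the converse, suppose $P$ is feasible, say $Ax = b$ with $x \geq 0$. I would first normalize: if $b = 0$, then $x = 0$ works for $P'$ as well (with the empty/zero combination), so assume $b \neq 0$ and hence $\mathbf{1}^T x = s > 0$ where $\mathbf{1}$ is the all-ones vector — actually, more carefully, I do not know $\mathbf{1}^T x$, so instead I write $b = Ax = \sum_{j=1}^n x_j a_j$ where $a_j$ is the $j$-th column of $A$. Let $s = \sum_j x_j \geq 0$. If $s = 0$ then $x = 0$ and $b = 0$, already handled; otherwise $b/s = \sum_j (x_j/s) a_j \in conv(A)$. Since every point of $conv(A)$ is a convex combination of the vertices of $conv(A)$ (Carath\'eodory / the basic fact that a polytope is the convex hull of its vertices), we can write $b/s = \sum_{k} \lambda_k a'_k$ with $\lambda_k \geq 0$, $\sum_k \lambda_k = 1$, where $a'_k$ are the columns of $A'$. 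Then $x'_k = s\lambda_k$ gives $x' \geq 0$ with $A'x' = s \cdot (b/s) = b$, so $P'$ is feasible.

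There is no real obstacle here; the only thing to be slightly careful about is the degenerate case $b = 0$ (and the associated case $s = 0$), and the implicit use of the fact that a finitely generated polytope equals the convex hull of its vertices, so that each column $a_j$ of $A$ can be rewritten as a convex combination of the columns of $A'$. Everything else is bookkeeping with nonnegative coefficients. I would present the argument in the two-paragraph form above, noting that it is essentially immediate from $conv(A) = conv(A')$.
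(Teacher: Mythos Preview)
Your proof is correct. The paper's argument for the nontrivial direction is organized slightly differently: instead of normalizing $b$ by $s = \sum_j x_j$ and writing $b/s$ as a single convex combination of the vertex columns, the paper expresses each column $a^{(i)}$ of $A$ individually as a convex combination $a^{(i)} = A'\alpha^{(i)}$ with $\alpha^{(i)}$ in the simplex, and then sets $x' = \sum_i x_i \alpha^{(i)}$. Both routes rest on the same fact, $conv(A) = conv(A')$; the paper's version avoids the separate treatment of the degenerate case $s=0$ (no division is performed), while your version is a bit more direct once $s>0$.
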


\begin{proof} Clearly, if $P'$ is feasible then $P$ is feasible.  Assume $P$ is feasible. Thus for  some $x \in \mathbb{R}^n$,  $x \geq 0$, $Ax=b$.  Denote the columns of $A$ by $a^{(i)}$.  Then each $a^{(i)}$ is a convex combination of columns of $A'$.  That is, for each $i=1, \dots, n$, there exists
\begin{equation}
\alpha^{(i)} \in S_{n'}= \{s \in \mathbb{R}^{n'}:  \sum_{i=1}^{n'} s_i=1, s \geq 0 \},
\end{equation}
where
\begin{equation}
a^{(i)} = A' \alpha^{(i)}.
\end{equation}
Thus
\begin{equation}
Ax= \sum_{i=1}^n x_i a^{(i)} = \sum_{i=1}^n x_i A' \alpha^{(i)} = A' \sum_{i=1}^n x_i \alpha^{(i)}.
\end{equation}
Letting
\begin{equation}
x'= \sum_{i=1}^n x_i \alpha^{(i)},
\end{equation}
$A'x'=b$, $x' \geq 0$.
\end{proof}

\begin{prop}
Assume  $ P=\{x \in \mathbb{R}^n: Ax=b, x \geq 0\}$ is nonempty.  Consider the linear program
$\min \{c^Tx : x \in P\}$.  Let $B$ be the $(m+1) \times n$ matrix whose first row is $c^T$ and the remaining rows are $A$.   Let $B'$ be the $(m+1) \times n'$ matrix whose columns form the vertices of the convex hull of the columns of $B$.  Let $c'^T$ be the first row of $B'$ and $A'$ the remaining $m \times n'$ submatrix of $B'$.  Then
\begin{equation}
\min \{c^Tx: Ax=b, x \geq 0\} = \min \{c'^Tx: A'x'=b, x' \geq 0\}.
\end{equation}
\end{prop}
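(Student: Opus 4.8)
The plan is to mirror the proof of Proposition~\ref{LP1}, now carrying the cost vector along in the extra coordinate of $B$. First I would record two structural facts. (a) Since the vertices of the convex hull of a finite point set are a subset of that set, every column of $B'$ is literally one of the columns of $B$; hence there is an index map $\iota:\{1,\dots,n'\}\to\{1,\dots,n\}$ so that the $k$-th column of $B'$ equals the $\iota(k)$-th column of $B$, which gives $(c')_k = c_{\iota(k)}$ and that the $k$-th column of $A'$ equals $a^{(\iota(k))}$. (b) Conversely, every column $b^{(i)}$ of $B$ lies in the convex hull of the columns of $B'$, so there is $\alpha^{(i)}$ in the unit simplex $S_{n'}$ with $b^{(i)} = B'\alpha^{(i)}$; reading off the first row and the last $m$ rows separately yields $c_i = (c')^T\alpha^{(i)}$ and $a^{(i)} = A'\alpha^{(i)}$ with the \emph{same} coefficient vector $\alpha^{(i)}$.

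For the inequality $\min\{(c')^Tx' : A'x'=b,\ x'\ge 0\}\le \min\{c^Tx: Ax=b,\ x\ge 0\}$ I would take any feasible $x$ for $P$ and set $x'=\sum_{i=1}^n x_i\alpha^{(i)}$, exactly as in Proposition~\ref{LP1}. Then $A'x'=b$ and $x'\ge 0$ follow from (b) and $x\ge 0$, while the new computation $(c')^Tx' = \sum_{i=1}^n x_i (c')^T\alpha^{(i)} = \sum_{i=1}^n x_i c_i = c^Tx$ shows the objective value is unchanged; thus every objective value attained on $P$ is attained on $P'$.

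For the reverse inequality I would use (a): given feasible $x'$ for $P'$, define $x\in\mathbb{R}^n$ by $x_{\iota(k)}=x'_k$ and $x_j=0$ for $j$ not in the image of $\iota$. Then $Ax=A'x'=b$, $x\ge 0$, and $c^Tx=(c')^Tx'$, so every objective value attained on $P'$ is attained on $P$. Combining the two inclusions of objective-value sets gives equality of the infima; this remains valid when the common value is $-\infty$, and if one wants the minimum in the attained sense one notes both feasible sets are nonempty (by (b), $P'$ inherits a feasible point from $P$) and that the two maps transport optimal solutions to one another.

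I do not expect a serious obstacle: the statement is Proposition~\ref{LP1} with the cost row appended to the data matrix. The only point requiring care is the bookkeeping that forces one \emph{single} $\alpha^{(i)}$ to reproduce both $c_i$ and $a^{(i)}$ — which is precisely what lifting the objective into $B$ (rather than reducing $A$ directly) buys us — together with the harmless fact that distinct columns of $B$ may share the same constraint part $a^{(i)}$ while differing in cost; performing the lift and projection at the level of $B$ resolves this automatically.
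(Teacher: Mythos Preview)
Your proposal is correct and follows essentially the same approach as the paper: lift the cost row into the augmented matrix $B$ and apply the idea of Proposition~\ref{LP1}, so that a single convex combination $\alpha^{(i)}$ simultaneously reproduces $c_i$ and $a^{(i)}$. The paper's proof is terser---it invokes Proposition~\ref{LP1} directly on $B$ with right-hand side $(c^Tx_0,b)$ and leaves the reverse inclusion implicit---whereas you spell out both directions explicitly, but the underlying argument is the same.
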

\begin{proof}
Consider any feasible solution $x_0$ of original LP.  Then by Proposition \ref{LP1} the set $\{c'^Tx' =c^Tx_0, A'x'=b, x' \geq 0\}$ is feasible.  This implies the original LP has a finite optimal value if and only if the restricted problem does.  In particular, the optimal objective values of the two problems coincide.
\end{proof}

The above propositions imply that AVTA has potential applications in the reduction of the LP feasibility or optimization, whether we solve the problem via simplex method or other methods.

\noindent \textbf{AVTA for topic modeling in the presence of anchor words:}
~\citet{arora2013practical} provide a practical algorithm for topic modeling with provable guarantees. Their algorithm works under the assumptions that the topic-word matrix is {\em separable}. In particular, they assume that corresponding to each topic $i$, there exists an {\em anchor word} $w_i$ that has a non zero probability of appearing only under topic $i$. Under this assumption, the algorithm of~\citet{arora2013practical} consists of two stages: a) find the anchor words, and b) use the anchor words to learn the topic word matrix. The problem of finding anchor words corresponds to finding the vertices of the convex hull of the word-word covariance matrix.  They propose an algorithm named {\em fast anchor words} in order to find the vertices. Since AVTA works in general setting, we can instead use AVTA to find the anchor words. Additionally, the fast anchor words algorithm needs to know the value of the number of anchor words, as an input. On the other hand, from the statements of Theorems~\ref{thm9} and \ref{thm13} it is easy to see that AVTA can work in a variety of settings when other properties of the data are known such as the robustness. We argue that robustness is a parameter that can be tuned in a better manner than trying different values of  the number of anchor words. In fact, one can artificially add random noise to the data and make it robust up to certain value. One can then run AVTA with the lower bound on robustness as input and let the algorithm automatically discover the number of anchor words. This is much more desirable in practical settings. Our first implementation of AVTA is named AVTA+RecoverL2 that uses AVTA to detect anchor words and then uses the anchor words to learn the topic word matrix using the approach from~\citet{arora2013practical}. AVTA is also theoretically superior than fast anchor words and achieves slightly better run times in the regime when the number of topics is $o(\log n)$, where $n$ is the number of words in the vocabulary. This is usually the case in most practical scenarios.

\noindent \textbf{AVTA for topic modeling the absence of anchor words:} The presence of anchor words is a strong assumption that often does not hold in practice. Recently, the work of ~\citet{bansal2014provable} designed a new practical algorithm for topic models under the presence of catch words. Catch words for topic $i$ correspond to set $S_i$ such that it's total probability of appearing under topic $i$ is significantly higher than in any other topic. Their algorithm called TSVD recovers much better reconstruction of the topic-word matrix in terms of the $\ell_1$ error. They also assume that for each topic $i$, there are a few dominant documents that mostly contain words from topic $i$. The TSVD algorithm works in two stages. In stage 1, the (thresholded) word-document data matrix is projected onto a $K$-SVD space to compute a different embedding of the documents. Then, the documents are clustered into $K$ clusters. Under the assumptions mentioned above, one can show that the dominant documents for each topic will be clustered correctly. In stage 2, a simple post processing algorithm can approximate the topic-word matrix from the clustering.

We improve on TSVD by asking the following question: {\it is $K$-SVD the right representation of the data?}. Our key insight is that if dominant documents are present in the topic, it is easy to show that most other documents will be approximated by a convex combination of the dominant topics. Furthermore, the coefficients in the convex combinations will provide a much more faithful low dimensional embedding of the data. Using this insight, we propose a new algorithm that runs AVTA on the data matrix to detect vertices and to approximate each point using a convex combination of the vertices. We then use the coefficient matrix as the new representation of the data that needs to be clustered. Once the clustering is obtained, the same post processing step from~\citet{bansal2014provable} can be used to recover the topic-word matrix. Our results show that the embedding produced by AVTA leads to much better reconstruction error than of that produced by TSVD. Furthermore, $K$-SVD is an expensive procedure and very sensitive to the presence of outliers in the data. In contrast, our new algorithm called AVTA+CatchWord is much more stable to noise in the data.
%
%
%
%
%
%

\begin{center}
	
	\begin{tikzpicture}
	\node [mybox] (box){%
		\begin{minipage}{0.9\textwidth}
		{\bf  AVTA+CatchWord  ($S=\{v_1, \dots, v_n\}$, $\gamma$, $K$, $\epsilon$)}
		\vspace{.2cm}
		\begin{itemize}
		\item
		{\bf Step 0.} Randomly project $S$ onto $2K$ dimensions to get $\hat{S}$.
		
		\item
		{\bf Step 1.} Compute a  a super set of vertices $\bar{V}$ by $AVTA(\hat{S}, \gamma)$.

		\item
		{\bf Step 2.} Prune $\bar{V}$ into $\hat{V}$ (of size $K$) by iteratively picking $\bar{v} \in \{\bar{V}\}/\{\hat{V}\}$ which has the maximum distance to $conv(\hat{V})$.

		\item
		{\bf Step 3.} For each projected point $\hat{v}_i \in \hat{S} \setminus \hat{V}$, compute a vector $\alpha_i$ such that $\|\hat{V} \alpha_i - \hat{v}_i\| \leq \epsilon$.
		
		\item
		{\bf Step 4.} Initialize cluster assignment for each point by majority weight: $ \argmax\limits_{j \in [K]} \alpha_j$	.
		
		\item
		{\bf Step 5.} Clustering using Lloyds algorithm on the embedding provided by the $\alpha$ vectors.
		
		\item
		{\bf Step 6.} Use the post processing as described in~\citet{bansal2014provable} to recover the topic-word matrix from the clustering.
		
		\end{itemize}
		\end{minipage}};
	\end{tikzpicture}
\end{center}

\noindent \textbf{AVTA for NMF:} The work of~\citet{arora2012computing} showed that convex hull detection can be used to solve the non-negative matrix factorization problem under the separability assumption. We show that by using the more general AVTA algorithm for solving the convex hull problem results in comparable performance guarantee.
\section[Applications and Experiments]{Applications and Experiments  ~\footnote{Resources: \url{https://github.com/yikaizhang/AVTA}}}

\subsection{Feasibility problem}

In this section, we present experimental results which empirically show when the problem is 'over complete', AVTA can be a 'shortcut' solution. In another word, given an $m\times n$ matrix $A$ as data, where the convex hull of the columns of $A$, denoted by $conv(A)$, has $K$ vertices, $K \ll n$. We apply the AVTA to solve $2$ classical problems which appear in many applications.

\noindent\textbf{Convex hull membership problem:}

\noindent In the experiments, vertices of the convex hull are generated by the Gaussian distribution, i.e. $v_i \sim \mathcal{N}(0,\mathcal{I}_m), i \in [K]$. Having generated the vertices, the 'redundant' points $d_j$ where $d_j \in conv(S), j\in [n-K]$ are produced using random convex combination $d_j=\sum_{i=1}^{K} \alpha_i v_i$. Here $\alpha_i$ are scaled standard uniform random variable where $\alpha_i$ are scaled so that $\sum_{i=1}^{K} \alpha_i=1$. Specifically, comparison is by fixing $K=100$, $m=50$ and  $n$ varying from $5,000\sim 500,000$.  We compare the efficiency of $4$ algorithms on solving this problem: the Simplex method ~\citet{chvatal1983linear}, the Frank Wolfe Algorithm (FW) ~\citet{jaggi2013revisiting}, the Triangle Algorithm (TA) ~\citet{kalantari2015characterization},  and our algorithm on solving the convex hull membership query problem.

\begin{table}[h]
	\centering
	\caption{Running time of convex hull memberhip (secs)}
	\label{tb:cv_query}
	\scalebox{1}
	{
		\begin{tabular}{|l|l|l|l|l|}
			\hline
			\begin{tabular}[c]{@{}l@{}}\# of \\ redundant pts\end{tabular} & AVTA  & TA & FW & Simplex \\ \hline
			5,000                                                               & 1.75  & 0.21               & 0.52        & 0.9     \\ \hline
			20,000                                                              & 1.49  & 0.66               & 1.94        & 2.76    \\ \hline
			45,000                                                              & 2.94  & 1.84               & 5.51        & 6.16    \\ \hline
			80,000                                                              & 2.71  & 3.22               & 10.87       & 10.63   \\ \hline
			125,000                                                             & 3.83  & 4.28               & 17.67       & 15.95   \\ \hline
			180,000                                                             & 4.15  & 5.38               & 23.14       & 24.13   \\ \hline
			245,000                                                             & 6.95  & 9.56               & 33.42       & 36.96   \\ \hline
			320,000                                                             & 8.09  & 13.24              & 44.99       & 44.26   \\ \hline
			405,000                                                             & 10.01 & 14.75              & 56.35       & 59.5    \\ \hline
			500,000                                                             & 14.12 & 15.69              & 70.7        & 90.41   \\ \hline
		\end{tabular}
	}
\end{table}
\noindent \textbf{Results on Convex hull membership query:} Table ~\ref{tb:cv_query} shows when $n \gg K$, AVTA is more efficient than other algorithms solving the convex hull membership problem. This result supports the output sensitivity property of AVTA. \\

\noindent\textbf{Non-negative linear system:} The non-negative linear system problem is to find a feasible solution of :\\
\begin{equation}
\begin{aligned}
&  A\alpha=p\\
& \alpha\geq0\;\;
\end{aligned}
\end{equation}
In another word, to test if $p \in cone(A_j)$ where $A_j$ are columns of $A$. In case when $A$ is over complete,  any feasible $p$ can be represented using only the generators of $cone(A)$ the set $\bar{A} \subset A$. By scaling $A$ so that columns of $AD\; (D_{ii}= \frac{b}{a\cdot A_i})$ are in a  $m-1$ dimensional hyperlane $\langle a, \alpha\rangle =b$, one can find the generators of $cone(A)$ by finding the vertices of the convex hull of the projected points. This could be done efficiently by AVTA.
Suppose we have a linear system $A$ and series of query points $p$ , it is sufficient to run AVTA once for dimension reduction and solve the subproblem $\bar{A}\alpha'=p, \alpha' \geq0$ using simplex method.
\noindent
We compare the running time of Simplex Method with AVTA+Simplex Method. The generator $\bar{A}$ is entrywise independent $uniform(0,1)$ random matrix and the 'overcomplete' part of the matrix $\bar{A}^c=A/\bar{A}$ are generated by $\bar{A}^c=\bar{A}B$ where $B\in \mathbb{R}^{ K \times (n-K)}$ is entrywise independent $uniform(0,10)$ random matrix. We set the number of generators $K=100$, the dimension $m=50$, and the number of 'redundant' columns $n=50,000$. We simply set half of the query points feasible and rest infeasible.  The  feasible points  $p$ are generated as $p=Ax$ where $x\in \mathbb{R}^{n}$ is entrywise independent $uniform(0,1)$ random vector and the infeasible points are generated in the same way as generators.

\begin{table}[h]
	\centering
	\caption{Running time of linear programming feasibility (secs)}
	\label{tb:nnlsys time}
	\scalebox{1}
	{
		\begin{tabular}{|l|l|l|l|l|l|}
			\hline
			\begin{tabular}[c]{@{}l@{}}\# of\\   query\end{tabular} & AVTA+Simplex & Simplex & \# of query & AVTA+Simplex & Simplex \\ \hline
			1.00                                                    & 241.24       & 152.09  & 11.00       & 241.94       & 1810.72 \\ \hline
			2.00                                                    & 241.36       & 303.86  & 12.00       & 242.01       & 1967.93 \\ \hline
			3.00                                                    & 241.41       & 477.89  & 13.00       & 242.07       & 2125.62 \\ \hline
			4.00                                                    & 241.45       & 660.95  & 14.00       & 242.16       & 2289.91 \\ \hline
			5.00                                                    & 241.54       & 853.91  & 15.00       & 242.23       & 2490.52 \\ \hline
			6.00                                                    & 241.61       & 1016.77 & 16.00       & 242.29       & 2680.61 \\ \hline
			7.00                                                    & 241.69       & 1177.30 & 17.00       & 242.32       & 2866.23 \\ \hline
			8.00                                                    & 241.72       & 1336.38 & 18.00       & 242.41       & 3065.50 \\ \hline
			9.00                                                    & 241.83       & 1495.70 & 19.00       & 242.44       & 3245.78 \\ \hline
			10.00                                                   & 241.91       & 1652.84 & 20.00       & 242.47       & 3412.39 \\ \hline
		\end{tabular}
	}
\end{table}

\begin{figure}
	\centering
	\begin{subfigure}{0.450\textwidth}
		\centering
		\includegraphics[width=0.95\textwidth]{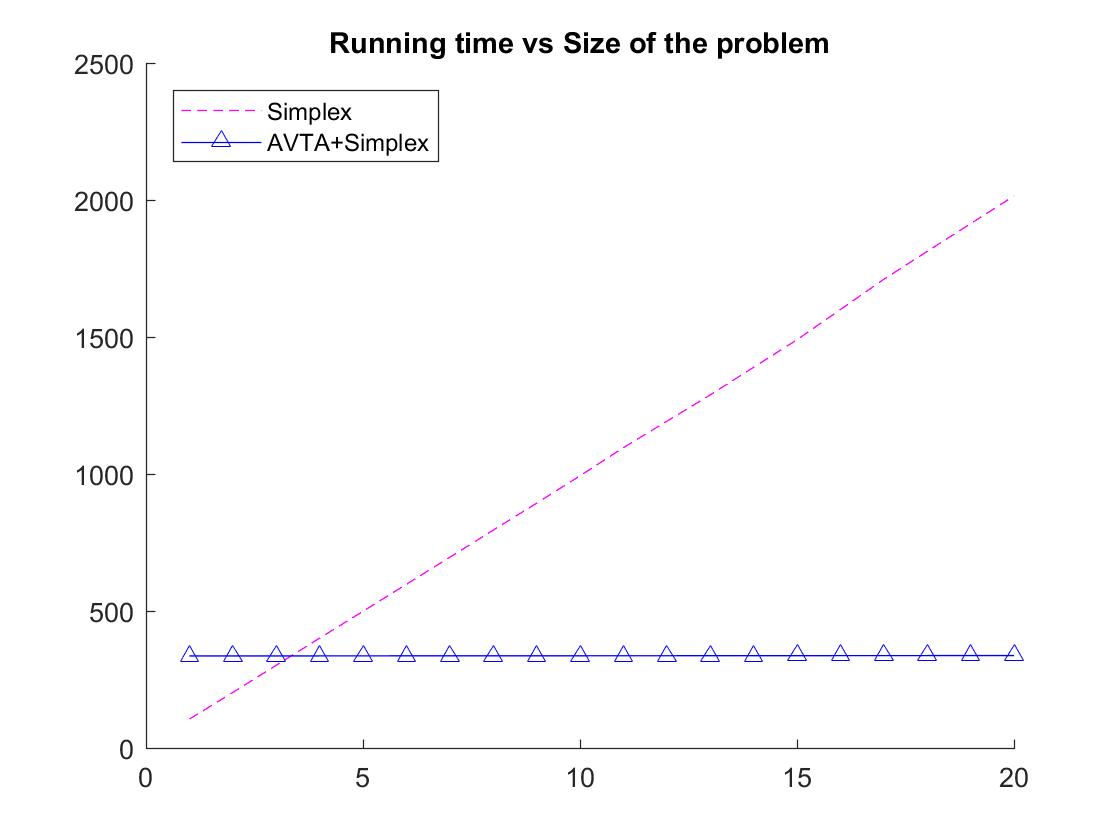}
		\caption{Running time for algorithms to find a feasible solution.}
		\label{fig:nnlsys-time}
	\end{subfigure}
	\begin{subfigure}{0.450\textwidth}
		\centering
		\includegraphics[width=0.95\textwidth]{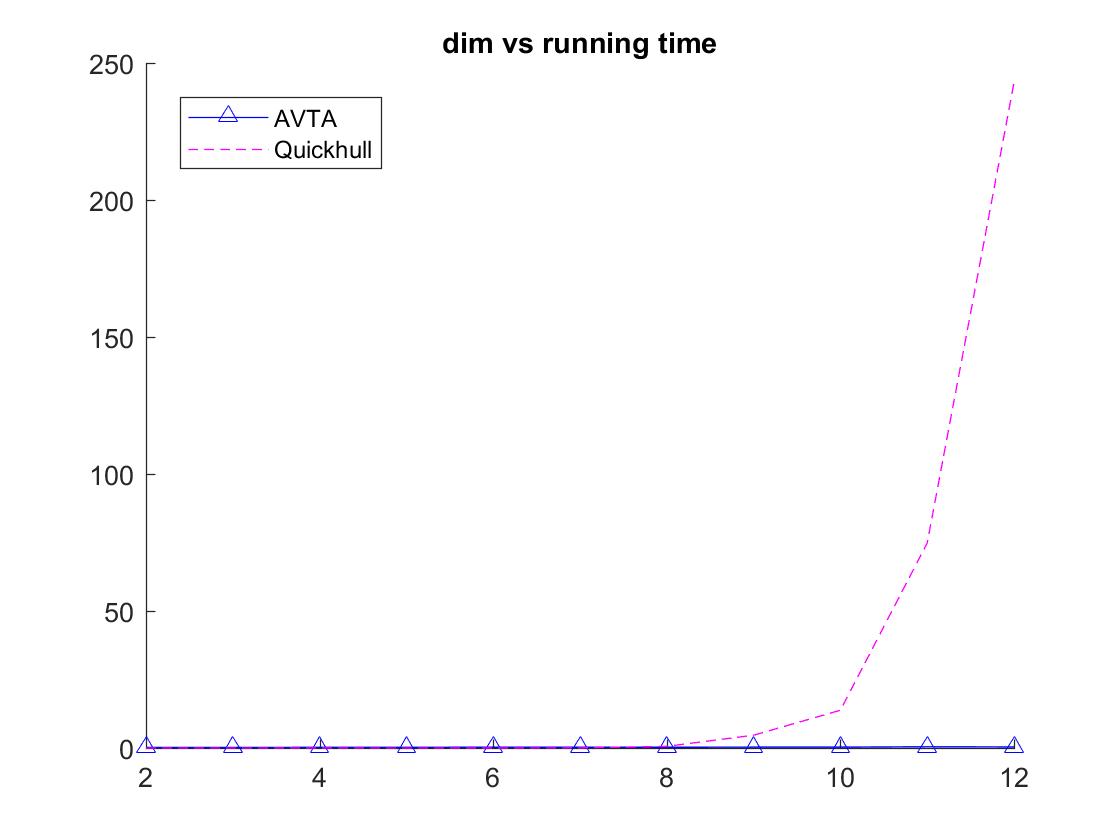}
		\caption{Running time for algorithms to find all vertices.}
		\label{fig1}
	\end{subfigure}
\end{figure}

\noindent It can be observed from Figure~\ref{fig:nnlsys-time} and Table~\ref{tb:nnlsys time} that the running time of AVTA+Simplex doesn't have obvious increase while Simplex increases drastically. This suggests the potential applications of AVTA in linear programming feasibility problem.

\subsection{Computing all vertices}

\noindent\textbf{Compute vertices of convex hull:} In this section, we compare the efficiency of AVTA with another popular algorithm for finding vertices Quickhull ~\citet{barber1996quickhull}. We generate vertices according to a Gaussian distribution $\mathcal{N}(0,10) ^m$. Having generated $K$ such points, $n$ interior points are generated as convex combination of the vertices, where the weights are generated  scaled i.i.d uniform distribution.


\noindent \textbf{Experiment and results:}
In the experiment, we set $K=100$, $n=500$ and $m$ varying from $2 \sim 12$.~\footnote{ The maximum of dimension is $12$ in the experiment because of the explosion of running time of the Quick hull algorithm}.
\noindent
The computational results is shown in Table ~\ref{tb:avtaqhull}.  In high dimension $m \geq 9$, when $conv(S)$ is $\gamma$ robust for some $\gamma >0$, the AVTA algorithm successfully find all vertices of the convex hull efficiently while the Quick hull algorithm is stuck by its explosion of complexity in dimension $m$.

\begin{table}[!h]
	\centering
	\caption{Running time (secs)}
	\label{tb:avtaqhull}
	\scalebox{0.9}
	{
		\begin{tabular}{|l|l|l|l|l|l|}
			\hline
			dim & Qhull & AVTA  & dim & Qhull & AVTA  \\ \hline
			2   & 0.13      & 14.82 & 7   & 2.92      & 41.51 \\ \hline
			3   & 0.02      & 16.62 & 8   & 16.48     & 39.63 \\ \hline
			4   & 0.04      & 24.49 & 9   & 82.09     & 44.21 \\ \hline
			5   & 0.12      & 32.76 & 10  & 391.36    & 45.79 \\ \hline
			6   & 0.59      & 37.66 & 11  & 1479.51   & 51.19 \\ \hline
		\end{tabular}
	}
\end{table}

\noindent\textbf{Compute vertices of simplex in high dimension:} The Fast Anchor Word can be used to detect the vertices of a simplex. In this section, we compare the efficiency of AVTA with Fast Anchor Word when convex hull is a simplex with $K=50$ and  $m=100$. The number of points in the convex hull $n$ varies from $100\sim 100,000 $.

\noindent \textbf{Results of running time in simplex case:} \noindent The running  of  efficiency comparison between AVTA and Fast Anchor Word in simplex case is presented in Figure ~\ref{fig:run_time_avta_faw}. In regime $n\geq 30,000$, AVTA has less running time. \\

\noindent\textbf{Compute vertices with perturbation:}
In this section, we compare the robustness  of AVTA with multiple random projections presented in section~\ref{avtamultiple}  with Fast Anchor Word ~\citet{arora2013practical}. Instead of  actual set of points $S$ as input, the algorithm is given a perturbed set $S_\circ$, i.e. $S$ is corrupted by some noise.
Having fixed $K=100$,$n=500$, $m=100$ , we  choose a Gaussian perturbation from $\mathcal{N}(0,\tau) ^m$ where $\tau$ varies from $0.3$ to $3$.  In case of general convex hull, a failure of Fast Anchor Word on computing vertices of general convex hull is presented. The data is generated by setting $\tau=0.3$, $m=50$, $n=500$ and let $K$ varies from $10 \sim 100$.
We  do an error analysis and evaluate the output of the algorithms by measuring the $l_2$ distance between true vertices and the convex hull of output vertices of the two algorithms.  More precisely, given a true vertex $v_i \in S$ and $\hat{S}$, the output of an algorithm, the error in recovering $v_i$ is defined to be $\min\limits_{u\in conv(\hat{S})} \; ||u-v_i||_2.$ We add up all the errors to get the total accumulated error.

\noindent \textbf{Results on computing perturbed vertices:}

\noindent
The recovery error in robustness comparison is shown in Table~\ref{tb:avta rec}. The AVTA with multiple random projection has a better recovery error in the simplex case.

\noindent It can also be observed from Figure~\ref{fig:synthetic-error-general} that in general case, as number of vertices exceeds the number of dimensions, Fast Anchor Word fails to recover more vertices and its error explodes.  \\

\begin{table}[!h]
	
	\centering
	\caption{Recovery error (Simplex)}
	\label{tb:avta rec}
	\begin{tabular}{|l|l|l|l|l|l|}
		\hline
		Var & AVTA+Multiple Rp & Fast Anhor & variance & AVTA+Multiple Rp & Fast Anhor \\ \hline
		0.3      & 2.96             & 2.96       & 1.8      & 16.60            & 17.98      \\ \hline
		0.6      & 5.79             & 5.79       & 2.1      & 19.40            & 20.58      \\ \hline
		0.9      & 8.61             & 9.36       & 2.4      & 21.93            & 23.77      \\ \hline
		1.2      & 11.34            & 12.00      & 2.7      & 23.69            & 24.90      \\ \hline
		1.5      & 14.16            & 15.44      & 3        & 26.72            & 28.78      \\ \hline
	\end{tabular}
\end{table}

%
%

\begin{figure}[!h]
	\centering
	
	\begin{subfigure}{0.45\textwidth}
		\centering
		\includegraphics[width=0.95\textwidth]{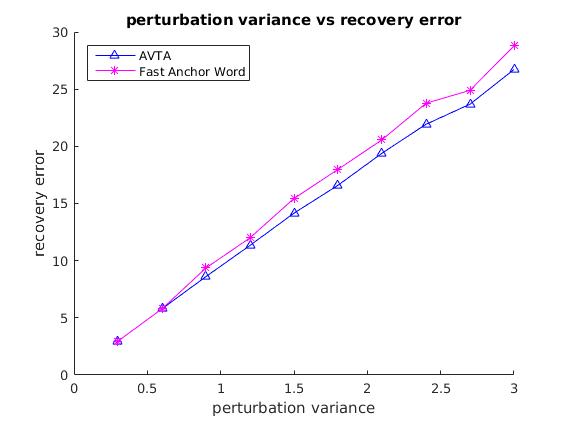}
		\caption{Recovery error-computing vertices(simplex case)}
		\label{fig:avta rec}
	\end{subfigure}%
	\begin{subfigure}{0.45\textwidth}
		\centering
		\includegraphics[width=0.95\textwidth]{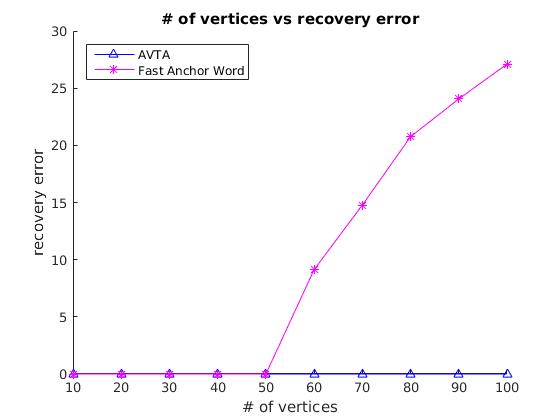}
		\caption{Recovery error-computing vertices(eneral convex hull)}
		\label{fig:synthetic-error-general}
	\end{subfigure}%
	
	\begin{subfigure}{0.45\textwidth}
		\centering
		\includegraphics[width=0.95\textwidth]{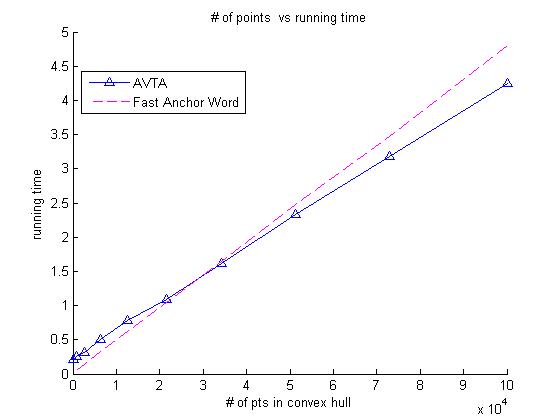}
		\caption{Running time (secs) of computing vertices of simplex}
		\label{fig:run_time_avta_faw}
	\end{subfigure}%
	\begin{subfigure}{0.45\textwidth}
		\centering
		\includegraphics[width=0.95\textwidth]{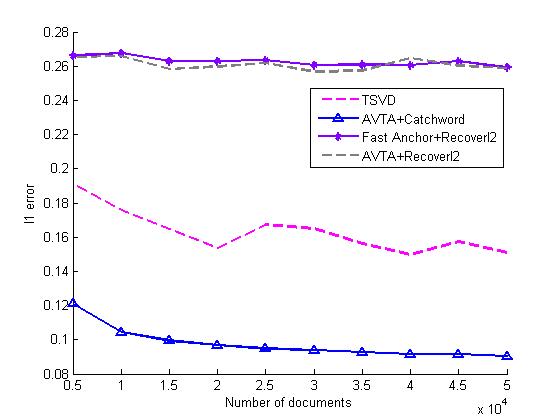}
		\caption{$\ell_1$ error in the semi-synthetic dataset.}
		\label{fig:semi-synthetic}
	\end{subfigure}%
	
	\begin{subfigure}{0.45\textwidth}
		\centering
		\includegraphics[width=0.95\textwidth]{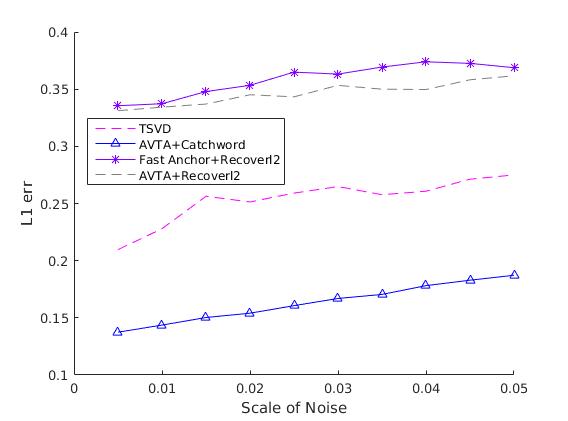}
		\caption{$\ell_1$ error in the perturbed semi-synthetic dataset.}
		\label{fig:semi-synthetic-noise}
	\end{subfigure}%
	\begin{subfigure}{0.45\textwidth}
		\centering
		\includegraphics[width=0.95\textwidth]{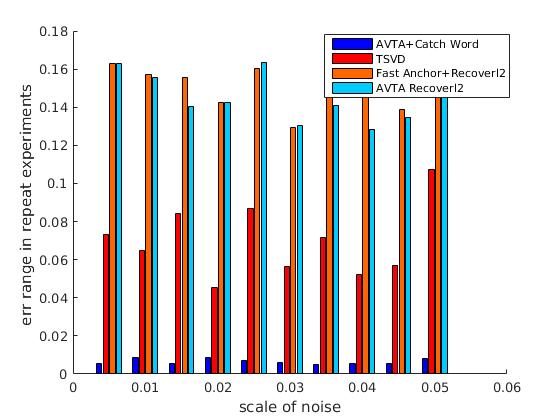}
		\caption{Range of the $\ell_1$ error over $10$ runs on the noisy semi-synthetic dataset.}
		\label{fig:err_range}
	\end{subfigure}%
	\caption{}
\end{figure}

\subsection{Topic modeling}

\noindent
We compare our algorithms with the Fast Anchor + Recoverl2 algorithm of~\citet{arora2013practical} and the TSVD algorithm of~\citet{bansal2014provable} on two types of data sets: semi-synthetic data and real world data.
We next describe our methodology and empirical results in detail.

\noindent \textbf{Semi Synthetic Data:} For Semi-Synthetic data set, we use similar methodology as in~\citet{arora2013practical}. We first train the model on real data set using Gibbs sampling with $1,000$ iterations. We choose $50$ as the number of topics which follows ~\citet{bansal2014provable}.  Given the parameters learned from dataset, we generate documents with $\alpha$ set to be $0.01$. The average document length is $1,000$.  Then the reconstruction error is measured by the $l_1$ distance of bipartite matched pairs between  the true word-topic distribution and the  word-topic distribution~\citet{arora2013practical}. We then average the errors to compute the final mean error.

\noindent \textbf{Real Data:} We use the \textbf{NIPS} data set with 1500 documents , and a pruned vocabulary of 2K words, and the NYTimes Corpus with sub sampled $30,000$ documents, and a pruned vocabulary of 5k words.~\footnote{\url{https://archive.ics.uci.edu/ml/datasets/bag+of+words}}. For the real world data set, as in prior works~\citet{arora2013practical, bansal2014provable}, we evaluate the coherence to measure topic quality~\citet{yao2009efficient}. Given a set of words $\mathcal{W}$ associated with a learned topic, the coherence is computed as:
$Coherence(\mathcal{W})= \sum_{w_1,w_2 \in \mathcal{W}}  \log\frac{D(w_1,w_2)+\epsilon}{D(w_2)}$,
where $D(w_1)$ and $D(w_1,w_2)$ are the number of documents where $w_1$ appears and $(w_1,w_2)$ appear together respectively~\citet{arora2013practical}, and $\varepsilon$ is set to $0.01$ to avoid $w_1,w_2$ that never co-occur~\citet{stevens2012exploring}. The total coherence is the sum of the coherence of each topic. In the NIPS dataset, $1,000$ out of the $1,500$ documents were selected as the training set to learn the word-topic distributions. The rest of the documents were used as the testing set.

\noindent \textbf{Implementation Details:} We compare 4 algorithms, {AVTA+CatchWord}, {TSVD}, the {Fast Anchor + Recoverl2} and the {AVTA+Recoverl2}. We implement our own version of Fast Anchor + Recoverl2 as described in~\citet{arora2013practical}. TSVD is implemented using the code provided by the authors in~\citet{bansal2014provable}. AVTA+Recoverl2 corresponds to using AVTA to detect anchor words from the word-word covariance matrix and then using the Recoverl2 procedure from~\citet{arora2013practical} to get the topic-word matrix. AVTA + CatchWord corresponds to finding the low dimensional embedding of each document in terms of the coefficient vector of its representation in the convex hull of the vertices. The next step is to cluster these points. In practice, one could use the Lloyd's algorithm for this step which could be sensitive to initialization.   To remedy this, we use similar heuristic as ~\citet{bansal2014provable} of the initialization step. We repeat AVTA for $3$ times and pick the set of vertices with highest quality where the quality is measured by sum of distances of each vertex to convex hull of other vertices. We set the number of output vertices $K=50$ which is the same as the number of topics. i.e. each vertex corresponds to a topic. We found that initializing by simply assigning clusters using neighborhoods of highest degree vertices works effectively. As a final step, we use the post processing step from~\citet{bansal2014provable} to recover the topic-word matrix from the clustering.

\noindent \textbf{Robustness:} We also generate perturbed version of the semi synthetic data. We generate a random matrix with i.i.d. entries uniformly distributed with different scales varying from $0.005-0.05$. We test all the algorithms with the document-word matrix added with the noise matrix.

\noindent \textbf{Results on Semi Synthetic Data:}
Figures~\ref{fig:semi-synthetic} and \ref{fig:semi-synthetic-noise} show the $\ell_1$ reconstruction of all the four algorithms under both clean and noisy versions of the semi synthetic data set. For topic $i$, let $A_i$ be the ground truth topic vector and $\hat{A}_i$ be the topic vector recovered by the algorithm. Then the $\ell_1$ error is defined as $\frac 1 K \sum_{i=1}^K \|A_i - \hat{A}_i\|_1$. The plots show that AVTA+CatchWord is consistently better than both TSVD and Fast Anchor + Recoverl2 and produces significantly more accurate topic vectors. In order to further test the robustness of our approach, we plot in Figure~\ref{fig:err_range} the range of the $\ell_1$ error obtained across multiple runs of the algorithms on the same data set. The range is defined to be the difference between the maximum and the minimum error recovered by the algorithm across different runs. We see that AVTA+CatchWord produces solutions that are much more stable to the effect of the noise as compared to other algorithms. Table~\ref{tb:runtime-4} shows the running time of the experiments of 4 algorithms. As can be seen, when using AVTA to learn topic models via the anchor words approach, our algorithm has comparable run time to Fast Anchor + Recoverl2.  In CatchWord based learning, computing vertices is expensive compared to K-SVD step of TSVD thus AVTA
has longer running time.


\begin{table}[h]
	\centering
	\caption{Running time  of algorithms on semi synthetic data (secs)}
	\label{tb:runtime-4}
	\scalebox{0.9}{
		\begin{tabular}{|l|l|l|l|l|}
			\hline
			\begin{tabular}[c]{@{}l@{}}Num of\\   documents\end{tabular} & 5,000 & 15,000 & 30,000 & 50,000 \\ \hline
			Fast anchor+Recoverl2                                        & 5.49    & 6.00     & 10.30    & 13.60    \\ \hline
			AVTA+Recoverl2                                               & 7.82    & 7.68     & 12.84    & 16.40    \\ \hline
			TSVD                                                         & 17.02   & 43.27    & 81.24    & 112.80   \\ \hline
			AVTA+Catch Word                                              & 29.89   & 120.04   & 372.17   & 864.30   \\ \hline
		\end{tabular}
	}
\end{table}

\noindent \textbf{Results on Real Data:}
Table ~\ref{tb:real}  shows the topic coherence obtained by the algorithms. One can see that in both the approaches, either via anchor words or the clustering approach, AVTA based algorithms perform comparably to state of the art methods ~\footnote{The topic coherence results for TSVD do not match the ones presented in~\citet{bansal2014provable} since in their experiments, the authors look at top 10 most frequent words in each topic. In our experiments we compute coherence for the top 5 most frequent words in each topic.}. The running time is presented in Table ~\ref{tb:running time}.  The AVTA+CathchWord has less running time in the real data experiments. Per our observation, the convex hull of  word-document vectors in real data set has more vertices than $K$, the number of topics. The AVTA catches $K$ vertices efficiently due to its small number of iterations on line search for $\gamma$. In semi-synthetic data set, the number of 'robust' vertices is approximately the same as number of topics $K$ thus AVTA needs to find almost all vertices. To catch enough vertices, AVTA needs several iterations decreasing $\gamma$ which is computationally expensive.

\subsection{Non-negative matrix factorization}
\noindent \textbf{AVTA for NMF:}
For our experiments on NMF we use the Swimmer data set~\citet{donoho2003does} that consists of $256$ swimmer figures with each a $32 \times 32$ binary pixel images. One can interpret each image as a document and pixels as a word in the document~\citet{ding2013topic}. All swimmers consist of $4$ limbs with each limb having $4$ different possible poses. One can then consider the different poses of limbs as the true underlying topics~\citet{donoho2003does}. We compare the algorithm proposed in~\citet{arora2012computing} with {AVTA+Recoverl2} on the swimmer data set. We  construct a noisy version by adding spurious poses to original swimmer data set. Let $\Omega(A)$ be a function that outputs a randomly chosen $32 \times 8$ block of an image. We generate a  'spurious pose' of size  $32\times 8$ by $\Omega(M_i)$ where $M_i$ is a randomly chosen swimmer image. Then we take another randomly chosen image $M_j$ and compute the corrupted image as $M'_j= M_j + c \cdot \Omega(M_i)$ where we simply set $c=0.1$. An illustration of the noise data set is shown in Figure~\ref{fig:Swimmer}. Since the true underlying topics are known, we will plot the output of the algorithms and compare it with the underlying truth.

\noindent \textbf{Results on NMF:}  We compare the performance of AVTA on these data sets with the performance of the Separable NMF algorithm proposed in~\citet{arora2012computing}. Figures~\ref{fig:NMF swim} and ~\ref{fig:AVTA swim} show the output of the Separable NMF algorithm and that of our algorithm respectively on the noisy data set. Our approach produces competitive results as compared to the Separable NMF algorithm.

\begin{table}[h]
	\centering
	\caption{Topic coherence on real data}
	\label{tb:real}
	\scalebox{0.7}{
		\begin{tabular}{|l|l|l|l|l|}
			\hline
			& \tiny {Fast Anchor+RecoverL2} & \tiny { AVTA+RecoverL2} & TSVD     &  \tiny {AVTA+Catch Word} \\ \hline
			NIPS    & -15.8 $\pm 2.24$  & -16.04 $\pm 2.09$         & -16.86 $\pm 1.66$   & -18.65  $\pm 1.78$         \\ \hline
			NYTimes  & -32.15  $\pm 2.7$                & -32.13    $\pm 2.43$     & -29.39 $\pm 1.43$ & -30.13     $\pm 1.98$       \\ \hline
		\end{tabular}
	}
\end{table}

\begin{table}[!h]
	\centering
	\caption{Running time on real data experiments (secs)}
	\label{tb:running time}
	\scalebox{0.7}{
		\begin{tabular}{|l|l|l|l|l|}
			\hline
			&  \tiny {Fast Anchor+RecoverL2} &  \tiny {AVTA+RecoverL2} & TSVD  &  \tiny {AVTA+Catch Word} \\ \hline
		
			NIPS    & 3.22                  & 4.41           & 56.58 & 22.78           \\ \hline
			NYTimes & 26.05                 & 27.79          & 237.6 & 101.07          \\ \hline
		\end{tabular}
	}
\end{table}

\begin{figure}[!h]
	\centering
	\scalebox{0.55}{
		\begin{subfigure}{0.65\textwidth}
			\centering
			\includegraphics[width=1.1\textwidth]{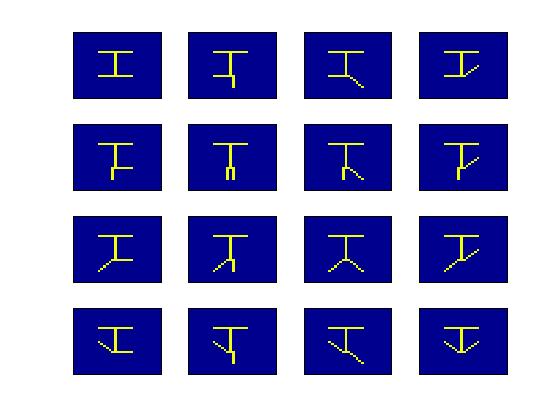}
			\caption{An example of  swimmer images.}
			\label{fig:Swimmer}
		\end{subfigure}
		\begin{subfigure}{0.65\textwidth}
			\centering
			\includegraphics[width=1.1\textwidth]{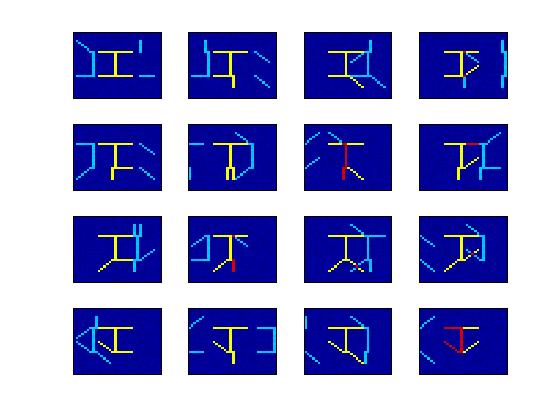}
			\caption{An example of spurious actions in swimmer images.}
			\label{fig:Swimmer}
		\end{subfigure}
	}
	\scalebox{0.55}{
		\begin{subfigure}{0.65\textwidth}
			\centering
			\includegraphics[width=1.1\textwidth]{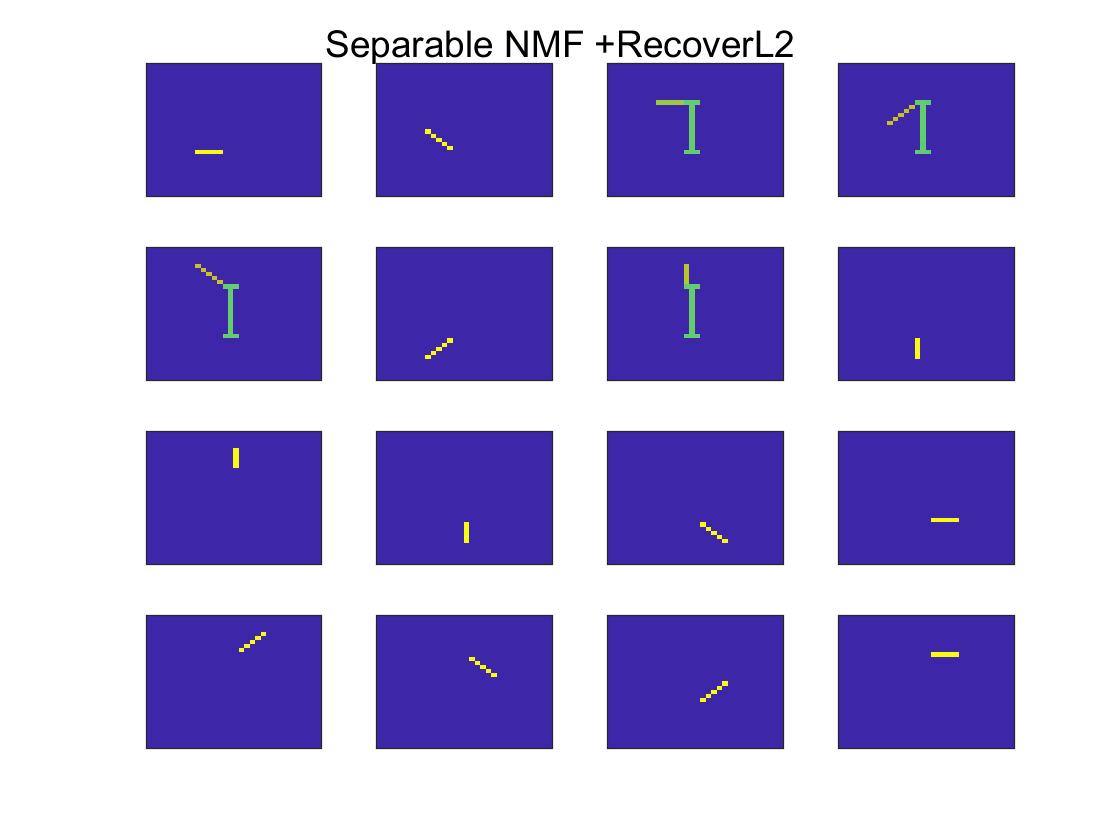}
			\caption{Output of {NMF +RecoverL2}}
			\label{fig:NMF swim}
		\end{subfigure}%
		\begin{subfigure}{0.65\textwidth}
			\centering
			\includegraphics[width=1.1\textwidth]{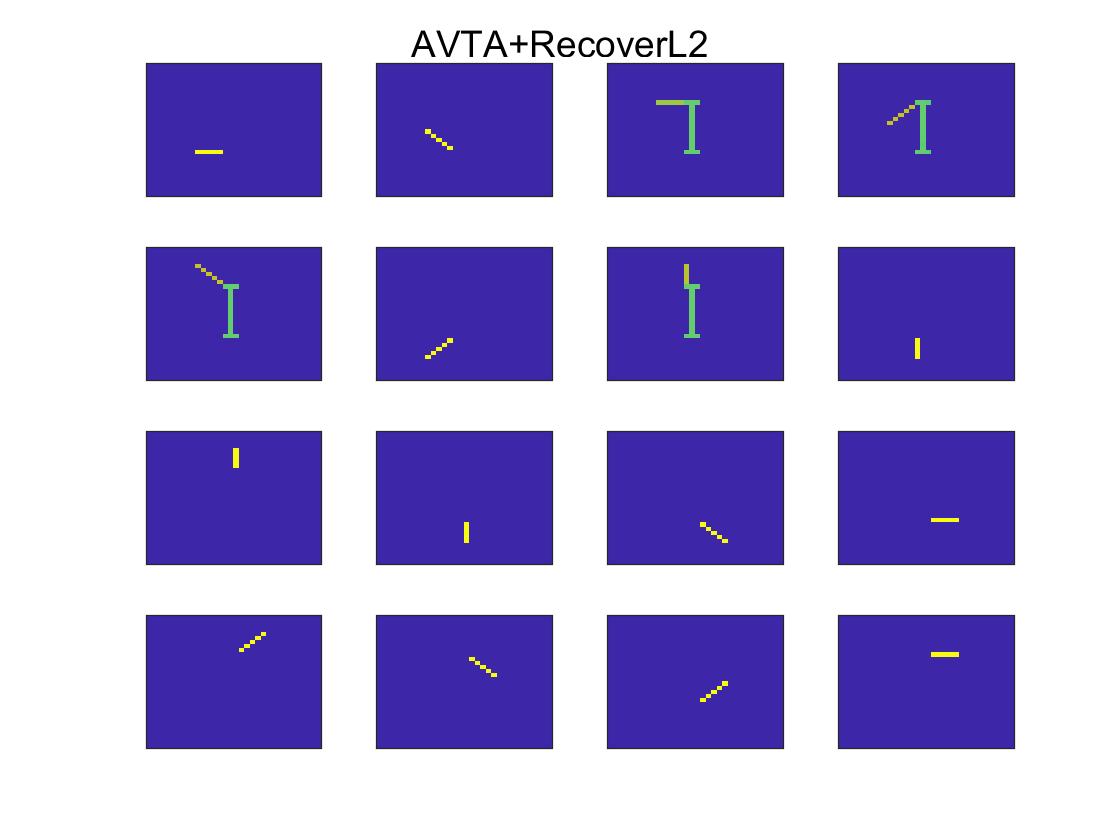}
			\caption{Output of {AVTA +RecoverL2}}
			\label{fig:AVTA swim}
		\end{subfigure}
	}
	\caption{}
	
\end{figure}

\section{Conclusion}
In this work we have presented a fast and robust algorithm for computing the vertices of the convex hull of
a set of points. Our algorithm efficiently computes the vertices of convex hulls in high dimensions and
even in the special case of the simplex is competitive with the state of the art approaches in terms of running time~\citet{arora2013practical}. Furthermore, our algorithm leads to an improved algorithm for topic modeling that is more robust and produces better approximations to the topic-word matrix. It will be interesting to provide theoretical claims supporting this observation in the context of specific applications. Furthermore, we believe that our algorithm will have more applications in machine learning problems beyond the ones
investigated here as well as applications in computational geometry and in linear programming.

\nocite{*}
\bibliographystyle{plainnat}
\bibliography{nips_2017}

\end{document}